\documentclass[sigplan, nonacm]{acmart}

\usepackage{amsmath}
\usepackage{algorithm}
\usepackage{booktabs}
\usepackage{float}
\usepackage{multirow}
\usepackage{xcolor}
\usepackage{algorithmic}
\usepackage{balance}
\usepackage{subcaption}
\usepackage{xspace}
\usepackage{enumitem}
\usepackage{graphicx}
\graphicspath{{figures/}{Figures/}}
\newcommand{\figorbox}[2][width=\linewidth]{%
  \IfFileExists{#2}{\includegraphics[#1]{#2}}{%
    \fbox{\parbox{0.6\linewidth}{\centering\vspace{2em}\textit{[figure not included: \texttt{\detokenize{#2}}]}\vspace{2em}}}}}
\usepackage{tikz}
\usetikzlibrary{arrows.meta,positioning,fit,backgrounds,calc,decorations.pathreplacing,shapes.geometric}

\hypersetup{
  colorlinks  = true,
  linkcolor   = blue!70!black,
  citecolor   = blue!70!black,
  urlcolor    = blue!70!black,
}

\begin{document}
\title{
WISER: Systematic Design-Space Exploration of Trapped Ions with Multiplexed Control
}

\author{Scott Jones}
\affiliation{
  \department{Department of Computer Science and Technology}
  \institution{University of Cambridge}
  \city{Cambridge}
  \country{United Kingdom}
}
\email{sj665@cam.ac.uk}

\author{Song-qing-hao Yang}
\affiliation{
  \department{Department of Physics}
  \institution{University of Cambridge}
  \city{Cambridge}
  \country{United Kingdom}
}
\email{sqhy2@cam.ac.uk}

\author{Prakash Murali}
\affiliation{
  \department{Department of Computer Science and Technology}
  \institution{University of Cambridge}
  \city{Cambridge}
  \country{United Kingdom}
}
\email{pm830@cam.ac.uk}

\begin{abstract}
Trapped-ion Quantum Charge-Coupled Devices (QCCD) are a leading contender for quantum computing, but their scalability is constrained by control wiring and electronics. Wiring using Integrated Switching Electronics (WISE), a recently proposed QCCD architecture, reduces wiring through multiplexing and integrated switching hardware. However, this leaves an execution model with limited parallelism and limited flexibility in ion movement. Further, these devices need to be paired with a quantum error correction (QEC) code to enable fault-tolerant quantum computation (FTQC). Can WISE architectures efficiently support FTQC requirements? Which QEC choices, device and control parameters are practical?

We present WISER, a cross-layer design-space exploration framework for trapped-ion systems with multiplexed control. WISER combines a WISE-specific SAT-based compiler, a physics-informed noise model and logical-memory simulation to estimate logical error rates, logical clock speeds and control power across hardware parameters and QEC families. Its compiler reduces routing time by $2.6$--$18.8\times$ relative to a greedy WISE-compatible baseline.

Our analysis provides concrete design guidance. Two-ion traps with $16$-way multiplexing, $8\times$ lower than the original WISE proposal, give the fastest logical clock that meets our reliability and cold-stage power targets. With current hardware parameters, the distance-7 surface code is the only evaluated code to meet our early-FTQC screen, at $4.58\,$Hz and $2.04\,$W of DAC power per logical qubit. At this operating point, ion transport and recooling take $94$--$96\%$ of the WISE cycle, and even without them sample-and-hold electrode charging leaves millisecond-scale syndrome-extraction rounds.
\end{abstract}
\maketitle

\section{Introduction}

Quantum computing has the potential to transform fields such as cryptography, chemistry, materials science, and optimisation by solving computational problems beyond the practical reach of classical computers~\cite{Shor_1997,grover1996fastquantummechanicalalgorithm,Feynman1982}, e.g.\ RSA factoring~\cite{gidney2025factor2048bitrsa} and quantum chemistry~\cite{Babbush2018,Lee2021}. However, practical quantum advantage is a formidable challenge: useful computations require logical error rates that are many orders of magnitude below those of physical operations~\cite{beverland2022assessingrequirementsscalepractical}. To bridge this gap, we require quantum error correction (QEC). Multiple platforms are racing towards this goal with recent demonstrations of QEC on superconducting~\cite{suppressing_quantum_errors_by_scaling_surface_code}, trapped-ion~\cite{microsoft_quantinuum_paper,small_code_colour_realtime,berthusen2024experiments4dsurfacecode,ft_one_bit_addition_small} and neutral atom qubits~\cite{bluvstein2025fault,reichardt2025faulttolerantquantumcomputationneutral}. QEC comes with high qubit and performance overheads. For example, with surface codes, one logical qubit requires encoding its information across hundreds of physical qubits. For representing full applications, recent estimates indicate that millions of qubits are needed~\cite{O_Gorman_2017,beverland2022assessingrequirementsscalepractical,gidney2025factor2048bitrsa}. Improving these resource requirements through efficient, platform-specific QEC architecture design is important~\cite{GoogleCrypto2026, cain2026shorsalgorithmpossible10000, bhardwaj2026highrateqldpcprocessors, IonQWalkingCat, yoder2025tourgrossmodularquantum}.

Trapped-ion quantum computers store qubits in two long-lived states of individual ions~\cite{CiracZoller1995,Wineland_1998,Wineland_2003}. Ions are confined in a region called a \emph{trap}~\cite{leibfried2003quantum} inside which their shared motion acts as a medium for two-qubit gates~\cite{CiracZoller1995}. Trapped-ion qubits have long coherence times, reaching even multiple hours under carefully controlled conditions~\cite{Wineland_2003,Langer_2005,Wang_2017,Wang2021Coherence}, making them a strong candidate for quantum computation. Using a single trap to hold all qubits~\cite{Lin_2009,Landsman_2019,Leung_2018} becomes difficult beyond $\sim50$ qubits due to qubit addressing, crosstalk, and gate serialisation challenges~\cite{murali2020architectingnoisyintermediatescaletrapped}.

To overcome the limitations of a single trap, the Quantum Charge-Coupled Device (QCCD) architecture was proposed~\cite{Kielpinski_2002, Pino2021}. QCCD breaks the system into a collection of traps. Within each trap, single- and two-qubit gates can be performed. To perform two-qubit operations on qubits in different traps, they need to be co-located using shuttling (physical ion movement).
This paradigm~\cite{Kielpinski_2002, Lekitsch_2017} now underpins the most advanced trapped-ion systems~\cite{ Pino2021, racetrack_Moses2023, helios2025, IonQTempo2025} and has already scaled to 98 physical qubits~\cite{helios2025}. The challenge now is therefore no longer demonstrating quantum computation, but designing systems with several thousand qubits and FTQC capabilities.

Despite their advantages over single-trap systems, QCCD systems also have scaling limits. Ion transport in QCCD requires coordinated manipulation of $\sim 10{-}100$ electrodes per ion. This is implemented using a large number of digital-to-analogue converters (DAC) that individually control each channel~\cite{Malinowski2023, ohira2026cryogenictimedivisionmultiplexedvoltagecontrol}, leading to I/O wiring and power constraints that limit these QCCD to a few hundred qubits~\cite{Malinowski2023}. To overcome this, Malinowski \emph{et al.} proposed the Wiring using Integrated Switching Electronics (WISE) architecture. %
WISE shares waveform generation and distribution electronics across qubits and enables scaling to $10,000$ qubits. It is being actively pursued by IonQ/Oxford Ionics~\cite{ionq, OxfordIonics}. Other DAC-sharing proposals~\cite{ohira2026cryogenictimedivisionmultiplexedvoltagecontrol, Anmasser:2026kpn} also exist; we focus on WISE because it is the most aggressive in terms of wiring reduction and scaling.

Recent work reports that WISE leads to infeasible reductions ($\sim 25\times$) in logical clock speed compared with QCCD systems with no DAC-sharing~\cite{Jones_2026}. However, their work uses a single QEC choice (surface codes), compilation that is not natively optimised for WISE control and only coarse device models.
The choice of compilation, QEC, and device and control parameters can substantially influence architectural comparisons. Our work aims to develop a systematic framework for reasoning about the capabilities and limits of scalable trapped-ion systems and how they compare with existing QCCD architectures.

The first challenge in architecture co-design is the lack of hardware-specific noise models which are essential to understand logical qubit characteristics.
Prior works
ignore detailed trapped-ion system physics~\cite{Malinowski2023, transversality_lattice_Guti_rrez_2019,murali2020architectingnoisyintermediatescaletrapped, Jones_2026, IonQWalkingCat}, including
error channels such as the electric field fluctuations~\cite{brownnutt2015ion, talukdar2016implications}, heating and cooling~\cite{ballance2016high, turchette2000heating}, charge fluctuation from shared control lines~\cite{Malinowski2023, Cheng2024, Anmasser:2026kpn}, and shuttling errors.
Second, existing trapped-ion compilers~\cite{MOVELESS_khan2025movelessminimizingoverheadqccds, muzzletheshuttle,Jones_2026, murali2020architectingnoisyintermediatescaletrapped, Sivarajah_2020} are designed primarily for single-trap or QCCD architectures with no DAC-sharing. They ignore
WISE-specific properties, such as global odd-even control of ion movement and limited gate throughput. Third, there are a variety of QEC schemes that can be used to implement logical qubits, but prior works focus on surface codes~\cite{Jones_2026} or physical implementation of small codes~\cite{racetrack_Moses2023,valentini2024demonstrationtwodimensionalconnectivityscalable, helios2025}, providing little insight into the design space of QEC possibilities and their mappings onto devices.

We ask, \textit{
``What are the right hardware parameters (trap capacity, multiplexing factors, etc.) to target for globally controlled architectures such as WISE? Which code families are feasible under WISE constraints, and what factors ultimately limit logical qubit quality? How well can an architecture-specific compiler support QEC-to-device mappings? Overall, what is its scaling outlook for WISE?''
}

To answer these questions, we present WISER in Figure~\ref{fig:full-pipeline}. WISER sits between two styles of trapped-ion research: theory-driven architecture papers~\cite{cain2026shorsalgorithmpossible10000, bhardwaj2026highrateqldpcprocessors, IonQWalkingCat, yoder2025tourgrossmodularquantum} typically fix a small number of QEC codes and adopt simplified hardware abstractions, while experimental papers characterise individual device components without committing to a code or system-level workload~\cite{Malinowski2023, ohira2026cryogenictimedivisionmultiplexedvoltagecontrol, Anmasser:2026kpn}, leaving a gap between component characterisation and system-level design. WISER connects these two levels of the stack through three tool contributions: (1) physics-informed noise models tailored to trapped-ion QCCD architectures with multiplexed control, (2) a WISE-specific compiler, with SAT-based qubit mapping and routing whose schedules are provably optimal within its encoded routing model when the search converges and carry a proven lower bound otherwise, and (3) a configurable framework that enables sweeps over detailed hardware noise parameters, several QEC families and architectural design points. Together, this enabled our extensive design space exploration, leading to the following contributions:

\begin{enumerate}

\item While the original WISE proposal assumed $M=128$-way multiplexing~\cite{Malinowski2023} of control wires, we show that a much lower order is preferable. Across the evaluated orders, $M=16$ is the only one at which the fastest code meets every early-FTQC requirement: it keeps $98.5\%$ of the unmultiplexed logical clock speed at $6.9\%$ of the DAC power (\S\ref{sec:result2}). The same choice suits other QCCD systems with sample-and-hold multiplexed control: with transport and cooling removed, $M=16$ keeps a syndrome-extraction round within $1.5$--$2.0\times$ of the unmultiplexed round, against $5$--$10\times$ at $M=128$ (\S\ref{sec:result5_multiplexing_limits}).

\item Among the evaluated trap capacities $k\in\{2,4,6,8,16\}$, two-ion traps give the best logical clock speed and power for early FTQC (\S\ref{sec:result2}). This matches recent QCCD trap-capacity results~\cite{Jones_2026} even though the underlying control system is drastically different. Further, laser cooling before every gate block is the preferred cooling policy in $18$ of the $20$ feasible combinations of technology tier, error target and objective, consistent with current experimental practice~\cite{racetrack_Moses2023,helios2025}.

\item With models of current hardware, surface codes are the only family to meet the requirements for early FTQC: the distance-7 surface code passes at $4.58\,$Hz, using $2.04\,\mathrm{W}$ per logical qubit at $M=16$ (\S\ref{sec:result2}). Surface codes offer the fastest executions, while the high encoding rate of BB codes gives the lowest power per logical qubit (\S\ref{sec:result2}).

\item Prior works provide only architectural parameter guidance~\cite{murali2020architectingnoisyintermediatescaletrapped}. Towards FTQC, we provide design targets across several low-level physical parameters (\S\ref{sec:result3_dominant}): improving state preparation and measurement from $10^{-4}$ to $10^{-5}$ alone removes $70\%$ of the expected physical faults of a syndrome round, and ion transport and recooling, which take $94$--$96\%$ of the cycle, must be shortened together because neither alone gives more than a $2.8\times$ speed-up (\S\ref{sec:result_isolating_wise_slowdown}).

\item Beyond WISE, even with all transport and cooling removed, multiplexed control with non-overlapped charging lengthens a surface code syndrome-extraction round from $1.3\,$ms to $6.7\,$ms when going from $M=1$ to $M=128$ on current hardware. At these round times, recovering a P-256 elliptic-curve private key within 48~hours with the same number of logical rounds as a recent neutral-atom estimate~\cite{cain2026shorsalgorithmpossible10000} requires a $260\times$ reduction in logical circuit depth, rising to $4{,}100\times$ once WISE's transport and recooling are included (\S\ref{sec:result5_multiplexing_limits}).

\end{enumerate}
\begin{figure}[t]
\centering
\resizebox{0.75\columnwidth}{!}{
\begin{tikzpicture}[
  >=Stealth,
  every node/.style={font=\scriptsize},
  box/.style={draw, rounded corners=1.5pt, inner sep=2.5pt,
              align=center, fill=white, minimum width=22mm},
  inputbox/.style={box, fill=gray!10, minimum width=20mm},
  stagebox/.style={box, fill=blue!5},
  outbox/.style={box, fill=green!7},
  arr/.style={->, thick},
  node distance=3.5mm and 5mm,
]
\node[inputbox] (qec)
  {\textbf{QEC Code Spec.}\\ {\tiny(\S\ref{sec:evaluation})}};
\node[inputbox, right=6mm of qec, minimum width=34mm] (config)
  {\textbf{WISE Architecture Config} {\tiny(\S\ref{sec:arch-params})}\\
   {\tiny Chip params: $N,\;k$
         \;\textcolor{red!70}{\emph{(recompile)}}}\\
   {\tiny Control params: $M,\;\bar{n}_{\text{th}}$, tier
         \;\textcolor{blue!70}{\emph{(resim only)}}}};

\node[stagebox, below=3.5mm of qec] (stim)
  {Ideal Stim Circuit\\ {\tiny(\S\ref{sec:gate-decomposition})}};
\draw[arr] (qec) -- (stim);

\node[stagebox, below=3mm of stim] (decomp)
  {Gate Decomposition\\ {\tiny(\S\ref{sec:gate-decomposition})}};
\draw[arr] (stim) -- (decomp);

\node[stagebox, below=3mm of decomp] (qmr)
  {$\star$ SAT-Based Compiler\\ {\tiny(\S\ref{sec:sat-routing})}};
\draw[arr] (decomp) -- (qmr);

\node[stagebox, below=12mm of config] (noise)
  {$\star$ Noise Model\\ {\tiny(\S\ref{sec:noise-model})}\\
   {\tiny Phonon tracker + noise model}};

\draw[arr, red!70] (config.west) -- ++(-1.5mm,0) |- (qmr.east);
\draw[arr, blue!70] (config) -- (noise);

\node[stagebox, below=8mm of qmr, xshift=14mm] (sched)
  {$\star$ Physics-Aware Scheduler {\tiny(\S\ref{sec:scheduling})}\\
   {\tiny error rates, idle schedule, cooling, timing}};
\draw[arr] (qmr.south) -- ++(0,-2.5mm) -| ([xshift=-6mm]sched.north);
\draw[arr] (noise.south) -- ++(0,-2.5mm) -| ([xshift=6mm]sched.north);

\node[outbox, below=3.5mm of sched] (sim)
  {\textbf{Stim-based Simulation}
   {\tiny(\S\ref{sec:noise-injection-detail})}\\
   {\tiny noise injection $+$ sampling $+$ decoding $\to\;p_L$}};
\draw[arr] (sched) -- (sim);
\draw[arr] (stim.west) -- ++(-2mm,0) |- (sim.west);

\begin{scope}[on background layer]
  \node[draw=red!60, dashed, rounded corners=2pt, inner sep=2.5pt,
        fit=(decomp)(qmr)] (recompbox) {};
  \node[font=\tiny\itshape, text=red!70, anchor=south east,
        fill=white, inner sep=0.5pt]
    at ([xshift=-0.5mm, yshift=0.5mm]recompbox.south east)
    {recompile (\S\ref{sec:sat-routing})};
\end{scope}

\begin{scope}[on background layer]
  \node[draw=blue!60, densely dotted, rounded corners=2pt, inner sep=2.5pt,
        fit=(noise)(sched)(sim)] (resimbox) {};
  \node[font=\tiny\itshape, text=blue!70, anchor=north east,
        fill=white, inner sep=0.5pt]
    at ([xshift=-0.5mm, yshift=-0.5mm]resimbox.north east)
    {resim (\S\ref{sec:noise-injection-detail})};
\end{scope}
\end{tikzpicture}
}
\caption{WISER takes a QEC code and an architectural configuration as inputs. As output it produces estimates of the logical error rate and logical cycle time of the system. The QEC specification is used to create a noiseless Stim circuit~\cite{stim_Gidney2021}, then lowered to WISE's native gate set and then mapped and scheduled onto the device. Our noise model supplies the error channels for a noise simulation with Stim, producing the logical error rate $p_L$; the schedule supplies the logical cycle time $T_\mathrm{logical}$. Chip-parameter changes require recompilation through the red dashed box; control-parameter changes need only re-scheduling and simulation through the dotted box. ($\star$ = our contributions).
}
\Description{Pipeline from a QEC specification and a WISE configuration through gate decomposition, SAT-based routing, physics-aware scheduling and noise injection to Stim simulation of the logical error rate.}
\label{fig:full-pipeline}
\end{figure}

\begin{figure}[t]
    \centering
    \resizebox{\columnwidth}{!}{
    \begin{tikzpicture}[
        x=0.42cm, y=0.42cm,
        zodd/.style={draw, thin, minimum width=0.40cm, minimum height=0.40cm,
                     inner sep=0pt, fill=green!18},
        zeven/.style={draw, thin, minimum width=0.40cm, minimum height=0.40cm,
                      inner sep=0pt, fill=orange!18},
        jzodd/.style={draw, thin, minimum width=0.48cm, minimum height=0.48cm,
                      inner sep=0pt, fill=green!18},
        jzeven/.style={draw, thin, minimum width=0.48cm, minimum height=0.48cm,
                       inner sep=0pt, fill=orange!18},
        iondot/.style={circle, fill=gray!55, inner sep=0pt, minimum size=5pt},
        oddH/.style={<->, >={Stealth[length=2pt]}, semithick, red!70!black},
        evenH/.style={<->, >={Stealth[length=2pt]}, semithick, blue!60!black},
        oddV/.style={<->, >={Stealth[length=2pt]}, semithick, cyan!60!black},
        evenV/.style={<->, >={Stealth[length=2pt]}, semithick, magenta!60!black},
        lbl/.style={font=\sffamily\tiny},
        grp/.style={draw, densely dotted, rounded corners=1.5pt,
                    inner sep=1.5pt, gray!70},
        trapbox/.style={draw, thick, dashed, rounded corners=2pt,
                        inner sep=2pt, black!80},
    ]
    \def\rowpitch{2.0}

    \newcommand{\ionzone}[4]{
      \pgfmathtruncatemacro{\zpar}{mod(#2-1+#3,2)}
      \ifnum\zpar=0\node[zodd](#1)at(#3,#4){};\else\node[zeven](#1)at(#3,#4){};\fi
      \node[iondot](id-#1)at(#3,#4){};
    }
    \newcommand{\segzone}[4]{
      \pgfmathtruncatemacro{\zpar}{mod(#2-1+#3,2)}
      \ifnum\zpar=0\node[zodd](#1)at(#3,#4){};\else\node[zeven](#1)at(#3,#4){};\fi
    }
    \newcommand{\junczone}[4]{
      \pgfmathtruncatemacro{\zpar}{mod(#2-1+#3,2)}
      \ifnum\zpar=0\node[jzodd](#1)at(#3,#4){};\else\node[jzeven](#1)at(#3,#4){};\fi
      \draw[gray!40,very thin](#1.north west)--(#1.south east)
                              (#1.north east)--(#1.south west);
      \node[iondot](id-#1)at(#3,#4){};
    }

    \foreach \rr in {1,...,4}{
      \pgfmathsetmacro{\ry}{-(\rr-1)*\rowpitch}
      \ionzone{z-\rr-0}{\rr}{0}{\ry}
      \ionzone{z-\rr-1}{\rr}{1}{\ry}
      \ionzone{z-\rr-2}{\rr}{2}{\ry}
      \segzone{z-\rr-3}{\rr}{3}{\ry}
      \junczone{z-\rr-4}{\rr}{4}{\ry}
      \segzone{z-\rr-5}{\rr}{5}{\ry}
      \ionzone{z-\rr-6}{\rr}{6}{\ry}
      \ionzone{z-\rr-7}{\rr}{7}{\ry}
      \ionzone{z-\rr-8}{\rr}{8}{\ry}
      \segzone{z-\rr-9}{\rr}{9}{\ry}
      \junczone{z-\rr-10}{\rr}{10}{\ry}
      \segzone{z-\rr-11}{\rr}{11}{\ry}
      \ionzone{z-\rr-12}{\rr}{12}{\ry}
      \ionzone{z-\rr-13}{\rr}{13}{\ry}
      \ionzone{z-\rr-14}{\rr}{14}{\ry}
      \foreach \ii [evaluate=\ii as \jj using int(\ii+1)] in
        {0,1,2,3,4,5,6,7,8,9,10,11,12,13}{
        \draw[gray!35,very thin](z-\rr-\ii.east)--(z-\rr-\jj.west);
      }
    }

    \foreach \jxv in {4,10}{
      \foreach \rr [evaluate=\rr as \rnxt using int(\rr+1)] in {1,2,3}{
        \pgfmathsetmacro{\my}{-(\rr-1)*\rowpitch - \rowpitch/2}
        \coordinate(vs-\rr-\jxv)at(\jxv,\my);
        \pgfmathtruncatemacro{\toppar}{mod(\rr-1+\jxv,2)}
        \pgfmathtruncatemacro{\botpar}{mod(\rr+\jxv,2)}
        \ifnum\toppar=0
          \fill[green!18]([xshift=-0.20cm]vs-\rr-\jxv)
            rectangle([xshift=0.20cm,yshift=0.20cm]vs-\rr-\jxv);
        \else
          \fill[orange!18]([xshift=-0.20cm]vs-\rr-\jxv)
            rectangle([xshift=0.20cm,yshift=0.20cm]vs-\rr-\jxv);
        \fi
        \ifnum\botpar=0
          \fill[green!18]([xshift=-0.20cm,yshift=-0.20cm]vs-\rr-\jxv)
            rectangle([xshift=0.20cm]vs-\rr-\jxv);
        \else
          \fill[orange!18]([xshift=-0.20cm,yshift=-0.20cm]vs-\rr-\jxv)
            rectangle([xshift=0.20cm]vs-\rr-\jxv);
        \fi
        \draw[thin]([xshift=-0.20cm,yshift=-0.20cm]vs-\rr-\jxv)
          rectangle([xshift=0.20cm,yshift=0.20cm]vs-\rr-\jxv);
        \draw[gray!40,very thin]([xshift=-0.20cm]vs-\rr-\jxv)
          --([xshift=0.20cm]vs-\rr-\jxv);
      }
    }

    \node[trapbox, fit=(z-1-0)(z-1-2),
          label={[lbl,yshift=-1pt]above:\textbf{\textsf{trap}}\,(k{=}3)}] {};
    \node[grp, fit=(z-4-5),
          label={[lbl,gray!55!black]below:\textsf{seg}}] {};
    \node[grp, fit=(z-4-4),
          label={[lbl]below:\textsf{junc}}] {};

    \node[lbl,green!50!black,anchor=north]
      at([yshift=-2pt]z-1-12.south){\textsf{odd}};
    \node[lbl,orange!60!black,anchor=north]
      at([yshift=-2pt]z-1-13.south){\textsf{even}};

    \foreach \rr in {1,...,4}{
      \foreach \base in {0,6,12}{
        \pgfmathtruncatemacro{\bplusone}{\base+1}
        \draw[oddH]([yshift=4pt]id-z-\rr-\base.north)
          to[out=50,in=130]([yshift=4pt]id-z-\rr-\bplusone.north);
      }
    }
    \node[lbl,red!70!black,anchor=south]
      at([yshift=7pt]$(id-z-2-0)!0.5!(id-z-2-1)$){\textsf{odd-H}};

    \foreach \rr in {1,...,4}{
      \foreach \base in {1,7,13}{
        \pgfmathtruncatemacro{\bplusone}{\base+1}
        \draw[evenH]([yshift=4pt]id-z-\rr-\base.north)
          to[out=50,in=130]([yshift=4pt]id-z-\rr-\bplusone.north);
      }
    }
    \node[lbl,blue!60!black,anchor=south]
      at([yshift=7pt]$(id-z-2-7)!0.5!(id-z-2-8)$){\textsf{even-H}};

    \draw[oddV]([xshift=4pt]id-z-1-4.east)
      to[out=-15,in=15]([xshift=4pt]id-z-2-4.east);
    \draw[oddV]([xshift=4pt]id-z-1-10.east)
      to[out=-15,in=15]([xshift=4pt]id-z-2-10.east);
    \draw[oddV]([xshift=4pt]id-z-3-4.east)
      to[out=-15,in=15]([xshift=4pt]id-z-4-4.east);
    \node[lbl,cyan!60!black,fill=white,fill opacity=0.75,text opacity=1,
          inner sep=0.5pt] at (vs-1-10) {\textsf{odd-V}};

    \draw[evenV]([xshift=8pt]id-z-2-4.east)
      to[out=-15,in=15]([xshift=8pt]id-z-3-4.east);
    \draw[evenV]([xshift=8pt]id-z-2-10.east)
      to[out=-15,in=15]([xshift=8pt]id-z-3-10.east);
    \node[lbl,magenta!60!black,fill=white,fill opacity=0.75,text opacity=1,
          inner sep=0.5pt] at (vs-2-10) {\textsf{even-V}};

    \end{tikzpicture}
    }
    \Description{Grid of WISE zones in an odd/even checkerboard, with traps of three zones, transport segments, junctions, and the four broadcast swap primitives.}
    \caption{WISE chip layout~\cite{Malinowski2023}. Each small square is one \emph{zone}. Zones alternate odd
             (\protect\tikz[baseline=-0.6ex]\protect\node[draw,
             fill=green!18,inner sep=0pt,minimum size=5pt]{};)
             / even
             (\protect\tikz[baseline=-0.6ex]\protect\node[draw,
             fill=orange!18,inner sep=0pt,minimum size=5pt]{};)
             checkerboard with odd zones having even zones as neighbours. Four global reconfiguration primitives (\S\ref{sec:wise-arch}):
             \textcolor{red!70!black}{\textsf{parallel-odd-H}},
             \textcolor{blue!60!black}{\textsf{parallel-even-H}},
             \textcolor{cyan!60!black}{\textsf{parallel-odd-V}} and
             \textcolor{magenta!60!black}{\textsf{parallel-even-V}}. The dashed outline groups $k=3$ zones into one trap; grey dots are ions; transport segments and junctions connect traps. \textbf{WISE replaces per-electrode control by four globally broadcast local swap primitives.}}
    \label{fig:wise-chip}
\end{figure}

\section{Design Choices and Tradeoffs}
\label{sec:wise-arch}
\subsection{Background on WISE Architecture}
Figure~\ref{fig:wise-chip} shows an example WISE qubit chip. In WISE, an array of $m {\times} n$ ions is stored in a 2D array of traps. Each trap contains $k$ ions. The fundamental layout unit in WISE is a \emph{zone}: a region of control lines (that generate electric fields to confine, position, and move ions) holding at most one ion.

\emph{Traps} are a group of $k$ consecutive zones. Traps enable the execution of gates. Entanglement between two trapped ions is generated through their shared motional coupling, known as a M{\"o}lmer--S{\"o}rensen (MS) interaction~\cite{Sorensen1999,S_rensen_2000}. Between traps sit single-zone \emph{transport segments} and \emph{junctions} enabling two-axis routing~\cite{Malinowski2023}. Routing is performed through ion swapping. WISE uses \emph{switchable swaps}: two ions in neighbouring zones are physically exchanged when both on-chip switches are set to~$1$; otherwise ions remain stationary. To orchestrate a parallel swap step, we set an $N$-bit word appropriately and broadcast it to the QPU.

The WISE reconfiguration network implements odd-even transposition sorting~\cite{OddEvenSort_Habermann1972, Knuth1998}. WISE labels a row of zones by alternating between odd and even, and then uses parallel \emph{odd/even horizontal} swaps between adjacent ions within rows and \emph{odd/even vertical} swaps between adjacent ions that are both right next to a junction, as shown in Figure~\ref{fig:wise-chip}.
Vertical swaps differ because the $k$ ions in each trap share a junction and must cross it serially. Arbitrary reconfiguration of an $m \times n$ grid completes in at most $2m + kn$ swap steps~\cite{Malinowski2023}; at fixed $N=mn$ this is smallest, $\sqrt{8kN}$ steps, for the aspect ratio $2m=kn$.

\subsection{WISE Design Choices}
\label{sec:arch-params}

\paragraph{Multiplexing Order}
\label{sec:multiplexing-order}

In QCCD, electrodes (metal conductors) carry voltage signals and create electric fields that are used to control ions; digital-to-analogue converters (DACs) are used to generate waveforms to send to these control lines. Each zone contains $\approx10$ static electrodes for ion trapping and $\approx 10{-}20$ dynamic electrodes for ion movement.

WISE separates waveform generation from waveform distribution by routing a small set of shared waveforms through integrated cryogenic switching electronics rather than assigning dedicated control hardware to each electrode.

Dynamic transport of ions is provided by a small bank of globally distributed DACs, with local bit-select registers determining which waveform class drives each zone. Therefore, WISE has two primitives for each zone, either do nothing or swap with a neighbouring zone. We term this \emph{switched control}, which is unique to WISE. \emph{Multiplexed} control refers to quasi-static trapping voltages being supplied through time-division multiplexing. This means a single control line (\textit{e.g.,} DAC) is used to sequentially charge $M$ electrodes using an $M$-way multiplexor. This is the second half of WISE's design but it extends to other scalable trapped-ion control proposals~\cite{Anmasser:2026kpn, ohira2025multiplexed}.

Higher $M$ serialises electrode charging: when charging is not overlapped with other operations, as in our scheduler, every gate layer waits for its $M$ electrodes to be set sequentially, injecting a fixed latency overhead that lengthens every gate and increases exposure to dephasing.
Further, while electrodes are disconnected from their DAC, their stored voltages drift due to charge leakage, introducing control errors that grow with $M$.
Finally, all $M$ electrodes sharing a DAC experience identical voltage noise from that shared source. This can create spatially correlated errors across physically distant qubits, a failure mode that standard QEC analyses, which assume weakly correlated noise, do not capture. Our model uses this shared-control physics to set error rates, while sampling faults independently per gate (\S\ref{sec:live-state}). \textbf{Increasing $M$ reduces I/O requirements at the expense of longer gate layers, larger control errors and potentially correlated faults.}

\paragraph{Trap Capacity}
\label{sec:trap-capacity}
The trap capacity $k$ is the number of ions (qubits) housed in each trap. This parameter's choice controls a parallelism vs. communication trade-off similar to classical multi-cores~\cite{Jones_2026}. Smaller traps increase the number of independent execution sites, enabling more two-qubit gates to fire in parallel across the chip. However, fewer qubits per trap means more inter-trap data movement through junctions. Every gate between qubits in different traps requires physical ion transport, which is both slow and error-prone. Conversely, \textbf{larger traps reduce communication but pack ions closer together, increasing spectator crosstalk} (unintended coupling to non-target ions sharing the trap) between neighbouring qubits during gate execution~\cite{murali2020architectingnoisyintermediatescaletrapped}.

For WISE, $k$ introduces an additional asymmetric penalty not seen in QCCD. \textbf{Vertical reconfiguration, i.e., moving ions between rows, requires ions to cross through shared junction hardware, and this crossing must be serialised into $k$ sequential column buckets.}
Each additional qubit per trap adds another serial phase to every vertical routing step, yielding a vertical-to-horizontal cost ratio of $\approx 2.4k + k/2$, so QEC codes whose syndrome-extraction circuits demand frequent vertical communication are disproportionately penalised.

\paragraph{Cooling Protocol}
\label{sec:phonon-threshold}
Trapped-ion qubits accumulate excess motional energy as computation progresses (quantified as phonon number $\bar{n}$, which measures how vigorously each ion vibrates in its trap; higher phonon numbers degrade gate fidelity)~\cite{Monroe1995, turchette2000heating}.
Because two-qubit gate fidelity degrades with increasing $\bar{n}$~\cite{murali2020architectingnoisyintermediatescaletrapped}, we must periodically apply cooling pulses to return ions to near their motional ground state.

We parametrise the cooling policy by a phonon threshold $\bar n_\mathrm{th}$. With $\bar n_\mathrm{th}=0$, a cooling step precedes every gate block, mirroring the frequent recooling of Quantinuum's H-series and Helios systems~\cite{racetrack_Moses2023, helios2025}. With $\bar n_\mathrm{th}>0$, a per-ion phonon tracker estimates the motional excitation accumulated through transport, gates and idle periods, and triggers chip-wide cooling when an ion chain exceeds the threshold. \textbf{Raising this threshold eliminates unnecessary cooling stalls and directly reduces QEC cycle time, but potentially at the cost of higher motional excitation and lower fidelity.}

\subsection{QEC Design Choices for WISE}
\label{sec:QEC-trade-offs}

\paragraph{2D local codes (high qubit cost).}
Surface codes~\cite{Fowler2012} involve only nearest-neighbour qubits, making them the cheapest to route on WISE. Colour codes~\cite{Bombin2006} obey the same locality constraint but offer richer logical gate support (transversal Cliffords), at the cost of moderately deeper error-correction circuits. However, the BPT bound~\cite{BPTbound2010}, $kd^2=O(n)$ for 2D geometrically local stabiliser codes, forces their encoding rate to drop as $O(1/d^2)$ with increasing protection strength~$d$, meaning qubit costs grow rapidly.

\paragraph{2D subsystem codes (reduced check weight).}
Subsystem codes such as Bacon-Shor~\cite{BaconShor2006} measure weight-2 gauge operators rather than weight-4 stabilisers. Each weight-2 check requires only two two-qubit gates per ancilla (compared to four for a surface-code check), reducing individual circuit depth per measurement round and simplifying local routing. However, each weight-$2d$ stabiliser of a distance-$d$ Bacon--Shor code is inferred by combining $d$ weight-2 gauge outcomes, which introduces gauge-freedom overheads and lowers fault-tolerance thresholds~\cite{AliferisCross2007, NappPreskill2013}. Whether these shorter measurement circuits yield faster QEC cycle times that offset weaker per-round error suppression is an open co-design question that depends directly on the routing fabric and hardware noise mechanisms.

\paragraph{High-rate qLDPC codes (low qubit cost, long-range connectivity).}
At the opposite extreme, qLDPC codes (e.g., bivariate bicycle~\cite{bravyi2024highthreshold}, hypergraph product~\cite{TillichZemor2014}) circumvent the BPT bound via long-range interactions, encoding $5$--$10\times$ more logical qubits per physical qubit. While less practical on fixed 2D grids, mobile-qubit architectures natively support these non-local stabiliser graphs. Indeed, a recent QCCD study demonstrated that a $[[102, 22, 9]]$ qLDPC architecture can support 110 logical qubits using just $2514$ physical qubits at $\approx 10^6$ T gates per day~\cite{IonQWalkingCat}. Likewise, resource estimates for neutral atoms suggest that high-rate qLDPC codes could reduce factoring requirements to as few as $10{,}000$ reconfigurable atoms~\cite{cain2026shorsalgorithmpossible10000}.

However, $\sim 10\,\mu\text{s}$ transport latencies~\cite{IonQWalkingCat} and predicted $\sim 100$-day execution times~\cite{cain2026shorsalgorithmpossible10000} call into question whether these theoretical gains survive under realistic hardware, and in our case, multiplexed control constraints. %
\textbf{Overall, the optimal code family depends on the interplay between routing overhead (governed by the controller, the compiler, and routing fabric), noise accumulation, and per-round error suppression (dictated by code structure).}

\section{SAT-based WISE Compiler}
\label{sec:compilation}

We develop a WISE-specific compiler that performs exact, SAT-based qubit mapping and routing (QMR) for QEC circuits. Existing QCCD compilers~\cite{Jones_2026, Cyclone_khan2025cyclone, muzzletheshuttle, Sivarajah_2020, TISCC_Leblond_2023, murali2020architectingnoisyintermediatescaletrapped, MOVELESS_khan2025movelessminimizingoverheadqccds, GeneratingCompilersQMR, russon2026scalingqubitmappingrouting, SSYNC_QCCD, Schieretal2026,AdaptiveRouting2026Ovide} target different transport interfaces and do not support WISE's global odd-even routing primitives. TrapSIMD~\cite{ruan2025trapsimdsimdawarecompileroptimization} also exploits SIMD-style grouped transport on 2D trapped-ion architectures, through heuristic aggregation and scheduling of junction-based shuttling; WISER instead encodes WISE's broadcast odd-even swaps exactly in SAT.

We chose a SAT approach to reduce compiler quality as a confounding factor in architecture-level comparisons, by obtaining schedules that are optimal within our routing model, or bounded when the search times out. QMR is NP-hard in general~\cite{garey1979computers,Molavi2022MaxSATQMR}, but recent SAT formulations make exact compilation practical for many quantum circuits and improve substantially on heuristics~\cite{Molavi2022MaxSATQMR,jakobsen2025depthoptimalquantumlayoutsynthesis, shaik2024optimallayoutsynthesisdeep, yang_et_al:LIPIcs.SAT.2024.29}. However, existing SAT formulations do not encode global routing models. In WISE, a broadcast selects the direction and parity of a pass for the whole chip, while local switch bits choose which eligible comparators fire, so the swaps of a pass are globally coupled; prior solver-based formulations~\cite{Molavi2022MaxSATQMR, jakobsen2025depthoptimalquantumlayoutsynthesis} assume independently addressable SWAPs and optimise SWAP-count/depth. Moreover, WISE's vertical-to-horizontal cost asymmetry means minimising swap count is not time-optimal; our objective minimises weighted routing time under the odd-even constraints, and the scheduler then applies the multiplexed-control timing (\S\ref{sec:scheduling}). Consequently, WISE requires a new formulation that jointly models global odd-even routing and asymmetric transport costs.

\paragraph{Overview}
The compiler takes a QEC circuit, code distance $d$, trap capacity $k$, and hardware timing parameters as input, and produces a schedule of operations (native two-qubit MS gates, transport, reconfigurations).
Our QEC circuits represent memory experiments, i.e., they test the ability to preserve encoded logical qubits over time and consist of $C$ repeated rounds of syndrome extraction (typically $C=d$).
QEC circuits are represented in the Stim format~\cite{stim_Gidney2021}. The compiler decomposes the Stim circuit into native operations supported by WISE (\S\ref{sec:gate-decomposition}), then the SAT solver finds mapping and routing that co-locates interacting ion pairs while minimising the weighted routing cost (\S\ref{sec:sat-routing}). The scheduler interleaves reconfiguration passes with gate operations, inserting cooling when the tracked phonon number exceeds $\bar{n}_{\text{th}}$. Figure~\ref{fig:schedule-timeline} shows an example of a schedule for the $d{=}2$ rotated surface code on a $k{=}2$ WISE grid.

\subsection{From Stim Circuit to Native Operations}
\label{sec:gate-decomposition}

The Stim~\cite{stim_Gidney2021} circuits for each code we study here are generated from the \texttt{qldpc} Python library~\cite{qldpc}. The compiler first rewrites each circuit into Stim's $\{\texttt{H},\texttt{S},\texttt{CX},\texttt{M},\texttt{R}\}$ basis, so that, for example, an $X$-basis measurement becomes a $Z$-basis measurement conjugated by Hadamards, and then lowers these gates to native WISE operations: $\mathcal{G}_{\text{native}} = \{\text{MS}, R_X, R_Y, R_Z, M_Z, R\}$.
This is a straightforward IR transformation in our compiler (Appendix~\ref{appendix:gate-model} gives the rules). Stim then simulates the original Clifford circuit; the lowered sequence fixes the number, duration and error of the native operations charged to each gate. Since execution time is a bottleneck in WISE, we use one ancilla per QEC stabiliser unlike prior works which reuse ancillas~\cite{MOVELESS_khan2025movelessminimizingoverheadqccds}.

\subsection{SAT-based Qubit Mapping and Routing}
\label{sec:sat-routing}

To map and route QEC qubits onto physical qubits, this stage of the compiler accepts a sequence of $R$ MS-gate rounds as input. Each round is specified by the set of ion pairs that must be entangled $\mathcal{P}_r = \{(i_1, j_1), (i_2, j_2), \ldots\}$. The compiler produces a reconfiguration schedule and mapping where a sequence of parallel swap layers is determined such that for each round, the ion pairs involved in that round are placed within the same trap, allowing the two-qubit gates to happen on them.

\paragraph{Variables and Clauses}
\label{sec:sat-variables}
Intuitively, the SAT encoding answers: ``at each time step, where is every ion, and which swaps fire?'' For an $n\times m$ grid with $I$ ions and $R$ MS-gate rounds, $\ell$ indexes routing passes up to the horizon $\ell_{\max}=2(n+m)$. Three variable families encode the full ion trajectory: \emph{placement} variables $a(r,\ell,y,x,i)$ (ion $i$ occupies cell $(y,x)$ after pass $\ell$ of round $r$), \emph{swap} variables $s_h,s_v$ (a horizontal or vertical comparator exchanges neighbouring ions), and \emph{phase} variables $\phi(r,\ell)$ selecting horizontal or vertical swaps; helper variables derive the per-round pass counts $n_H^{(r)}$ and $n_V^{(r)}$ from them. Clauses enforce three invariants: (i) ions form a permutation (no ion is cloned or lost), (ii) swaps respect WISE's global odd-even parity and phase rules, and (iii) paired ions end each round co-located in the same trap. Table~\ref{tab:sat-clauses-structural} gives the full definitions (with $p$ for $\ell$).

\paragraph{Objective.}

Our goal is to minimise the total time ions spend being reconfigured between MS-gate rounds. With $t_H$ and $t_V$ denoting the costs of one horizontal and one vertical odd/even sub-pass, the total reconfiguration time is:
\begin{equation}\label{eq:reconfig-objective}
\sum_{r=1}^{R} T_r
=\sum_{r=1}^{R}\left(n_H^{(r)}t_H+n_V^{(r)}t_V\right).
\end{equation}
Our compiler jointly optimises all $R$ rounds of a complete stabiliser cycle in a single SAT call, choosing ion placements and the horizontal and vertical pass counts to minimise the total weighted routing cost.

The costs $t_H$ and $t_V$ follow from primitive transport timings in units of the ion-swap time $t_0$ (Appendix~\ref{append-optimality}). A horizontal swap is five sequential primitives (shuttle--merge--rotate--split--shuttle~\cite{walther2012controlling,bowler2012coherent,Burton_2023}), giving $t_H=2.12\,t_0$ per sub-pass; a vertical sub-pass serialises $k$ junction-crossing buckets of $5.10\,t_0$, giving $t_V=k(5.10\,t_0)+(k/2)t_H$. Thus $t_V/t_H\approx2.4k+k/2$: simply minimising raw swap count would ignore WISE's substantial vertical-routing penalty.

Because the SAT solver requires integer arithmetic, we discretise the objective as $W_r=n_H^{(r)}+w n_V^{(r)}$, where $w=\operatorname{round}(t_V/t_H)$ captures the vertical-to-horizontal cost asymmetry. The search variable $B$ is an upper bound on the total weighted routing cost. A totaliser-based cardinality constraint encodes $\sum_r W_r\leq B$~\cite{PySAT2018,RC2_Ignatiev2019,EenSorensson2003}, and parallel search tightens $B$. At convergence, a feasible schedule at $B^*$ and an UNSAT proof at $B^*-1$ certify that no schedule of the encoded model has a lower integer-weighted cost (Appendix~\ref{append-optimality}). The encoded model orders each round's passes horizontal-then-vertical within a horizon of $2(n+m)$ passes. This restriction never makes a round infeasible: for the even trap capacities we study, one horizontal phase followed by one vertical phase can always co-locate every MS pair (Appendix~\ref{append-optimality}). It can, however, exclude cheaper schedules that return to horizontal passes, so our certificates are relative to this phase order.

\paragraph{Cycle-Reuse Decomposition.}
\label{par:qmr-chaining}

Encoding all gate rounds in a single SAT formula is intractable at practical code distances. QEC memory experiments repeat a short cycle of MS rounds $C$ times (typically $C=d$), so the first SAT call encodes one copy of the cycle plus any start rounds, with a route-back round that returns the layout to the cycle start so the solution can be replayed $C$ times; cycle rounds carry weight $C$ in the objective. A second call routes the end rounds. The number of encoded placement slices is therefore independent of $C$; only the weighted cardinality constraint grows with it (Appendix~\ref{si:cycle-reuse}). The route-back round makes the cycle layout periodic, so certificates are also relative to this periodicity.

To inform architecture design, WISER needs to compile large QEC codes and code distances. We implemented a parallel SAT solving framework whose workers simultaneously tighten feasible upper bounds and prove lower bounds. QEC-specific spatial and temporal decomposition accelerates this search; the best schedule and proven lower bound are retained when the time budget expires. These tooling optimisations, motivated by prior exact mappers~\cite{shaik2024optimallayoutsynthesisdeep,yang_et_al:LIPIcs.SAT.2024.29}, are described in Appendix~\ref{appendix:hunters}.

\subsection{Physics-Aware Scheduling}
\label{sec:scheduling}
\begin{figure*}
    \centering
    \figorbox[width=0.85\linewidth]{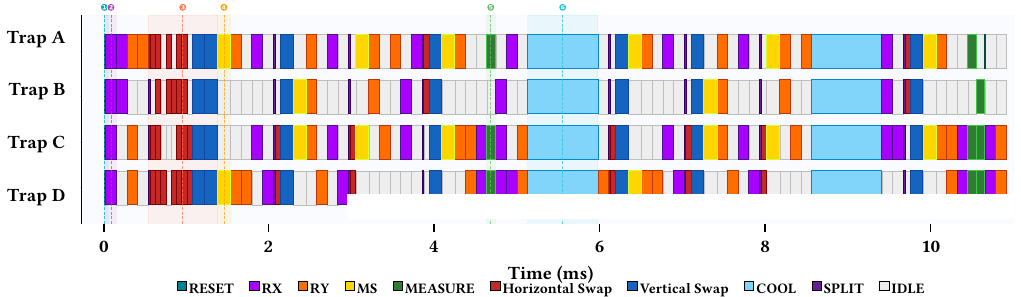}
    \caption{Compiled schedule for the $[[4,1,2]]$ rotated surface code (trap capacity $k{=}2$, $M{=}128$, Current tier, phonon threshold $\bar{n}_{\text{th}}{=}1.5$, a threshold-triggered policy chosen to illustrate mid-schedule cooling insertion), two syndrome-extraction rounds. Per-trap Gantt chart (trap labels on the y-axis): horizontal-swap (red) and vertical-swap (blue) reconfiguration passes interleave with type-homogeneous gate blocks. Light-blue cooling bars appear when $\bar{n}$ exceeds $\bar{n}_{\text{th}}$; grey bars are idle dephasing windows. MS is the native two-qubit gate. \textbf{Because control is shared, operations of one type execute together across traps, and every gate layer waits for multiplexed electrode charging. Even at this relaxed threshold, two chip-wide cooling stalls take about $15\%$ of the schedule; with cooling before every gate block ($\bar n_\mathrm{th}=0$), recooling takes about $30\%$ of the cycle (\S\ref{sec:result_isolating_wise_slowdown}).}}
    \Description{Gantt chart of four traps over about 11 milliseconds showing swap passes, single-qubit rotations, MS gates, measurements, two cooling stalls and idle periods.}
    \label{fig:schedule-timeline}
\end{figure*}

The scheduler processes the mapped and routed Stim circuit. It inserts cooling operations by tracking motional excitation rates for each trap based on the device's noise models operation by operation in a compiled Stim circuit.
If any trap's phonon number exceeds the threshold $\bar{n}_{\text{th}}$ before an MS gate or rotation, a cooling operation is inserted for \emph{all} traps simultaneously (light-blue bars in Fig.~\ref{fig:schedule-timeline}).
Cooling resets the motional excitation state of the trap.

WISE uses single-instruction, multiple data (SIMD) control: the same type of control pulse is applied across traps simultaneously. Accordingly, the scheduler also batches operations into blocks that are composed of the same operation type. In Figure~\ref{fig:schedule-timeline}, these SIMD operations are shown as colour-coded bands in the same timestep. We take the logical cycle time $T_\mathrm{logical}$ to be the duration of the full scheduled memory experiment, including state preparation and final readout (\S\ref{sec:metrics}).

\section{WISER: Architectural Model}
\label{sec:arch-model}

This section presents the WISER architectural model: how the hardware parameters $(N, k, M)$ and fabrication quality determine grid geometry, gate timing, and physical error rates.

\subsection{Architecture Instantiation and Grid Optimisation}
\label{sec:arch-instantiation}
Given an ion count $N$ and trap capacity $k$, the grid dimensions $m \times n$ must be chosen to minimise reconfiguration time.
Because vertical swap passes are physically ${\sim} 3k$ times more expensive than horizontal ones (\S\ref{sec:sat-routing}), a naive square grid selection will be suboptimal.
A candidate grid must satisfy $m \times n \geq N$ and $m \equiv 0 \pmod{k}$ (so that junction chains align with the trap structure). We search candidates near the idealised column count $m_{\text{ideal}} = \sqrt{N \cdot v_w}$, where $v_w = 3k$ is the vertical-to-horizontal cost multiplier and select the grid minimising $\text{cost}(m,n) = m + v_w \cdot n,$ which approximates worst-case reconfiguration depth.
Ties are broken by fewest spectator (unused) cells $m \cdot n - N$. The result is wider-than-square grids that exploit the cheaper horizontal axis. Power accounting (\S\ref{sec:metrics}) counts every zone of the chosen grid, including unused ones.

\subsection{Gate Time Model}
\label{sec:gate-time}
In our scheduler, every gate layer waits for the multiplexed control hardware to charge its $M$ electrode groups, without overlapping this charging with other operations. This sample-and-hold overhead adds a fixed latency $t_{\text{sc}} = M \cdot t_{\text{ec}}$ to every gate layer~\cite{Malinowski2023}, giving effective gate times $t_{\text{eff,1q}} = t_{\text{sc}} + t_{\text{1q}}$ and $t_{\text{eff,2q}} = t_{\text{sc}} + t_{\text{2q}}$.
A layer therefore takes between $t_\text{eff,1q}$ (single-qubit gates, no reconfiguration) and $t_\text{eff,2q} + t_r + t_\text{cool}$ (an MS layer preceded by reconfiguration and cooling), where~$t_r$ is the reconfiguration time computed by the compiler for a given schedule. Overlapping charging with other operations, as discussed for WISE~\cite{Malinowski2023}, would shorten these layers and is not modelled.

\subsection{Noise Modelling}
\label{sec:noise-model}

We implement a physics-informed noise model that traces the error rates of gates, idling, transport and cooling back to the architectural parameters $(N, k, M)$ and the technology tier, with single-qubit, measurement and reset errors as tier-dependent constants. Unlike prior studies that either assume simple depolarising noise~\cite{beverland2022assessingrequirementsscalepractical,vandam2024usingazurequantumresource,IonQWalkingCat} or highly abstracted hardware-derived noise~\cite{transversality_lattice_Guti_rrez_2019, murali2020architectingnoisyintermediatescaletrapped, Jones_2026}, our model captures how errors arise from and scale with low-level hardware parameters.

The hardware-level error mechanisms (heating, shared-control noise, crosstalk) are derived from device physics and projected onto Pauli channels, following the convention of parametrising depolarising noise from gate infidelities~\cite{wallman2018randomized,yang2026hardware}, extended here with WISE-specific shared-control noise. Fully calibrated channel models would require experimental device characterisation or intractable simulation~\cite{Gustiani2025virtualquantum}. We summarise the dominant noise mechanisms and their scaling behaviour below. Exact equations and detailed physics are in Appendix~\ref{append-noise}.

\textbf{Motional heating.}
Electric-field noise from the trap surface continuously drives the ions' vibrational modes, increasing the phonon number over time.
We assume a WISE layout in which ions form a linear chain with normal modes. The heating rate scales as $\Gamma_m = \frac{e^2}{4m_{\mathrm{ion}}\hbar \omega_m} \sum_l |\mathcal{E}_{l\to m}|^2 S_{E_l}(\omega_m)$, with ion mass $m_{\mathrm{ion}}$ and $S_E(\omega) \propto 1/\omega^\beta$, as shown in~\cite{brownnutt2015ion}, where the coupling $\mathcal{E}_{l\to m}$ of electrode $l$ to mode $m$ (Appendix D) sets how voltage noise on an electrode becomes motional excitation or a frequency shift. Architecturally, the heating rate determines how often cooling must be applied. In the evaluated model, phonons accumulate at a fixed reference heating rate; the heating-amplitude parameter enters only through motional dephasing (Appendix~\ref{append-noise}).

\textbf{Operational phonon generation.} Noise in the operation resonantly drives ion motion during transport, junction crossing, and swapping, resulting in residual phonons. Removing them requires recooling, and each recooling operation can scatter a photon from a data ion and decohere it, so the total cooling error grows with the number of recooling operations.

\textbf{Correlated noise.} Unintended entanglement between non-target ion pairs arises from shared motional modes during parallel gates. The crosstalk phase scales proportionally to the product of the Lamb-Dicke parameters (defined in Appendix~\ref{append-noise}) between different ions in the trap. For WISE's zone architecture, this limits parallel gate density: non-interacting qubit pairs must be assigned to different traps to reduce the correlation noise.

\textbf{Device control noise (shared DACs).} Charge injection from switch openings and voltage decay on the control lines create correlated electric field errors. The voltage step scales with the extra charge fluctuation, and drops exponentially in time. Because WISE shares DACs across multiple controls, this noise couples across the entire zone~\cite{ohira2025multiplexed}.

\textbf{Single-qubit, measurement, and reset penalties.} These arise from residual physical errors not explicitly modelled at the physical level, with state-of-the-art single-qubit infidelity as low as $10^{-7}$ per operation~\cite{smith2025single}. Architecturally, these constant penalties set the baseline below which all other noise sources (heating, crosstalk, device noise) must be suppressed to avoid dominating the total infidelity.

\subsection{Per-Gate Error Computation} \label{sec:live-state}
Unlike prior QEC studies that assign fixed error rates per gate type, our model evaluates each gate at the live phonon state of its trap at the instant it executes. Each trap's phonon number is updated in-place throughout the schedule. For instance, transportation of the ions will increase it, while cooling of the ion resets it (so two identical MS gates at different schedule points have different fidelities). Each MS gate's infidelity is decomposed into the noise contributions described in \S\ref{sec:noise-model}, evaluated at the current phonon number, and mapped to Pauli error channels. For example, device-control, heating and photon-scattering errors are mapped onto a two-qubit depolarising channel (which applies one of the 15 non-identity two-qubit Paulis, chosen uniformly, when a fault occurs), motional dephasing onto $Z$ errors on the gate ions, and crosstalk onto $Z$ errors on spectator ions~\cite{Leung_2018, Figgatt_2019}. Every channel is injected independently per gate: shared-DAC physics sets the rates, but joint faults across traps sharing a DAC are not simulated.

Single-qubit rotation gates (\textit{e.g.,} $R_X$, $R_Y$) incur heating and device-control errors computed at the current $\bar{n}$, combined into a single-qubit depolarising channel. Measurement operations incur a fixed bit-flip error $p_M$ (technology-tier-dependent) modelled as an $X$ error before the projective $Z$-basis measurement, so that the classical outcome is flipped with probability $p_M$; $X$-basis measurements are executed as $Z$-basis measurements between Hadamards (\S\ref{sec:gate-decomposition}), so the same flip applies to them. Qubit reset incurs a fixed preparation error $p_R$ modelled as an $X$ error after the reset, representing imperfect state initialisation. Each recooling operation applies a single-qubit depolarising error of total probability $\epsilon_{\text{cool}}$ ($\epsilon_{\text{cool}}/3$ on each Pauli axis) to every ion in the cooled trap, representing decoherence from the cooling laser interaction.

\subsection{Stim Noise Injection}
\label{sec:noise-injection-detail}
The final pipeline stage translates the compiled schedule into a noisy Stim circuit~\cite{stim_Gidney2021} for QEC simulation. The noise injector walks through the schedule and, for each operation, (i) computes the idle duration each ion experienced since its last operation (grey bars in Fig.~\ref{fig:schedule-timeline}), converting idle time to $Z$-dephasing errors from ion-chain frequency fluctuations and voltage drift, (ii) inserts pre- and post-gate Pauli errors at the rates computed by the noise model (\S\ref{sec:noise-model}), and (iii) attaches cooling-induced errors where cooling was inserted by the scheduler. Idling during transport accumulates motional heating rather than dephasing, which is tracked separately.

We sample syndrome histories with Stim and decode each detector error model with MWPM~\cite{PyMatching_Higgott2022} for surface codes and, for other codes, exact maximum-likelihood lookup~\cite{Poulin2006MLE} up to 30 detectors, Tesseract~\cite{beni2025tesseractdecoder} up to 60 and BP-OSD~\cite{Roffe2020,ldpc_python} beyond, with fewer iterations and lower OSD order as detector count and syndrome load grow. A shot fails if any logical observable of the block is decoded incorrectly, so $p_L$ is a block-failure probability and, for multi-qubit blocks, upper-bounds the error rate of each logical qubit. Sampling stops at an error count or at $1$--$2\times10^9$ shots; we report one-sided $95\%$ Clopper--Pearson upper bounds for each configuration, $3.0\times10^{-9}$ for no failures in $10^9$ shots, and a configuration meets a $p_L$ target only if this bound lies below it. Configurations whose average detector-flip probability exceeds $5\%$ are classed as not meeting any target without sampling. Appendix~\ref{si:evidence} lists the failure and shot counts behind every headline result.

\section{Experimental Setup}
\label{sec:evaluation}

We systematically vary architectural parameters with hardware/noise assumptions in order to determine the best design choices and whether these choices remain consistent across plausible operating regimes.

\subsection{Control Parameter Space}
\label{sec:control-params}
\label{sec:resimulation}
The WISE architecture exposes \emph{compile-time parameters} (trap capacity $k$ and grid geometry) that require SAT recompilation, and \emph{post-hoc parameters} (multiplexing order $M$, technology tier, and phonon threshold $\bar{n}_{\mathrm{th}}$) that affect timing and noise but not routing (\S\ref{sec:arch-params}). Changing a post-hoc parameter re-runs only the scheduler, with the same routing but updated noise and cooling decisions, before the noisy Stim circuit is re-sampled and decoded.

\subsection{Code Selection Space}
Code selection follows the architectural regimes of \S\ref{sec:QEC-trade-offs}. We evaluate 25 code instances across five families from distance 2 to 9 (Figure~\ref{fig:SAT-vs-heuristic}); the full parameter sweep covers 24 of them, all except the distance-9 surface code. This covers the code-block sizes relevant to early FTQC WISE system operation.

\textbf{(1)~surface codes}~\cite{Fowler2012} (rotated layouts; the $2d^2-1$ allocation includes data and syndrome ancillas).

\textbf{(2)~colour codes}~\cite{Bombin2006} (4.8.8) tessellations.

\textbf{(3)~subsystem codes}~\cite{BaconShor2006} (Bacon--Shor).

\textbf{(4)~small codes} ($[[4,2,2]]$, $[[6,2,2]]$, Steane, Shor, $[[8,3,2]]$, $[[15,1,3]]$ Reed--Muller, Hamming CSS), motivated by recent trapped-ion demonstrations of real-time QEC~\cite{small_code_colour_realtime, ft_one_bit_addition_small, berthusen2024experiments4dsurfacecode, microsoft_quantinuum_paper}.

\textbf{(5)~qLDPC codes}~\cite{bravyi2024highthreshold, TillichZemor2014, Panteleev2022, Breuckmann2021}: bivariate bicycle $[[18,8,2]]$, $[[36,8,4]]$ and $[[72,12,6]]$ and hypergraph products of the Hamming code.

\subsection{Sweep Methodology}
\label{sec:sweep-method}
Rather than reporting results at a single operating point, we sweep code and distance, trap capacity $k\in\{2,4,6,8,16\}$, multiplexing order $M\in\{1,4,16,64,256\}$, cooling policy $\bar n_{\rm th}\in\{0,0.1,0.5,1,10,100\}$ and technology tier. Nineteen code instances cover this full grid; the three largest blocks (surface $d=7$, colour $d=6$ and HGP $7\times7$) are evaluated at $k=2$, Reed--Muller at $k\leq6$, and Bacon--Shor $d=9$ at $k\leq6$ and $65$ points with $k\geq8$. This gives $9{,}425$ configurations, of which $7{,}230$ were sampled, $2{,}074$ exceeded the detector-noise cut-off (\S\ref{sec:noise-injection-detail}) and $121$ Bacon--Shor $d\in\{7,9\}$ points were excluded because their decoder runs did not terminate. The Current, Optimistic, and Pessimistic tiers vary gate timing, transport latency, and physical error parameters (Table~\ref{tab:tech-tiers}).

\subsection{Reported Metrics.}
\label{sec:metrics}

We define one \textit{logical cycle} as a memory experiment of $d$ complete rounds of syndrome extraction. Its \textit{logical cycle time} $T_{\mathrm{logical}}$ is the duration of the full schedule, including state preparation and final readout, so a round lasts $T_{\mathrm{ec}}=T_{\mathrm{logical}}/d$ on average, and the \textit{logical clock speed} is $f_L=1/T_{\mathrm{logical}}$. The \textit{logical error rate} $p_L$ is the probability that a logical cycle ends with any logical observable of the block decoded incorrectly. 

$N$ is the number of zones in a code block, each holding at most one ion (Figure~\ref{fig:wise-chip}). Each DAC dissipates $30\,\mathrm{mW}$~\cite{Malinowski2023}. An $M$-way demultiplexer reserves one of its states for disconnection, so a static-electrode DAC serves $M-1$ electrodes and a block needs $N_{\mathrm{sDAC}}=10N/(M-1)$ static DACs; $M=1$ denotes one DAC per static electrode, $N_{\mathrm{sDAC}}=10N$. The $N_{\mathrm{dDAC}}=120$ dynamic-waveform DACs broadcast to every zone of the chip~\cite{Malinowski2023}, so identical blocks running the same schedule in lockstep share one bank. For $L=100$ logical qubits in blocks encoding $K_L$ each, the DAC power per logical qubit is $P_{\mathrm{LQ}}=30\,\mathrm{mW}\times(N_{\mathrm{sDAC}}/K_L+N_{\mathrm{dDAC}}/L)$, with all DACs at the cold stage; we refer to it as power throughout.

\paragraph{Optimistic screening criteria for early fault tolerance}
\label{sec:early-ft-screen}

We screen for configurations that could achieve early fault-tolerance. We define this as preserving $100$ fault-tolerant logical qubits through $10^5$ logical time steps, modelling each step as one logical cycle. This is a deliberately lenient screening criterion, with a useful quantum computation expected to require a stronger screen~\cite{beverland2022assessingrequirementsscalepractical,Campbell2022,Huggins2025FLASQ}. It sits between the UK National Quantum Strategy's milestones of a million quantum operations by 2028 and a billion by 2032~\cite{UKGovQuantumMissions2023}.

We allocate a $10\%$ failure budget~\cite{beverland2022assessingrequirementsscalepractical,PRXQuantum.5.020101,Yoshioka2024} to preserving the entire register and allow at most $48\,\mathrm{h}$~\cite{Pino2021, racetrack_Moses2023, oxfordionics_Loschnauer2024} to finish the logical circuit on the quantum processor. Splitting the budget evenly across qubits and steps (a union bound) gives the target $p_L\leq\tfrac{0.1}{100\times10^5}=10^{-8}$, a sufficient rather than strictly necessary condition; the window requires $f_L \geq \tfrac{10^5}{48\times3600}\,\mathrm{Hz}\simeq0.58\,\mathrm{Hz}$. We apply the same $10^{-8}$ target to the block-failure probability of every code, which is conservative for blocks encoding $K_L>1$ logical qubits: splitting the budget over $\lceil100/K_L\rceil$ blocks instead permits $p_L\leq0.1/(\lceil100/K_L\rceil\times10^5)$, e.g.\ $1.1\times10^{-7}$ for nine $[[72,12,6]]$ blocks. Repeating our screens with this register-level target adds only four slower BB configurations at the Optimistic tier and none at the Current tier, so none of our conclusions depend on the choice.

We assume the cryogenic refrigerator at $\sim 10\,\mathrm{K}$ can remove a total of $600\,\mathrm{W}$ while maintaining the QPU's operating temperature, an optimistic design assumption informed by large $20\,\mathrm{K}$ helium-refrigeration systems~\cite{Michael2024, Brock2023}, and allocate $275\,\mathrm{W}$ to optical absorption in the light-delivery waveguides. Appendix~\ref{sec:early-ft-screen-si} gives the supporting calculations and evidence for all of our assumptions here. This leaves $3.25\,\mathrm{W}$ per algorithmic logical qubit for electrical trapping and reconfiguration.

The screen is a minimum bar rather than a guarantee: passing it does not imply feasibility in practice. Every configuration is evaluated under favourable conditions, with a minimal early-fault-tolerant workload, SAT-optimised (or bounded) mapping and routing and strong tractable decoders, so configurations that do not pass are poor candidates for early fault tolerance on the studied hardware. Further, we intentionally use logical memory rather than logical computational gadgets, such as magic-state preparation and consumption, in order to reduce confounding factors in resource estimates.

\section{Results}
\label{sec:results}

\subsection{WISER Compiler Quality}
\label{sec:result1}
\textbf{Method:} To evaluate compilation quality, we use the architecture configuration of the original WISE study~\cite{Malinowski2023}, $(k,M,\bar{n}_{\mathrm{th}})=(2,128,0.0)$ on the Current tier. This experiment uses the strictest cooling policy: in addition to cooling before every gate block, ions are recooled during reconfiguration whenever transport has heated them, so routing time includes the recooling that movement causes; the recool before each gate block, which both compilers pay identically, is excluded from routing time but included in cycle time. The co-design sweep (\S\ref{sec:result2}) and the comparisons that follow it use cooling before every gate block only. For each QEC scheme, we compare the WISER compiler (\S\ref{sec:sat-routing}) with a greedy WISE-compatible baseline. The heuristic baseline adapts the shortest-path co-location policy used by existing QCCD compilers~\cite{murali2020architectingnoisyintermediatescaletrapped,MOVELESS_khan2025movelessminimizingoverheadqccds}, adapted to WISE's global swapping primitives. For a given round of 2-qubit entanglement (MS) gates, where qubits must be co-located in the same trap, each pair of qubits is assigned, closest-first, to the nearest free trap, and the ions are moved there by a three-phase odd-even transposition sort (rows, columns, rows), which realises any target permutation within $p_{\max} = 2(n+m)$ swap layers~\cite{OddEvenSort_Habermann1972,Knuth1998, Malinowski2023}. 

\textbf{Result:} Figure~\ref{fig:SAT-vs-heuristic} shows the ion routing time (lower is better) for both techniques. \textbf{WISER reduces routing time by $2.6$--$18.8\times$ relative to the heuristic compiler across 6--161-qubit QEC workloads ($8.24\times$ on average).} It also lowers logical cycle time by $1.9$--$9.7\times$ (lines in Figure~\ref{fig:SAT-vs-heuristic}).
Medium-sized codes with $10$--$50$ qubits benefit the most because non-local routing overhead in the greedy heuristic is high, while the overheads of SAT decomposition are low. Workloads of $100$ or more qubits still achieve $3.0$--$6.0\times$ routing reductions, showing that the formulation remains effective beyond the regime where optimality can be proved.

\begin{figure*}[htbp]
\figorbox[width=0.95\textwidth]{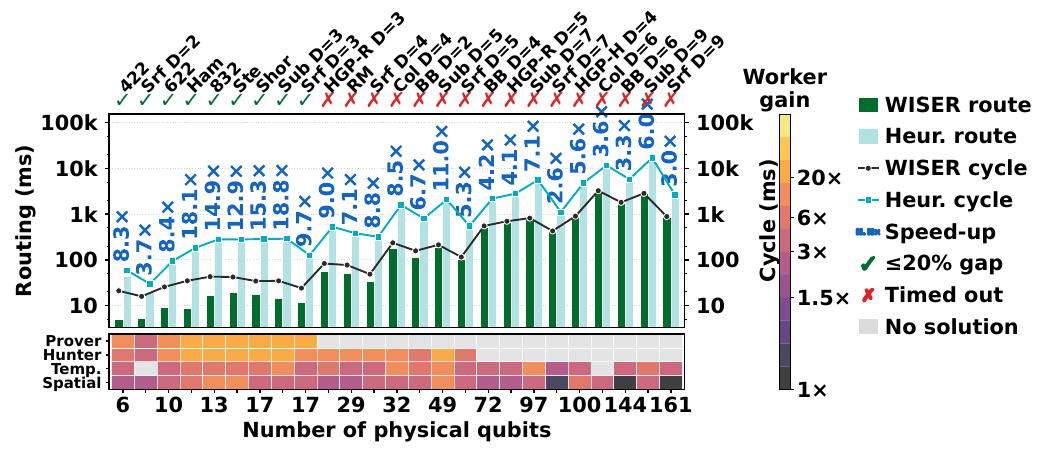}
\centering
\caption{\textbf{Top:} routing time (bars, left axis) and cycle time (lines, right axis) per syndrome-extraction round, WISER compiler vs.\ the greedy odd-even heuristic on identical hardware, at WISE's own operating point ($k=2$, $M=128$, Current tier, $\bar{n}_{\mathrm{th}}=0.0$ with recooling during reconfiguration). Labels give the code and its distance $D$. Blue labels give the routing speed-up; green ticks mark schedules within the proven lower bound, red crosses 12-hour runs with a larger gap. Routing time measures transport and the recooling that transport causes, while cycle time additionally includes gates, measurement, reset and the recooling before each gate block. \textbf{WISER cuts routing time by $8.24\times$ on average} (arithmetic mean over the 25 codes). \\ \textbf{Bottom:} routing improvement per SAT-portfolio worker class (Appendix~\ref{appendix:hunters}); grey cells are workers that found no solution in time. Provers only scale to ${\sim}17$ qubits within the 12-hour proof budget on an M4 Mac Mini. Hunters (branch-and-bound on the full problem) extend compiler scalability to 49 qubits, while spatial and temporal decomposition extends compilation to ${\sim}161$. \textbf{Our QEC-specific SAT-decompositions extend compilation to codes required for early fault-tolerance.}}
\Description{Bar and line chart of routing and cycle time for 25 QEC codes ordered by qubit count, comparing WISER with a greedy heuristic, above a heatmap of per-worker routing gains.}
\label{fig:SAT-vs-heuristic}
\end{figure*}

QEC-specific decomposition lets WISER compile memory circuits of up to 161 physical qubits, against roughly 17 for the Prover alone (Figure~\ref{fig:SAT-vs-heuristic}). Compiling larger monolithic code blocks is not necessary here: cycle time grows with block size, and at $M=16$ a single-logical-qubit block beyond ${\sim}160$ zones exceeds the power ceiling.

\subsection{Full Parameter Sweep}
\label{sec:result2}
\textbf{Method:} To determine optimal operating points, we do a series of hardware parameter sweeps. In Figure~\ref{fig:result2-stepC}, we sweep trap capacity $k$ and multiplexing order $M$ and estimate the logical clock speed and power per logical qubit, while meeting the logical-error target.
In Figure~\ref{fig:result2-stepC-fig2}, we test if our conclusions are robust to changes in model parameters by repeating $(k,M, \bar n_{\rm th})$ selection across a range of logical error rates and technology tiers.
From these sweeps, we fix the best trap capacity, multiplexing order, and cooling policy to compare code families in a fixed architecture in Figure~\ref{fig:qec_expert_trajectories}. Further, Figure~\ref{fig:q3_sensitivity} checks the sensitivity of the underlying physical-model parameters. Figure~\ref{fig:error-budget} identifies the dominant noise mechanisms, while Figure~\ref{fig:cycle-breakdown} determines the contributions of transport, recooling, multiplexing, and gates towards the logical cycle time.

\paragraph{Result:} Figure~\ref{fig:result2-stepC} shows that \textbf{the smallest ($k=2$, two-ion) traps, moderate ($M=16$) multiplexing and cooling before every gate block ($\bar{n}_{\mathrm{th}}=0$) give the fastest early fault-tolerant WISE design}, while higher multiplexing trades clock speed for lower power. Figure~\ref{fig:result2-stepC-fig2} repeats the selection across logical-error targets and hardware tiers: small traps and cooling before every gate block remain preferable, while the power benefit of high $M$ is limited by its clock-speed and error penalties.
At $k=2$, the fastest configuration meeting all three screens is the surface code $d=5$ at $M=16$, the BB $[[72,12,6]]$ code at $M=4$, and the Bacon--Shor code $d=5$ at $M\geq64$; the lowest power at every $M\geq4$ comes from the BB $[[72,12,6]]$ code, always with cooling before every gate block (Appendix~\ref{si:evidence}). No cell passes at $M=1$, and no configuration passes at $k\geq8$.

At $(k,M)=(2,16)$, optimistic tier, the fastest code, the surface code $d=5$, passes all three screens ($52.9$\,Hz, $1.12\,\mathrm{W/LQ}$), which is the highest jointly feasible clock speed in the sweep. At $M\leq4$ its static DACs exceed the power ceiling ($16.2$ and $5.4\,\mathrm{W/LQ}$); by our power model it fits the ceiling for any $M\geq7$. At $M\geq64$ its $p_L$ bound stays above the target within our shot budget (one failure in $1.2$--$1.5\times10^{8}$ shots). $M=16$ thus keeps $98.5\%$ of the unmultiplexed clock at $6.9\%$ of its power, and is the only evaluated order at which the fastest code meets every requirement.

\begin{figure*}[t]
\centering

\includegraphics[width=0.9\textwidth]{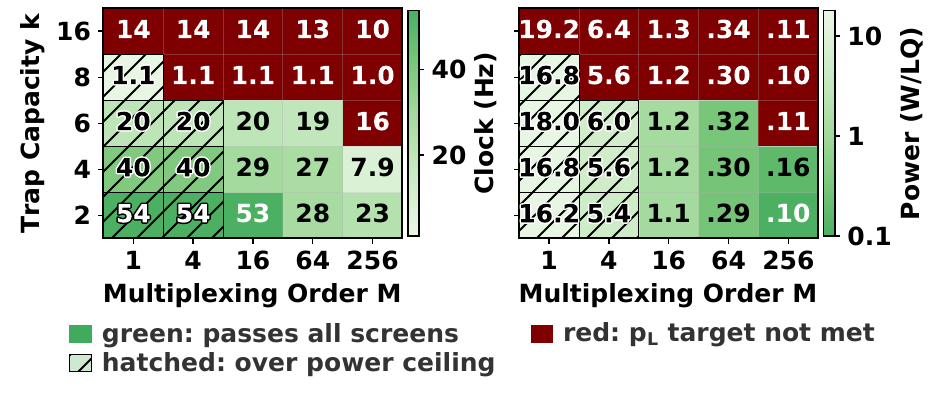}

\caption{Logical clock (left) and DAC power per logical qubit (right) of the fastest design meeting $p_L\leq10^{-8}$ in each cell of trap capacity $k$ and multiplexing order $M$, Optimistic tier, over (code, $d$, $\bar{n}_{\mathrm{th}}$); both panels describe the same configuration. Green: that configuration also meets the power ($\leq3.25$\,W per logical qubit) and clock ($\geq0.58$\,Hz) screens; hatched: it exceeds the power ceiling; dark red: no configuration meets the $p_L$ target, and the cell continues the row with the configuration of the nearest cell to its left that does (for $k=16$, the row's most reliable configuration). Along every row, clock and power both fall with $M$. At $(k,M)=(2,4)$ and $(4,4)$ a slower BB $[[72,12,6]]$ configuration also passes all screens; Table~\ref{tab:fig5-winners} lists every plotted configuration and the fastest and lowest-power passing ones. \textbf{$(k,M)=(2,16)$ is the only cell where the fastest code meets every requirement: the surface code $d=5$ at $53$\,Hz and $1.1$\,W per logical qubit. Below $M=16$ it exceeds the power ceiling, and above it Bacon--Shor $d=5$ takes over as its $p_L$ bound is no longer certified within our shot budget.} }
\Description{Two heatmaps over trap capacity and multiplexing order showing the best feasible logical clock speed and the lowest feasible power per logical qubit at the Optimistic tier.}
\label{fig:result2-stepC}
\end{figure*}

\begin{figure*}
    \centering
    \includegraphics[width=\textwidth]{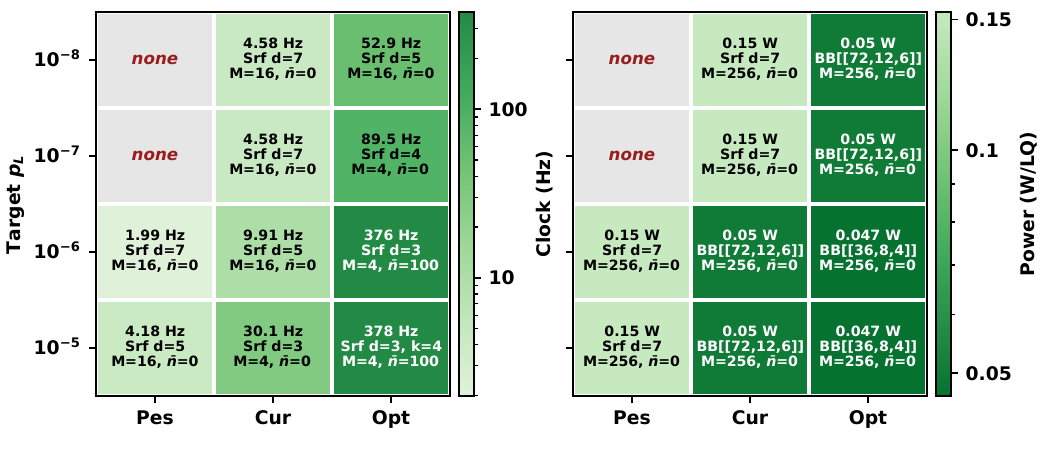}
    \caption{Best clock (left) and lowest DAC power per logical qubit (right) over all codes, distances $d$, trap capacities $k$, multiplexing orders $M$ and cooling policies $\bar{n}_{\mathrm{th}}$, per technology tier and target logical error rate, among configurations that also meet the power ($\leq3.25$\,W per logical qubit) and clock ($\geq0.58$\,Hz) screens. Each cell gives the winning value and configuration ($k=2$ unless stated). (Left) the winners for best clock are surface codes with moderate multiplexing. (Right) the winners for best power are high-rate or high-distance codes at high multiplexing. \textbf{Cooling before every gate block ($\bar{n}_{\mathrm{th}}=0$) wins 18 and $k=2$ wins 19 of the 20 feasible cells; no configuration meets targets of $10^{-7}$ and below at the Pessimistic tier, and $10^{-8}$ costs $11.6\times$ in clock at Current ($4.58$\,Hz) versus Optimistic ($52.9$\,Hz).} }
    \Description{Two grids over technology tier and target logical error rate giving the fastest and the lowest-power feasible configuration in each cell.}
    \label{fig:result2-stepC-fig2}
\end{figure*}

In Figure~\ref{fig:qec_expert_trajectories}, six codes reach $p_L\leq10^{-8}$ under the Optimistic tier, all within the cold-stage power ceiling: surface $d=5$ and $7$ ($52.9$ and $26.3$\,Hz), Bacon--Shor $d=5$ and $7$ ($29.2$ and $11.6$\,Hz), HGP $5\times5$ ($22.0$\,Hz) and the $[[72,12,6]]$ BB code ($11.0$\,Hz, $0.28\,\mathrm{W/LQ}$).  \textbf{At the Current tier only surface $d=7$ reaches $10^{-8}$ with $4.58\,$Hz and $2.04\,\mathrm{W/LQ}$}, and it continues to pass at $M=64$ and $256$ ($4.30$ and $3.46$\,Hz, down to $0.15\,\mathrm{W/LQ}$). At $M=16$ it would complete the $10^5$-cycle early-FTQC workload in $6.1$\,h, with $7.9\times$ headroom on the clock floor, $3.3\times$ on the $p_L$ bound and $1.6\times$ on the power ceiling, and by our power model it fits the ceiling for any $M\geq11$. Under the Optimistic tier the surface code $d=5$ completes the workload in $31$ minutes ($91\times$ clock headroom) and the BB code in $2.5$\,h with $12\times$ power headroom.

\begin{figure*}
\centering

\includegraphics[width=0.85\textwidth]{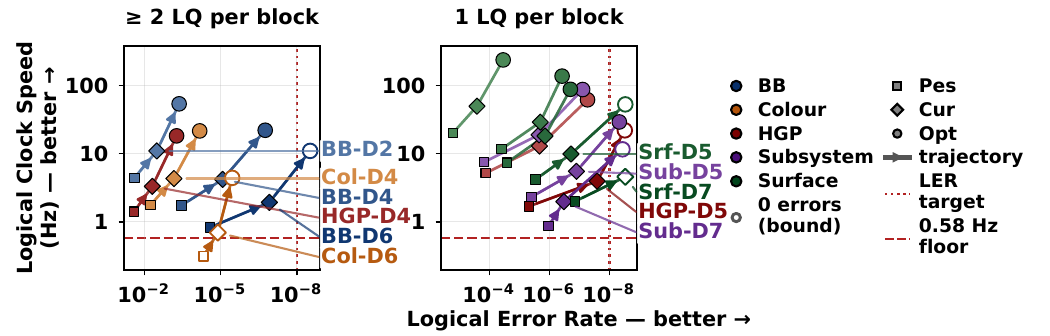}

\caption{Tier trajectories of each code (Pessimistic$\rightarrow$Current$\rightarrow$Optimistic tiers: square, diamond, circle) at fixed hardware settings $(k^*,M^*,\bar{n}_{\text{th}}^*)=(2,16,0.0)$. Left: codes whose blocks encode several logical qubits (BB, colour, HGP $7\times7$); right: single-logical-qubit codes, labelled only where they reach the target. Labels give the code family and distance $D$. Hollow markers are upper bounds from zero observed failures. The trajectories go from pessimistic (the lower-left of the axis, high error, low clock) to optimistic (the upper-right, low error, high clock). Advances in hardware improve reliability and throughput while largely preserving the ordering of the code families. Six codes at $(2,16,0.0)$ reach the target above the $0.58$\,Hz floor (\S\ref{sec:early-ft-screen}) in the Optimistic tier, but \textbf{only the distance-7 surface code meets all the requirements of \S\ref{sec:early-ft-screen} at the Current tier.}
}
\Description{Scatter of logical error rate against logical clock speed for each code across three technology tiers, with the target error rate and minimum clock marked.}

\label{fig:qec_expert_trajectories}
\end{figure*}

\paragraph{Do the benefits of WISE's all-to-all connectivity improve logical-qubit performance for QEC codes that exploit richer connectivity?}

Across all technology tiers, surface codes consistently achieve the highest logical clock speeds at a given logical error rate: every feasible clock winner in Figure~\ref{fig:result2-stepC-fig2} is a surface code, reflecting their comparatively low execution overhead despite WISE's all-to-all routing. In contrast, code families that exploit richer connectivity, particularly qLDPC codes, incur additional scheduling and reconfiguration overhead, reducing logical throughput. Consequently, all-to-all connectivity alone does not translate into faster logical execution.

\paragraph{Which QEC code families remain feasible under WISE constraints?}
Surface, Bacon--Shor, HGP and BB codes all have feasible design points. BB's high encoding rate cuts the power per logical qubit by $4\times$ relative to the surface code $d=5$ ($0.28$ vs.\ $1.12\,\mathrm{W/LQ}$) at about a fifth of its clock speed. Within the power ceiling, surface codes give the highest logical throughput and are the only family that passes at the Current tier.

\subsection{Finding Design Targets}
\label{sec:result3_dominant}
\textbf{Method:} We decompose the physical error budget of the $[[72,12,6]]$ BB code (144 ions) at the design point ($k=2$, $M=16$, $\bar{n}_{\mathrm{th}}=0.0$) by mechanism (\S\ref{sec:noise-model}), as the expected number of physical faults per syndrome-extraction round implied by the noise model (operation counts times per-operation error; Figure~\ref{fig:error-budget}). We separately vary each physical parameter over its tier range (Table~\ref{tab:tech-tiers}). The full sensitivity of each physical parameter on each noise mechanism can be found in Figure~\ref{fig:q3_sensitivity}. These budgets rank physical fault sources; they do not by themselves attribute logical failures to mechanisms.

\textbf{Result:} State preparation and measurement (SPAM) dominate the physical error budget both per operation and per syndrome extraction, carrying about $78\%$ of the expected faults of a round at the Current tier. SPAM and cooling errors also have the highest sensitivity between the Pessimistic and Optimistic tiers: moving the measurement or the reset error between its tier endpoints changes the expected faults per syndrome-extraction round by $8.9\times10^{-2}$ each, and the cooling error by $2.0\times10^{-2}$. Improving SPAM is therefore the most effective single design target: bringing measurement and reset alone from Current ($10^{-4}$) to Optimistic ($10^{-5}$) values removes $70\%$ of the expected physical faults of a round ($3.4\times$ fewer), whereas bringing every other mechanism to its Optimistic value removes only $19\%$. The remaining parameters follow, in descending order: axial frequency ($7.0\times10^{-3}$), gate
scattering ($5.2\times10^{-3}$), recool time
($3.5\times10^{-3}$), electrode charging time ($3.2\times10^{-3}$), MS gate time ($1.9\times10^{-3}$) and spectral exponent
($5.8\times10^{-4}$), with every remaining parameter below $10^{-4}$. The crosstalk parameters have no effect at $k=2$ because a two-ion trap holds no spectator ions.

\subsection{Comparison with locally controlled QCCD}
\label{sec:result3_qccd_comparison}

\textbf{Method: }On the same distance-7 surface code, noise model, Current tier, $p_L$ target and cooling policy as the sweep (cooling before every gate block), we compare the heuristic compiler at WISE's own point $(k,M,\bar n_{\mathrm{th}})=(2,128,0)$, our SAT compiler on the same hardware, our SAT compiler with $M=16$, and locally controlled QCCD ($M=1$, no DAC-sharing).

\textbf{Result:} On unchanged hardware, SAT compilation raises WISE's logical clock from $1.94$ to $3.98\,\mathrm{Hz}$ and lowers $p_L$ from $3.3\times10^{-6}$ to below the $10^{-8}$ target. Choosing $M=16$ raises the clock by a further $15\%$, to $4.58\,\mathrm{Hz}$, while keeping $p_L$ below the target and the power within the ceiling at $2.04\,\mathrm{W/LQ}$. Locally controlled QCCD reaches $12.8\,\mathrm{Hz}$, $2.8\times$ faster, but needs a DAC for every electrode: with ${\approx}10$ static and $10$--$20$ dynamic electrodes per zone (\S\ref{sec:arch-params}), the $100$ zones of this code block draw $60$--$90\,\mathrm{W}$ per logical qubit at $30\,\mathrm{mW}$ per DAC, $30$--$45\times$ the WISE design and far beyond the $3.25\,\mathrm{W}$ ceiling. Compilation substantially improves WISE but does not remove the latency imposed by its switched and multiplexed control.
Our findings are consistent with~\cite{Jones_2026}, which also reports slower logical clock speeds for WISE. Although SAT compilation could also improve QCCD baselines, it cannot remove the control-wiring bottlenecks that motivate WISE.

\subsection{Isolating WISE's Slowdown}
\label{sec:result_isolating_wise_slowdown}
\textbf{Method:} We determine how much of the ${\approx}3\times$ cycle-time gap to locally controlled QCCD is due to switched control versus multiplexed control, by decomposing the compiled WISE cycle of the same distance-7 surface code, and of the $[[72,12,6]]$ BB code, into transport, recooling, electrode charging, gate pulses and measurement/reset (Figure~\ref{fig:cycle-breakdown}).
\textbf{Result:} We show that WISE's switched control that leads to global odd-even ion swapping is the main contributor to cycle time. For the distance-7 surface code at the Current tier and $M=16$, transport ($63.8\%$) and recooling ($29.9\%$) make up $93.7\%$ of the cycle, against $2.2\%$ for multiplexed electrode charging; for the BB code the shares are $63.7\%$, $31.7\%$ and $2.4\%$. Removing charging entirely would therefore speed up these schedules by at most $2.5\%$, whereas removing transport and recooling would allow up to $16$--$22\times$. Neither suffices alone: removing only transport gives at most $2.8\times$ and removing only recooling at most $1.5\times$. At the Optimistic tier, faster transport shifts weight to recooling ($43$--$45\%$ of the cycle against $52\%$ for transport), so transport alone yields at most $2.1\times$ while removing both yields up to $21$--$33\times$; a QEC-first redesign must therefore address routing and recooling together. This shows up in the sensitivity analysis also, with ion swap time causing the largest slowdowns at $110\,\mathrm{ms}$ per syndrome-extraction round between pessimistic and optimistic tiers, followed by cooling time at $62\,\mathrm{ms}$ and static-electrode charging time due to multiplexing at $10\,\mathrm{ms}$.

\subsection{Limits of Multiplexed Control}
\label{sec:result5_multiplexing_limits}

\textbf{Method: }We re-time the same SAT-compiled schedules with all transport \emph{and} all recooling set to zero, leaving only the gate pulses, the per-layer sample-and-hold charging latency $t_{\mathrm{sc}}=M\,t_{\mathrm{ec}}$, measurement and reset
(\S\ref{sec:gate-time}). For the same gate decomposition, layer structure and non-overlapped charging policy, this bounds from below \emph{any routing fabric and any cooling policy} driven by the same multiplexed control, so it
isolates what multiplexing alone costs. Schemes that charge electrodes during other operations, as discussed for WISE~\cite{Malinowski2023}, or that use fewer gate layers, fall outside this bound.
These re-timings are performed at fixed $k=2,\bar{n}_{\mathrm{th}}=0.0$ and
for the $[[72,12,6]]$ BB code (144 ions) and the distance-7 surface code (97 qubits) only. We sweep $M\in\{1,16,128\}$ and test the Current technology tier.

\textbf{Result:} We find that raising the multiplexing order from $1$ to $128$ lengthens the
transport/cooling-free syndrome-extraction round from $1.9\to18.7$\,ms for the BB code and $1.3\to6.7$\,ms for the surface code $d=7$ at
the Current tier, in exchange for a $57$--$110\times$ reduction in power per
logical qubit. For scale, an $18.7\,$ms BB round is over $10,000\times$ the $1\,\mu$s surface-code round assumed by resource estimates that define early fault tolerance~\cite{gidney2025factor2048bitrsa,gidney2025million,Litinski2019} (demonstrated at $1.1\,\mu$s on superconducting hardware~\cite{Acharya2024}), ${\sim}190\times$ the $100\,\mu$s operation time of the Azure Quantum Resource Estimator's trapped-ion model~\cite{beverland2022assessingrequirementsscalepractical,vandam2024usingazurequantumresource}, and $19\times$ the $1$\,ms syndrome-extraction round assumed for recovering a P-256 private key within 264 days~\cite{cain2026shorsalgorithmpossible10000}.

\textbf{Our recommended multiplexing factor of $M=16$ leaves a transport/cooling-free round of $3.9\,$ms for the BB code ($2.0\,$ms for the surface code $d=7$) while reducing power $P_\mathrm{LQ}$ per logical qubit from $3.64\to0.28\,$W. Relative to $M=128$, this choice gives back $14.8\,$ms per round for only $0.21\,\mathrm{W}$ per logical qubit, demonstrating the benefit of architectural co-design.} However, even with co-design, if the balanced P-256 architecture of~\cite{cain2026shorsalgorithmpossible10000} took the same number of rounds on both platforms, recovering a private key within 2 days would require a $\tfrac{264\, \mathrm{days}}{2 \, \mathrm{days}}\times \tfrac{3.92\, \mathrm{ms}}{1.0 \, \mathrm{ms}} \approx 520\times$ reduction in logical circuit depth on BB rounds. Table~\ref{tab:application-rescaling} repeats this rescaling for the surface code and for three levels of the WISE schedule: gates, measurement and charging only (transport and recooling removed), with recooling added, and the full WISE round. \textbf{Even with transport and recooling removed, current hardware needs a $260$--$520\times$ reduction in logical depth; with recooling it needs $1{,}500$--$4{,}100\times$, and with the full WISE round $4{,}100$--$11{,}400\times$.} Optimistic hardware lowers these factors by $4$--$9\times$, to $31$--$60\times$ without transport and recooling and $500$--$2{,}000\times$ for full rounds. Recooling alone multiplies the required reduction by $6$--$16\times$, so cooling, not only routing, separates WISE from the $1$\,ms rounds such application estimates assume. 

\begin{table*}[t]
\centering
\footnotesize
\caption{Mean syndrome-round time at $(k,M,\bar n_{\mathrm{th}})=(2,16,0)$ and the logical-depth reduction needed to run the $264$-day P-256 architecture of~\cite{cain2026shorsalgorithmpossible10000} (1\,ms rounds) within 48\,h on these rounds, for the same number of logical rounds. ``gates'': gate pulses, charging, measurement and reset only; ``+cool'': adds recooling; ``full'': the complete WISE round including transport.}
\label{tab:application-rescaling}
\begin{tabular}{@{}llrrrrrr@{}}
\toprule
 & & \multicolumn{3}{c}{Round time (ms)} & \multicolumn{3}{c}{Required depth reduction} \\
\cmidrule(lr){3-5}\cmidrule(l){6-8}
Tier & Code & gates & +cool & full & gates & +cool & full \\
\midrule
Current & Surface $d{=}7$ & 1.98 & 11.29 & 31.19 & 261$\times$ & 1{,}491$\times$ & 4{,}117$\times$ \\
Current & BB $[[72,12,6]]$ & 3.92 & 31.32 & 86.32 & 518$\times$ & 4{,}135$\times$ & 11{,}394$\times$ \\
Optimistic & Surface $d{=}7$ & 0.26 & 2.59 & 5.43 & 35$\times$ & 342$\times$ & 717$\times$ \\
Optimistic & BB $[[72,12,6]]$ & 0.46 & 7.31 & 15.17 & 60$\times$ & 965$\times$ & 2{,}002$\times$ \\
Optimistic & Surface $d{=}5$ & 0.24 & 2.35 & 3.78 & 31$\times$ & 310$\times$ & 499$\times$ \\
\bottomrule
\end{tabular}
\end{table*}

\section{Conclusion}

Substantial progress has been made in trapped-ion hardware~\cite{Srinivas_2021,oxfordionics_Loschnauer2024,Mordini_2025, Corsetti_2026, Knollmann_2024, valentini2024demonstrationtwodimensionalconnectivityscalable, IonQTempo2025, helios2025, hughes2025trappediontwoqubitgates9999, sotirova2024highfidelityheraldedquantumstate}, compiler infrastructures~\cite{muzzletheshuttle, murali2020architectingnoisyintermediatescaletrapped, MOVELESS_khan2025movelessminimizingoverheadqccds,russon2026scalingqubitmappingrouting, SSYNC_QCCD, Schieretal2026, AdaptiveRouting2026Ovide}, controller architectures~\cite{Malinowski2023, Stuart2019, Tothetal2026,ohira2026cryogenictimedivisionmultiplexedvoltagecontrol}, and QEC~\cite{BaconShor2006, Bombin2006,Fowler2012, Fowler2018LowOQ, TillichZemor2014, Litinski2019, bravyi2024highthreshold, Panteleev2022, cain2026shorsalgorithmpossible10000}, but these areas have largely evolved independently. We join these lines of work through WISER, a cross-layer resource estimation framework for co-design of scalable trapped-ion systems.

Our results read at two levels. Level~(i), specific to WISE's \emph{switched control} fabric: reduced routing time through SAT-based compilation (\S\ref{sec:result1}) and co-design (\S\ref{sec:result2}) improves WISE's logical clock speed and error rate, yet on current hardware only distance-7 surface codes meet the early FTQC screen, at under $5$\,Hz. \S\ref{sec:result3_dominant} provides design targets across low-level physical parameters, \S\ref{sec:result3_qccd_comparison} quantifies the remaining gap to locally controlled QCCD, and \S\ref{sec:result_isolating_wise_slowdown} attributes it to WISE's switched control, through restricted globally broadcast transport and the recooling it requires. Level~(ii), common to \emph{all multiplexed control} with non-overlapped charging (covering scalable trapped-ion systems beyond WISE): without routing and cooling, sample-and-hold multiplexing alone lengthens a syndrome-extraction round from $1.9\,$ms to $18.7\,$ms for the BB code, and from $1.3\,$ms to $6.7\,$ms for the surface code $d=7$, when going from $M=1$ to $M=128$. We then show that architecture co-design recovers most of the cost of multiplexing (\S\ref{sec:result5_multiplexing_limits}). 

\paragraph{Scope and Limitations.} The estimates are comparative under one model, not exact hardware predictions: the noise model projects device-physics mechanisms onto Pauli channels, samples faults independently even where they share a physical source such as a DAC, and adds single-qubit, measurement and reset infidelities as constant penalties, and no device currently achieves our parameter selection simultaneously at scale. Likewise, the $325\,$W cold-stage budget for electronic control (all DACs cold, one dynamic-waveform bank per processor) and the $48$-hour uninterrupted execution window~(\S\ref{sec:early-ft-screen}) are projections and have not been validated experimentally.\label{sec:discussion-power}

The relative standing of the code families may change once logical gates along with $T$-state preparation and consumption~\cite{beverland2022assessingrequirementsscalepractical,Huggins2025FLASQ}, together with a scalable real-time decoder are included. We exclude classical decoder runtime from our screen: WISE's millisecond-scale rounds leave far more time than representative decoding latencies~\cite{Higgott_2025}, unlike superconducting architectures whose ${\sim}1\,\mu$s cycle risks a syndrome backlog~\cite{Skoric_2023,caune2024demonstratingrealtimelowlatencyquantum,Acharya2024}. Likewise, logical memory isolates encoded-qubit performance without gate-specific assumptions, providing a common baseline for comparing code families~\cite{beverland2022assessingrequirementsscalepractical}.

\paragraph{Future Work.} \emph{(i) A QEC-first fabric.} Swap routing and recooling dominate WISE's cycle time (\S\ref{sec:result_isolating_wise_slowdown}), so extending the SAT formulation of \S\ref{sec:sat-routing} to QEC-first fabrics, co-designed with a wiring scheme that reaches feasible power budgets without sacrificing logical clock speed, is the most direct follow-up; any globally controlled design with non-overlapped charging must budget for the level-(ii) latency, which at the multiplexing orders needed for a meaningful wiring reduction leaves a millisecond-scale syndrome-extraction round on its own. Gates that tolerate motional excitation~\cite{hughes2025trappediontwoqubitgates9999} could also relax the recooling that WISE schedules currently require.

\emph{(ii) Memory experiments on WISE-class hardware.} Experimental characterisation of multiplexed trapped-ion systems will further inform our noise model and subsequent resource estimates. Memory experiments are particularly appropriate because wiring and power constraints limit monolithic QCCD systems to approximately $10^4$ physical qubits~\cite{Malinowski2023,Jones_2026}, restricting near-term devices to a few logical qubits at moderate code distance, a regime in which logical-memory performance dominates system capability. Quantum memory is also directly relevant to practical near-term applications. For instance, quantum networking and key distribution using repeaters~\cite{Azuma2023QuantumRepeaters} requires nodes to maintain quantum states over long timescales.

\emph{(iii) An end-to-end resource model.} Logical computation adds further overheads~\cite{Horsman2012,Litinski2019,Fowler2018LowOQ} that need accounting for. Our estimates suggest that executing these workloads with trapped ions will likely require modular architectures~\cite{Stephenson2020IonPhoton,Mohseni2024HowTB}, such as
MUSIQC~\cite{MonroeModular2014} QCCD modules joined by photonic interconnects,
rather than a single monolithic device; since each node must independently maintain
high-quality logical memory, the intra-module optimisations studied here apply
directly, and we expect our main architectural conclusions to carry over,
namely the dominance of execution time as the bottleneck, the advantage of small
trap capacity and the viability of high-rate qLDPC codes, though absolute cycle
times and error rates will grow once computation and inter-module links are included. Extending WISER to logical architecture proposals~\cite{yoder2025tourgrossmodularquantum,IonQWalkingCat,bhardwaj2026highrateqldpcprocessors} mapped to distributed trapped ions is an important direction for future work.

\appendix
\newpage
\begin{table*}[t]
    \centering
    \footnotesize
        \caption{Clause families for the SAT formulation of QMR.
        }
    \label{tab:sat-clauses-structural}
    \setlength{\tabcolsep}{7pt}
    \begin{tabular}{p{0.47\textwidth} p{0.47\textwidth}}
        \hline

        \multicolumn{2}{c}{{\textbf{\emph{Structural validity}}}} \\
        \hline\hline

        \begin{minipage}[t]{0.47\textwidth}
        \vspace{0pt}
        \raggedright

        {\textbf{Permutation (cell): \hfill $O(R\, P_{\mathrm{max}}\, N^2)$}} \\
        For every encoded slice $(r,p,d,c)$, enforce
            $\textsc{ExactlyOne}(\{a(r,p,d,c,i): i \in I\})$.
        \textit{Every cell contains exactly one active ion at each encoded time slice; in spectator-pruned subproblems, cells may be empty but never multiply occupied, so this relaxes to $\textsc{AtMostOne}$.}

\smallskip\hrule\smallskip

        {\textbf{Permutation (ion): \hfill $O(R\, P_{\mathrm{max}}\, N^2)$}} \\
        For every encoded slice $(r,p,i)$, enforce
            $\textsc{ExactlyOne}(\{a(r,p,d,c,i): (d,c)\in\mathrm{Grid}\}).$
        \textit{Each ion occupies exactly one cell at every encoded time step.}

\smallskip\hrule\smallskip

        {\textbf{Phase monotonicity: \hfill $O(R\,P_\mathrm{max})$}} \\
        For all $r$ and $p<P_{\mathrm{max}}-1$, enforce
            $\phi(r,p)\Rightarrow\phi(r,p{+}1)$.
        \textit{Once a round switches to vertical phase, it cannot return to horizontal phase (an $H^{*}V^{*}$ phase order; see Proposition~\ref{prop:hv-colocation}).}

\smallskip\hrule\smallskip
        {\textbf{H/V gating: \hfill $O(R\,P_{\mathrm{max}}\,N)$}} \\
        For every horizontal comparator $(d,c)$ and every vertical comparator $(d,c)$, enforce $\phi(r,p)\Rightarrow \neg s_h(r,p,d,c)$ and $\neg\phi(r,p)\Rightarrow \neg s_v(r,p,d,c)$.

        \textit{Horizontal swaps may fire only in horizontal phase, and vertical swaps may fire only in vertical phase.}

        \smallskip\hrule\smallskip

        {\textbf{H/V parity gating: \hfill $O(R\,P_{\mathrm{max}}\,N)$}} \\
        For every horizontal comparator $(d,c)$ and vertical comparator $(d,c)$ whose parity does not match the pass parity, enforce
            $c\bmod 2 \neq p\bmod 2 \;\Rightarrow\; \neg s_h(r,p,d,c)$,
            $d\bmod 2 \neq p\bmod 2 \;\Rightarrow\; \neg s_v(r,p,d,c)$.
        \textit{Only the odd or even comparator subset selected by the global odd-even clock may fire in a pass.}

              \smallskip\hrule\smallskip

        {\textbf{MS-pair end in same trap: \hfill $O(R\,|\mathcal{P}|\,n + R\,|\mathcal{P}|\,B)$}} \\
        For each required pair $(i_1,i_2)\in\mathcal{P}_r$, every row $d$, and every block $b$, enforce
            $\mathrm{row\_end}(r,i_1,d)\Leftrightarrow \mathrm{row\_end}(r,i_2,d)$ and
            $w_{\mathrm{end}}(r,i_1,b)\Leftrightarrow w_{\mathrm{end}}(r,i_2,b)$.
        \textit{Paired ions must finish in the same row and the same local block, hence in the same trap.}
        \smallskip\hrule\smallskip

        {\textbf{Inter-round chain: \hfill $O(R\,N^2)$}} \\
        For every round boundary and every $(d,c,i)$, enforce
            $a(r,max,d,c,i) \Leftrightarrow a(r{+}1,0,d,c,i)$.
        \textit{The final layout of one round is the initial layout of the next.}

         \end{minipage}
        &
        \begin{minipage}[t]{0.47\textwidth}
        \vspace{0pt}
        \raggedright

           {\textbf{Initial layout: \hfill $O(N^2)$}} \\
        If input cell $(d,c)$ initially contains ion $i_0$, assert
                $a(0,0,d,c,i_0)$,
            and
            $\neg a(0,0,d,c,i)\;\; \forall i \neq i_0$.
        \textit{The first SAT slice is pinned exactly to the incoming layout.}

        \smallskip\hrule\smallskip

        {\textbf{H/V swap semantics: \hfill $O(R\,P_\mathrm{max}\,N^2)$}} \\
        For every horizontal or vertical comparator and ion $i$, enforce
        \[
            s_h(r,p,d,c)\wedge a(r,p,d,c{+}1,i)\Rightarrow a(r,p{+}1,d,c,i),
        \]
        \[
            s_h(r,p,d,c)\wedge a(r,p,d,c,i)\Rightarrow a(r,p{+}1,d,c{+}1,i),
        \]
        \[
            s_v(r,p,d,c)\wedge a(r,p,d{+}1,c,i)\Rightarrow a(r,p{+}1,d,c,i),
        \]
        \[
            s_v(r,p,d,c)\wedge a(r,p,d,c,i)\Rightarrow a(r,p{+}1,d{+}1,c,i).
        \]
        \textit{If a comparator fires, the two incident ions are forced to swap positions in the next slice.}

  \smallskip\hrule\smallskip

        {\textbf{Copy clauses: \hfill $O(R\,P_{max}\,N^2)$}} \\
        For each cell $(d,c)$ and ion $i$, let the incident comparators be $s_{\mathrm{left}}$ and $s_{\mathrm{right}}$ (or $s_{\mathrm{up}}$ and $s_{\mathrm{down}}$).
        \[
            (\neg\phi(r,p)\wedge \neg s_{\mathrm{left}}\wedge \neg s_{\mathrm{right}}\wedge a(r,p,d,c,i))
            \Rightarrow a(r,p{+}1,d,c,i),
        \]
        \[
            (\phi(r,p)\wedge \neg s_{\mathrm{up}}\wedge \neg s_{\mathrm{down}}\wedge a(r,p,d,c,i))
            \Rightarrow a(r,p{+}1,d,c,i),
        \]
        together with the converse implications from $a(r,p{+}1,d,c,i)$ back to $a(r,p,d,c,i)$ under the same phase and no-swap guards.
        \textit{An ion stays in place whenever no incident comparator of the active phase fires.}

          \smallskip\hrule\smallskip

        {\textbf{Row-end / Block-end helpers: \hfill $O(R\,N\,(n+B))$}} \\
        For every round $r$, ion $i$, row $d$, and local block $b$, enforce $\mathrm{row\_end}(r,i,d)\Leftrightarrow \bigvee_{c=0}^{m-1} a(r,max,d,c,i)$ and
            $w_{\mathrm{end}}(r,i,b)\Leftrightarrow \bigvee_{(d,c)\in\mathrm{block}(b)} a(r,max,d,c,i)$,
        for valid blocks; degenerate edge blocks that are not fully supported are forced false.
        \textit{These helper literals summarize the row and local trap block in which each ion finishes the round.}

          \smallskip\hrule\smallskip

     {\textbf{Return to start:  \hfill $O(N^2)$} \\
        For every ion $i$, row $d$, col $c$, enforce
            $a(1,0,d,c,i)\Leftrightarrow a(R,p_{max},d,c,i)$
        }

         \end{minipage}

        \\

        \\

        \hline\hline

        \begin{minipage}[t]{0.47\textwidth}
        \vspace{0pt}
        \raggedright

        {\textbf{Helpers: \hfill $O(R\,P_{\mathrm{max}}\,(N+1))$}} \\
        For every $(r,p)$, enforce
        \[
            u(r,p)\Leftrightarrow
            \bigvee_{d,c}\bigl(s_h(r,p,d,c)\vee s_v(r,p,d,c)\bigr),
        \]
        \[
            v_{\mathrm{active}}(r,p)\Leftrightarrow u(r,p)\wedge \phi(r,p).
        \]
        \textit{$u(r,p)$ marks active passes $p$, while $v_{\mathrm{active}}(r,p)$ marks passes that are both active and vertical.}
        \\

        \end{minipage}

        &    \begin{minipage}[t]{0.47\textwidth}
        \vspace{0pt}
        \raggedright
        {\textbf{Weighted bound: \hfill totaliser-sized}} \\
        If a weighted bound $B$ is requested, enforce
        \[
            \sum_{r=2}^{R}\sum_{p=0}^{max-1}
            \Bigl(u(r,p) + (w{-}1)\,v_{\mathrm{active}}(r,p)\Bigr)\leq B.
        \]

        \textit{Horizontal active passes cost 1, vertical active passes cost $w$, and binary search tightens $B$ to the minimum feasible calibrated cost. With cycle reuse, the passes of each cycle round are additionally weighted by the repetition count $C$ (Appendix~\ref{si:cycle-reuse}).}
        \\
        \end{minipage}
        \\

    \end{tabular}

\end{table*}

\section{Complete SAT-formulation}
\subsection{SAT Optimality Proof}
\label{append-optimality}

We state precisely what the SAT formulation guarantees. Its model is the odd-even sorting network of Table~\ref{tab:sat-clauses-structural}: each round orders its passes horizontal-then-vertical, uses at most $p_{\max}=2(n+m)$ passes, minimises the integer-weighted cost $\sum_r W_r$, and, under cycle reuse, returns the layout to the cycle start.

\paragraph{Tight weighted bounds via parallel search.}
The feasibility predicate in $B$ is monotone: if a schedule exists at weighted cost $B_0$, it also satisfies the formula at any bound $B \geq B_0$. We exploit this monotonicity through parallel search: exploring different $B$ values simultaneously. \textit{Hunters} find feasible QMR solutions by probing candidate upper bounds for $B$ from above, finding successively tighter SAT-feasible solutions, while \textit{Provers} systematically prove UNSAT from below at the maximum pass budget.

\paragraph{Convergence and optimality certificate.}
The search \emph{converges} when the lower bound meets the upper bound:
\begin{equation}\label{eq:convergence}
B_{\mathrm{worst\_unsat}} \;\geq\; B_{\mathrm{best\_sat}} - 1.
\end{equation}
At convergence, the solver has a SAT-certified optimal schedule at cost~$B^*$ together with an UNSAT proof that $B^* - 1$ is infeasible at $p_{\max} = 2(n{+}m)$. This constitutes an optimality certificate for the encoded model: no schedule satisfying its constraints can achieve a lower weighted cost. If the time budget expires before convergence, the solver returns the best incumbent solution with its proven lower bound, allowing the caller to assess the remaining optimality gap.

\paragraph{The phase order never makes a round infeasible.}
A general permutation of a grid needs a horizontal-vertical-horizontal sort, and some permutations cannot be realised by one horizontal phase followed by one vertical phase. An MS round, however, only requires each pair of ions to share \emph{some} trap, and for this the restricted order suffices.

\begin{proposition}\label{prop:hv-colocation}
Let a fully occupied grid have rows tiled by traps of even capacity $k$, and let the ions be partitioned into pairs (unpaired ions may be paired arbitrarily). Then one horizontal phase that permutes ions within rows, followed by one vertical phase that permutes ions within columns, places both ions of every pair in the same trap.
\end{proposition}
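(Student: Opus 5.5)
The plan is to reduce the claim to a regular bipartite edge-colouring, in the spirit of the Clos/Beneš decomposition behind the $2m+kn$ sorting bound of~\cite{Malinowski2023}. Let the grid have $R$ rows and $C$ columns. Since the rows are tiled by traps of even capacity $k$, $k\mid C$, and each aligned column pair $(2j,2j+1)$, $0\le j<C/2$, lies inside a single trap.

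It therefore suffices to reach a final layout in which every pair occupies cells $(t,2j)$ and $(t,2j+1)$ for some row $t$ and some $j$. Vertical moves never change an ion's column. So the horizontal phase must already place, for every $j$, the two ions of each pair that is assigned to $j$ one in column $2j$ and the other in column $2j+1$. The vertical phase then only has to bring the two partners into a common row.

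To construct the horizontal phase, build a multigraph $G$ on the $R$ rows. Each pair becomes an edge joining the rows of its two ions, and a pair lying within one row becomes a loop. Every row holds $C$ ions, so every vertex has degree $C$, counting a loop twice. Because $C$ is even, every vertex has even degree, so each connected component has an Euler circuit. Orienting each component along its circuit gives every row in-degree and out-degree exactly $C/2$.

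Next, form the bipartite multigraph $H$ with a ``tail'' copy and a ``head'' copy of each row, placing one edge for each oriented pair; a loop becomes an edge from a row's tail copy to its own head copy. $H$ is $C/2$-regular, so by K\"onig's edge-colouring theorem (equivalently, repeated Hall's theorem) it decomposes into $C/2$ perfect matchings $\mathcal M_0,\dots,\mathcal M_{C/2-1}$. For an edge in $\mathcal M_j$, send the tail ion to column $2j$ and the head ion to column $2j+1$. Because $\mathcal M_j$ is a perfect matching, each row receives exactly one ion in column $2j$ and one in column $2j+1$. The assignment is thus a permutation within each row, which the horizontal phase can realise.

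For the vertical phase, matching $\mathcal M_j$ has exactly $R$ edges; label them $e_0,\dots,e_{R-1}$. In column $2j$ move the tail ion of $e_t$ to row $t$, and in column $2j+1$ move the head ion of $e_t$ to row $t$. Each of these is a bijection within one column, so it is a valid vertical permutation. Afterwards every pair sits in cells $(t,2j),(t,2j+1)$, which lie in one trap, as required.

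The step I expect to need the most care is the degree bookkeeping. I have to check that loops are handled correctly, both in the Euler orientation and in $H$, and that a full grid indeed gives an exactly regular $H$, so that the matching decomposition covers every row and every column pair. The rest is routine once the reduction to column pairs is set up.
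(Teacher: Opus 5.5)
Your proposal is correct and matches the paper's proof step for step: the multigraph on rows, the Euler-circuit orientation giving in- and out-degree $C/2$, and the K\"onig decomposition into $C/2$ perfect matchings, each assigned to an aligned column pair inside one trap. The only difference is cosmetic, in the vertical phase: the paper leaves the tail column fixed and moves each head ion into its tail's row, which is a slightly simpler bijection than your relabelling of both columns to a common row $t$, but both constructions are valid.
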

\begin{proof}
Let each row have $n$ zones. Form a multigraph on the rows with one edge $(r_a,r_b)$ per pair whose ions lie in rows $r_a$ and $r_b$ (a loop if $r_a=r_b$). Each row holds $n$ ions, so the multigraph is $n$-regular with $n$ even. Since every degree is even, orienting each connected component along an Euler circuit gives each row out-degree and in-degree $n/2$. The resulting bipartite multigraph from ``out'' rows to ``in'' rows is $n/2$-regular, so by K\H{o}nig's theorem its edges split into $n/2$ perfect matchings $\mu_1,\dots,\mu_{n/2}$. Assign matching $\mu_g$ to the column pair $(2g-1,2g)$, which lies inside one trap because $k$ is even. In the horizontal phase, each row places the tail of its unique out-edge in $\mu_g$ at column $2g-1$ and the head of its unique in-edge at column $2g$; this fills every row exactly once. In the vertical phase, column $2g-1$ is left unchanged and column $2g$ moves the head of each edge $(r_a\to r_b)\in\mu_g$ from row $r_b$ to row $r_a$, a permutation because $\mu_g$ is a perfect matching. Afterwards both ions of every pair sit in row $r_a$, columns $2g-1$ and $2g$, which belong to the same trap.
\end{proof}

Odd-even transposition sorts realise these two phases in at most $n$ horizontal and $m$ vertical passes, within the horizon $p_{\max}=2(n+m)$, so every round of the encoding is feasible for the even capacities $k\in\{2,4,6,8,16\}$ we study; spectator-pruned subproblems add dummy occupants. The phase order can still exclude cheaper schedules that return to horizontal passes, and the cycle-return constraint and integer weights $w=\operatorname{round}(t_V/t_H)$ further restrict the model, so our certificates bound the routing cost within the encoded model rather than over all physically admissible WISE schedules.

\subsection{Transport primitive costs}
\label{si:transport-costs}
The costs $t_H$ and $t_V$ are obtained using timings for primitive hardware operations. Since ion swapping is a fundamental WISE movement primitive, we denote its time by $t_0$. A horizontal swap consists of five sequential primitives (shuttle--merge--rotate--split--shuttle~\cite{walther2012controlling,bowler2012coherent,Burton_2023}), giving $t_H=2.12\,t_0$ per odd or even sub-pass and  $2t_H=4.24\,t_0$  per odd--even pair. A  vertical swap moves ions through junctions, organised into $k$ serialised column buckets. Each bucket costs $5.10\,t_0$, with inter-bucket horizontal repositioning giving an amortised sub-pass cost $t_V=k(5.10\,t_0)+(k/2)t_H$. Thus $t_V/t_H\approx2.4k+k/2$: simply minimising raw swap count would ignore WISE's substantial vertical-routing penalty.
\subsection{Cycle-reuse decomposition}
\label{si:cycle-reuse}
Encoding all gate rounds in a single SAT formula is intractable at practical code distances. We exploit the fact that QEC memory experiments are dominated by a short cycle of MS rounds that repeats $C$ times (typically $C=d$), with a small number of initial and final rounds. Rather than encoding all $C$ repetitions, we decompose into two SAT calls. The first call encodes one copy of the repeating cycle plus any non-repeating start rounds, and adds a route-back round that forces the end-of-cycle ion layout to match the cycle start, enabling the solution to be replayed $C$ times without re-solving. Cycle rounds carry weight $C$ in the objective so the solver preferentially optimises the repeated portion. The second call routes any non-repeating end rounds. This removes the dependence of the number of encoded placement slices on the number of cycle repetitions; only the weighted cardinality constraint still grows with $C$. Because the layout must return to the cycle start, the resulting schedules are periodic, and optimality certificates are relative to this restriction.

\section{Implementation using Parallel Search}
\label{appendix:hunters}

\begin{figure}[t]
\centering
\resizebox{\columnwidth}{!}{
\begin{tikzpicture}[
  >=Stealth,
  arr/.style={->, thick},
  darr/.style={<->, thick},
]

\def\colA{1.0}  \def\colB{5.5}  \def\colC{10.0}

\node[draw, rounded corners=2.5pt, fill=yellow!6, line width=0.8pt,
      inner sep=3.5pt, align=center, minimum width=92mm,
      font=\large] (loop) at (5.5, 10.2) {
  \textbf{\textsc{Event Loop}}\;\;
  {\large dispatch\,$\cdot$\,poll\,$\cdot$\,
   \textcolor{red!55}{\textbf{cancel}}\,$\cdot$\,redispatch}
  \quad
  {\large\colorbox{green!12}{$B_{\mathrm{up}}$}\,
   \colorbox{blue!8}{$B_{\mathrm{lo}}$}}
  \quad{\large stop:\,$B_{\mathrm{lo}} \!\geq\! B_{\mathrm{up}}{-}1$}};

\node[draw, rounded corners=1.5pt, fill=gray!8, inner sep=2.5pt,
      font=\large, align=center, anchor=east]
  (heur) at ([xshift=-2mm]loop.west) {Greedy\\[-0.5pt]seed};
\draw[arr, gray!50] (heur) -- (loop);

\draw[arr, green!50!black]
  ([xshift=-20mm]loop.south) -- ++(0,-0.45);
\draw[arr, orange!60!red]
  (loop.south) -- ++(0,-0.45);
\draw[arr, blue!50!black]
  ([xshift=20mm]loop.south) -- ++(0,-0.45);
\draw[darr, green!50!black]
  ([xshift=-16mm]loop.south) ++(0,-0.45) -- ++(0,0.45)
  node[pos=0.3, right=0pt, font=\large, text=green!50!black,
       fill=white, inner sep=0.3pt] {$B\!\downarrow$};
\draw[darr, orange!60!red]
  ([xshift=4mm]loop.south) ++(0,-0.45) -- ++(0,0.45)
  node[pos=0.3, right=0pt, font=\large, text=orange!60!red,
       fill=white, inner sep=0.3pt] {\textsc{sat}};
\draw[darr, blue!50!black]
  ([xshift=24mm]loop.south) ++(0,-0.45) -- ++(0,0.45)
  node[pos=0.3, right=0pt, font=\large, text=blue!50!black,
       fill=white, inner sep=0.3pt] {\textsc{unsat}};

\node[draw=green!50!black, rounded corners=2.5pt, fill=green!4,
      inner sep=3pt, align=center, line width=0.6pt,
      minimum width=34mm, anchor=north]
  (prn) at (\colA, 9.2) {
  {\large\bfseries\textcolor{green!50!black}{
    Pruners} {\large(5 types)}}\\[1pt]
  {\large\itshape ``Chop \& conquer''}\\[1pt]
  {\large Tile grid; solve intra-tile;}\\[-0.5pt]
  {\large re-tile at offset; grow $\times\!2$}\\[1pt]
  {\large\bfseries Goal:\,slash
    \colorbox{green!12}{$B_{\mathrm{up}}$} fast}};

\node[draw=orange!60!red, rounded corners=2.5pt, fill=orange!4,
      inner sep=3pt, align=center, line width=0.6pt,
      minimum width=34mm, anchor=north]
  (hnt) at (\colB, 9.2) {
  {\large\bfseries\textcolor{orange!60!red}{
    Hunters} {\large(4 tiers)}}\\[1pt]
  {\large\itshape ``Track \& tighten''}\\[1pt]
  {\large Full problem, reduced $max$;}\\[-0.5pt]
  {\large fast-loose\,$\to$\,slow-tight}\\[1pt]
  {\large\bfseries Goal:\,reduce
    \colorbox{green!12}{$B_{\mathrm{up}}$}}};

\node[draw=blue!50!black, rounded corners=2.5pt, fill=blue!3,
      inner sep=3pt, align=center, line width=0.6pt,
      minimum width=34mm, anchor=north]
  (prv) at (\colC, 9.2) {
  {\large\bfseries\textcolor{blue!50!black}{
    Provers} {\large($\times N$)}}\\[1pt]
  {\large\itshape ``Prove \& certify''}\\[1pt]
  {\large Max solver power; prove}\\[-0.5pt]
  {\large no solution $\leq B$ exists}\\[1pt]
  {\large\bfseries Goal:\,raise
    \colorbox{blue!8}{$B_{\mathrm{lo}}$}}};

\begin{scope}[shift={(0.0,4.8)},
  x={(0.22cm,0cm)}, y={(0cm,0.22cm)}, z={(0.12cm,0.08cm)}]
  \fill[green!14, draw=green!50!black!50, line width=0.3pt]
    (0,0,0) -- (5,0,0) -- (5,4,0) -- (0,4,0) -- cycle;
  \foreach \i in {1,...,4}{
    \draw[green!50!black!15, line width=0.15pt] (\i,0,0)--(\i,4,0);}
  \foreach \j in {1,...,3}{
    \draw[green!50!black!15, line width=0.15pt] (0,\j,0)--(5,\j,0);}
  \fill[green!8, draw=green!50!black!50, line width=0.3pt]
    (0,4,0)--(5,4,0)--(5,4,6)--(0,4,6)--cycle;
  \fill[green!5, draw=green!50!black!50, line width=0.3pt]
    (5,0,0)--(5,4,0)--(5,4,6)--(5,0,6)--cycle;
  \draw[green!50!black, densely dashed, line width=0.6pt]
    (2.5,0,0)--(2.5,4,0);
  \draw[green!50!black, densely dashed, line width=0.4pt]
    (2.5,4,0)--(2.5,4,6);
  \draw[green!50!black!60, densely dotted, line width=0.3pt]
    (0,4,2)--(5,4,2);
  \draw[green!50!black!60, densely dotted, line width=0.3pt]
    (0,4,4)--(5,4,4);
  \node[font=\large\itshape, text=green!50!black,
        anchor=south west, inner sep=0.5pt]
    at (5.2,4,3) {$L$};
  \node[circle,fill=green!60!black,inner sep=0.9pt]
    (innerA) at (0.8,1,0){};
  \node[circle,fill=green!60!black,inner sep=0.9pt]
    (innerB) at (1.8,3,0){};
  \draw[green!60!black, line width=0.5pt] (innerA)--(innerB);
  \node[circle,fill=red!55,inner sep=0.9pt]
    (crossA) at (2,2,0){};
  \node[circle,fill=red!55,inner sep=0.9pt]
    (crossB) at (3,2,0){};
  \draw[red!45, densely dashed, line width=0.5pt] (crossA)--(crossB);
  \draw[-{Stealth[length=1.5pt]}, gray!60, line width=0.3pt]
    (0,-1.2,0) -- (3,-1.2,0)
    node[right,font=\large,text=gray!60,inner sep=0.5pt]{col};
  \draw[-{Stealth[length=1.5pt]}, gray!60, line width=0.3pt]
    (-1.0,0,0) -- (-1.0,3,0)
    node[above,font=\large,text=gray!60,inner sep=0.5pt]{row};
  \draw[-{Stealth[length=1.5pt]}, gray!60, line width=0.3pt]
    (5.6,0,0) -- (5.6,0,4)
    node[right,font=\large,text=gray!60,inner sep=0.5pt]{rounds};
\end{scope}

\node[font=\footnotesize, text=green!60!black, anchor=north, inner sep=0.5pt]
  at (\colA, 4.3)
  {\raisebox{0.1pt}{\tikz\draw[green!60!black,line width=0.5pt](0,0)--(0.25,0);}\;inner MS\,\checkmark};
\node[font=\footnotesize, text=red!55, anchor=north, inner sep=0.5pt]
  at (\colA, 4.0)
  {\raisebox{1pt}{\tikz\draw[red!45,densely dashed,line width=0.5pt](0,0)--(0.25,0);}\;cross-block MS\,{\itshape(re-tile)}};
\node[font=\footnotesize, text=green!40!black, anchor=north]
  at (\colA, 3.8) {tiles grow $\times 2$};

\begin{scope}[shift={(4.5,4.8)},
  x={(0.22cm,0cm)}, y={(0cm,0.22cm)}, z={(0.12cm,0.08cm)}]
  \fill[orange!14, draw=orange!60!red!50, line width=0.3pt]
    (0,0,0)--(5,0,0)--(5,4,0)--(0,4,0)--cycle;
  \foreach \i in {1,...,4}{
    \draw[orange!60!red!15, line width=0.15pt] (\i,0,0)--(\i,4,0);}
  \foreach \j in {1,...,3}{
    \draw[orange!60!red!15, line width=0.15pt] (0,\j,0)--(5,\j,0);}
  \fill[orange!8, draw=orange!60!red!50, line width=0.3pt]
    (0,4,0)--(5,4,0)--(5,4,6)--(0,4,6)--cycle;
  \fill[orange!5, draw=orange!60!red!50, line width=0.3pt]
    (5,0,0)--(5,4,0)--(5,4,6)--(5,0,6)--cycle;
  \node[circle,fill=orange!60!red,inner sep=0.9pt] at (1,1,0){};
  \node[circle,fill=orange!60!red,inner sep=0.9pt] at (4,3,0){};
  \draw[orange!60!red, densely dotted, line width=0.5pt]
    (1,1,0)--(4,3,0);
\end{scope}

\begin{scope}[shift={(4.5,4.8)},
  x={(0.22cm,0cm)}, y={(0cm,0.22cm)}, z={(0.12cm,0.08cm)}]
  \draw[-{Stealth[length=1.5pt]}, orange!60!red, line width=0.6pt]
    (5.5,3.5,1) -- (5.5,3.5,2.2);
  \node[font=\large, text=orange!60!red, anchor=west, inner sep=0.5pt]
    at (5.5,3.5,2.3) {$max$};
  \draw[{Stealth[length=1.5pt]}-{Stealth[length=1.5pt]},
        orange!80!black, line width=0.5pt]
    (5.5,1.5,0.5) -- (5.5,1.5,4.5);
  \node[font=\large, text=orange!80!black, anchor=west, inner sep=0.5pt]
    at (5.5,1.5,4.6) {$L$};
\end{scope}
\node[font=\footnotesize, text=orange!60!red, anchor=north]
  at (\colB, 4.2) {full volume, tighter $B$};

\begin{scope}[shift={(9.0,4.8)},
  x={(0.22cm,0cm)}, y={(0cm,0.22cm)}, z={(0.12cm,0.08cm)}]
  \fill[blue!8, draw=blue!50!black!50, line width=0.3pt]
    (0,0,0)--(5,0,0)--(5,4,0)--(0,4,0)--cycle;
  \foreach \i in {1,...,4}{
    \draw[blue!50!black!10, line width=0.15pt] (\i,0,0)--(\i,4,0);}
  \foreach \j in {1,...,3}{
    \draw[blue!50!black!10, line width=0.15pt] (0,\j,0)--(5,\j,0);}
  \fill[blue!5, draw=blue!50!black!50, line width=0.3pt]
    (0,4,0)--(5,4,0)--(5,4,6)--(0,4,6)--cycle;
  \fill[blue!3, draw=blue!50!black!50, line width=0.3pt]
    (5,0,0)--(5,4,0)--(5,4,6)--(5,0,6)--cycle;
  \draw[blue!50!black, line width=0.8pt]
    (-0.2,-0.2,0)--(5.2,4.2,0);
  \draw[blue!50!black, line width=0.8pt]
    (-0.2,4.2,0)--(5.2,-0.2,0);
  \node[font=\large\bfseries, text=blue!50!black, fill=white,
        inner sep=0.5pt, rounded corners=0.5pt]
    at (2.5,2,0) {\textsc{unsat}\,@\,$B$};
\end{scope}

\node[font=\footnotesize, text=blue!50!black, anchor=north]
  at (\colC, 4.2) {$\nexists$ schedule $\leq\!B$};

\node[draw=gray!25, rounded corners=2pt, fill=gray!3,
      inner sep=2.5pt, align=left, font=\large,
      text width=100mm, anchor=north] at (5.5, 3.55) {
  \textbf{Coordination:}\;\,
  \textcolor{green!50!black}{\textbullet}\;
    new best $B$ $\to$
    \textcolor{red!50}{cancel} stale workers
  \quad\textcolor{orange!60!red}{\textbullet}\;
    1st hunter SAT $\to$
    pruners $\to$ provers
  \quad\textcolor{blue!50!black}{\textbullet}\;
    UNSAT at $B$ $\to$
    \textcolor{red!50}{cancel} provers $B'\!\leq\!B$};

\end{tikzpicture}
}
\caption{Parallel SAT portfolio for optimal ion routing.
Each isometric block represents the spacetime volume
(columns $\times$ rows $\times$ rounds).
\textcolor{green!50!black}{\textbf{Pruners}} tile the spatial grid
(dashed boundary on front face);
{\scriptsize\color{green!60!black}inner MS} pairs within a tile
are solved optimally, while
{\scriptsize\color{red!55}cross-block MS} pairs straddling
the boundary receive a boundary preference
and are solved on re-tiling.
Temporal rounds are chunked into windows of~$L$ (top face);
sub-problems double ($\times 2$).
\textcolor{orange!60!red}{\textbf{Hunters}} tackle the full volume
with tighter~$max$.
\textcolor{blue!50!black}{\textbf{Provers}} certify $B_{\mathrm{lo}}$
via \textsc{unsat}.
Workers promote on exhaustion.}
\Description{Diagram of the parallel SAT portfolio: an event loop dispatching pruners that tile the spacetime volume, hunters that solve the full problem, and provers that certify lower bounds.}
\label{fig:sat-portfolio}
\end{figure}

\begin{figure}[t]
\centering
\resizebox{0.88\columnwidth}{!}{
\begin{tikzpicture}[>=Stealth, every node/.style={font=\footnotesize}]

\def\panR{0.8}
\def\panW{7.0}
\def\panH{4.5}
\def\panBot{0.6}

\node[font=\normalsize\bfseries, anchor=south west]
  at (\panR, \panBot+\panH+0.35) {Convergence};
\node[font=\scriptsize, text=black!60, anchor=north west]
  at (\panR, \panBot+\panH+0.55) {Steane $d{=}3$, $2{\times}8$, 9\,rds};

\draw[-{Stealth[length=2.5pt]}, line width=0.6pt]
  (\panR, \panBot) -- (\panR, \panBot+\panH+0.25)
  node[above, font=\scriptsize] {$B$};
\draw[-{Stealth[length=2.5pt]}, line width=0.6pt]
  (\panR, \panBot) -- (\panR+\panW+0.3, \panBot)
  node[right, font=\scriptsize] {time};

\def\By#1{\panBot+\panH*(#1)/500}

\foreach \bv/\bl in {0/0, 100/100, 200/200, 300/300, 483/483}{
  \draw[gray!20, line width=0.15pt]
    (\panR, {\By{\bv}}) -- (\panR+\panW, {\By{\bv}});
  \node[font=\scriptsize, text=gray!50, anchor=east, inner sep=0.5pt]
    at (\panR-0.08, {\By{\bv}}) {\bl};}

\def\Tx#1{\panR+\panW*(#1)/10}
\foreach \ti/\tl in {0/0s, 2/4s, 4/95s, 6/5m, 9/17.7m}{
  \node[font=\scriptsize, text=gray!50, anchor=north, inner sep=0.5pt]
    at ({\Tx{\ti}}, \panBot-0.08) {\tl};}

\draw[green!50!black, line width=1.2pt]
  ({\Tx{0}}, {\By{483}}) -- ({\Tx{1}}, {\By{483}})
  -- ({\Tx{1}}, {\By{222}}) -- ({\Tx{2}}, {\By{222}})
  -- ({\Tx{2}}, {\By{153}}) -- ({\Tx{3}}, {\By{153}})
  -- ({\Tx{3}}, {\By{129}}) -- ({\Tx{4}}, {\By{129}});
\draw[orange!60!red, line width=1.2pt]
  ({\Tx{4}}, {\By{129}})
  -- ({\Tx{4}}, {\By{114}}) -- ({\Tx{5}}, {\By{114}})
  -- ({\Tx{5}}, {\By{102}}) -- ({\Tx{6}}, {\By{102}})
  -- ({\Tx{6}}, {\By{91}})  -- ({\Tx{7}}, {\By{91}})
  -- ({\Tx{7}}, {\By{85}})  -- ({\Tx{8}}, {\By{85}})
  -- ({\Tx{8}}, {\By{76}})  -- ({\Tx{9}}, {\By{76}})
  -- ({\Tx{9}}, {\By{68}})  -- ({\Tx{10}}, {\By{68}});
\node[font=\scriptsize\bfseries, text=green!50!black, anchor=south west,
      inner sep=0.5pt]
  at ({\Tx{0}+0.1}, {\By{483}}) {$B_{\mathrm{up}}$};

\draw[blue!50!black, line width=1.2pt]
  ({\Tx{0}}, {\By{0}}) -- ({\Tx{4}}, {\By{0}})
  -- ({\Tx{4}}, {\By{5}})  -- ({\Tx{5}}, {\By{5}})
  -- ({\Tx{5}}, {\By{22}}) -- ({\Tx{6}}, {\By{22}})
  -- ({\Tx{6}}, {\By{29}}) -- ({\Tx{7}}, {\By{29}})
  -- ({\Tx{7}}, {\By{31}}) -- ({\Tx{8}}, {\By{31}})
  -- ({\Tx{8}}, {\By{33}}) -- ({\Tx{9}}, {\By{33}})
  -- ({\Tx{9}}, {\By{50}}) -- ({\Tx{10}}, {\By{50}});
\node[font=\scriptsize\bfseries, text=blue!50!black, anchor=north west,
      inner sep=0.5pt]
  at ({\Tx{5.1}}, {\By{20}}) {$B_{\mathrm{lo}}$};

\fill[yellow!12, opacity=0.6]
  ({\Tx{9}}, {\By{50}}) rectangle ({\Tx{10}}, {\By{68}});
\node[font=\scriptsize, text=gray!60, inner sep=0pt]
  at ({\Tx{9.5}}, {\By{59}}) {gap};

\node[diamond, fill=gray!50, inner sep=1pt]
  at ({\Tx{0}},{\By{483}}){};
\node[circle, fill=green!50!black, inner sep=1.3pt]
  at ({\Tx{1}},{\By{222}}){};
\node[circle, fill=green!50!black, inner sep=1.3pt]
  at ({\Tx{3}},{\By{129}}){};
\node[regular polygon, regular polygon sides=3,
      fill=green!60!black, inner sep=0.8pt]
  at ({\Tx{2}},{\By{153}}){};
\node[circle, fill=orange!60!red, inner sep=1.3pt]
  at ({\Tx{4}},{\By{114}}){};
\node[circle, fill=orange!60!red, inner sep=1.3pt]
  at ({\Tx{7}},{\By{85}}){};
\node[rectangle, fill=orange!80!red, inner sep=1.3pt]
  at ({\Tx{5}},{\By{102}}){};
\node[rectangle, fill=orange!80!red, inner sep=1.3pt]
  at ({\Tx{6}},{\By{91}}){};
\node[rectangle, fill=orange!80!red, inner sep=1.3pt]
  at ({\Tx{9}},{\By{68}}){};
\node[regular polygon, regular polygon sides=3,
      fill=orange!50!red, inner sep=0.8pt]
  at ({\Tx{8}},{\By{76}}){};
\foreach \step/\bval in {4/5, 5/22, 6/29, 7/31, 8/33, 9/50}{
  \node[rectangle, fill=blue!50!black, inner sep=1pt]
    at ({\Tx{\step}},{\By{\bval}}){};}

\node[font=\scriptsize\itshape, text=gray!50, anchor=south, inner sep=0pt]
  at ({\Tx{0}}, {\By{490}}) {heur};
\node[font=\scriptsize\itshape, text=green!50!black, anchor=south west, inner sep=0pt]
  at ({\Tx{1}+0.04}, {\By{225}}) {sp};
\node[font=\scriptsize\itshape, text=green!60!black, anchor=south west, inner sep=0pt]
  at ({\Tx{2}+0.04}, {\By{156}}) {med};
\node[font=\scriptsize\itshape, text=orange!60!red, anchor=south west, inner sep=0pt]
  at ({\Tx{4}+0.04}, {\By{117}}) {hi};
\node[font=\scriptsize\itshape, text=orange!80!red, anchor=south west, inner sep=0pt]
  at ({\Tx{5}+0.04}, {\By{105}}) {top};
\node[font=\scriptsize\itshape, text=orange!50!red, anchor=south west, inner sep=0pt]
  at ({\Tx{8}+0.04}, {\By{79}}) {mid};

\draw[{Stealth[length=2pt]}-{Stealth[length=2pt]},
      red!60!black, line width=0.6pt]
  ({\Tx{10}+0.22}, {\By{68}}) -- ({\Tx{10}+0.22}, {\By{483}})
  node[midway, right=1.5pt, font=\scriptsize\bfseries, text=red!60!black]
    {$7.1\!\times$};

\node[anchor=north west, inner sep=0pt, align=left,
      font=\scriptsize, text=gray!70] at (\panR, \panBot-0.35) {
  \tikz[baseline=-0.5ex]{\node[diamond,fill=gray!50,inner sep=0.7pt]{};}
    \textcolor{gray!60}{\textsf{heur}}
  \;\;
  \tikz[baseline=-0.5ex]{\node[circle,fill=green!50!black,inner sep=0.9pt]{};}
    \textcolor{green!50!black}{\textsf{sp}}
  \;\;
  \tikz[baseline=-0.5ex]{\node[regular polygon,regular polygon sides=3,
    fill=green!60!black,inner sep=0.5pt]{};}
    \textcolor{green!60!black}{\textsf{med}}
  \;\;
  \tikz[baseline=-0.5ex]{\node[circle,fill=orange!60!red,inner sep=0.9pt]{};}
    \textcolor{orange!60!red}{\textsf{hi}}
  \;\;
  \tikz[baseline=-0.5ex]{\node[rectangle,fill=orange!80!red,inner sep=0.9pt]{};}
    \textcolor{orange!80!red}{\textsf{top}}
  \;\;
  \tikz[baseline=-0.5ex]{\node[regular polygon,regular polygon sides=3,
    fill=orange!50!red,inner sep=0.5pt]{};}
    \textcolor{orange!50!red}{\textsf{mid}}
  \;\;
  \tikz[baseline=-0.5ex]{\node[rectangle,fill=blue!50!black,inner sep=0.8pt]{};}
    \textcolor{blue!50!black}{\textsf{prv}}
};

\end{tikzpicture}
}
\caption{Convergence trace for Steane $d{=}3$
($2{\times}8$ grid, 9 rounds) over the first $17.7$ minutes.
Pruners (green) rapidly slash $B_{\mathrm{up}}$ from 483 to 129;
hunters (orange) tighten the incumbent to $B_{\mathrm{up}} = 68$ ($7.1\times$ improvement);
provers (blue) raise the proven lower bound $B_{\mathrm{lo}}$ to~50,
so at this point the optimum lies in $[51, 68]$.}
\Description{Step plot over solver wall-clock time of the incumbent upper bound falling from 483 to 68 and the proven lower bound rising from 0 to 50.}
\label{fig:sat-convergence}
\end{figure}

\paragraph{Pruners, Hunters, Provers.}
To scale up SAT solving, we introduce a parallel solving framework illustrated in Figure~\ref{fig:sat-portfolio}. A main event loop dispatches workers, polls for results, updates shared bounds ($B_{\mathrm{upper}}$, $B_{\mathrm{lower}}$), cancels obsolete workers, and redispatches into freed slots.
\emph{Pruners} decompose the problem and rapidly slash $B_{\mathrm{upper}}$.
\emph{Hunters} solve the full SAT encoding at progressively tighter targets, activated when paired pruners exhaust.
\emph{Provers} certify $B_{\mathrm{lower}}$ by proving UNSAT from below.
Workers promote on exhaustion (pruner $\to$ hunter $\to$ prover); when the first hunter finds a solution, all remaining pruners convert to provers, accelerating the proof phase.

\paragraph{Convergence.}
Figure~\ref{fig:sat-convergence} shows the search for a nine-round Steane instance ($d{=}3$, $2{\times}8$ grid): within $17.7$ minutes the incumbent falls from $B_{\mathrm{heur}} = 483$ to $68$ ($7.1\times$ improvement) while the proven lower bound reaches $50$. Across the 25 instances of Figure~\ref{fig:SAT-vs-heuristic}, the ratio of the heuristic's weighted cost to WISER's ranges from $1.0$ to $9.6\times$ (median $4.2\times$); six instances, all with at most 17 qubits, converge to a certified optimum, and three more end within $20\%$ of their proven lower bound (the ticks of Figure~\ref{fig:SAT-vs-heuristic}).

\section{WISE Gate Model and Re-write Rules}
\label{appendix:gate-model}
\subsection{Gate model}
We restrict the entangling gate set to pairwise two-qubit MS gates, executed serially within each trap. This is a deliberate design-space choice grounded in current hardware capability. Although global M\o{}lmer--S\o{}rensen (GMS) gates have been demonstrated on $10$--$20$ ions, they achieve fidelities of only $70$--$90\%$ at gate times of $\approx 1000\,\mu$s~\cite{Lu_2019,Grzesiak_2020_EASE}, compared to $>99.9\%$ fidelity for state-of-the-art two-qubit MS gates at $\sim100\,\mu$s.

Likewise, parallel pairwise MS gates within a single chain require longer, more complex pulses and have shown lower fidelity than serial execution~\cite{Figgatt_2019}. The modest parallelism gains from GMS and parallel MS gates are outweighed by their fidelity penalty. Our gate model therefore reflects the operating regime of current TI hardware~\cite{helios2025,racetrack_Moses2023}, where high-fidelity serial two-qubit MS gates remain the dominant entangling primitive; it favours small traps, and collective or parallel entangling gates would be a separate design point.

\subsection{Gate decomposition rules}
The Stim~\cite{stim_Gidney2021} circuits for each code we study here are generated from the \texttt{qldpc} Python library~\cite{qldpc}. Since execution time is a bottleneck in WISE, we use one ancilla per QEC stabiliser unlike prior works which reuse ancillas~\cite{MOVELESS_khan2025movelessminimizingoverheadqccds}.

The compiler first rewrites each circuit into Stim's $\{\texttt{H},\texttt{S},\texttt{CX},\texttt{M},\texttt{R}\}$ basis, for example $\texttt{CZ}=(I\otimes\texttt{H})\,\texttt{CX}\,(I\otimes\texttt{H})$, $\texttt{SWAP}=3\times\texttt{CX}$, $\texttt{MX}=\texttt{H}\,\texttt{M}\,\texttt{H}$ and $\texttt{MRX}=\texttt{H}\,\texttt{M}\,\texttt{R}\,\texttt{H}$, and then lowers these gates into the native WISE operations
$$\mathcal{G}_{\text{native}} = \{\text{MS}, R_X, R_Y, R_Z, M_Z, R\}.$$
With $R_P(\theta)=e^{-i\theta P/2}$ and $\text{MS}(\theta)=e^{-i\theta X\otimes X}$, and listing operations in the order they are applied, the rules, based on~\cite{figgatt2018building}, are
\begin{itemize}[nosep]
    \item $\texttt{H} \to R_Y(\pi/2),\; R_X(\pi)$ \quad (2 rotations)
    \item $\texttt{CX}_{c,t} \to R_Y^{c}(\pi/2),\; \text{MS}(\pi/4),\; R_X^{c}(-\pi/2),$\\ \hspace*{2.2em}$R_X^{t}(-\pi/2),\; R_Y^{c}(-\pi/2)$ \quad (1 MS $+$ 4 rotations)
    \item $\texttt{S} \to R_Z(\pi/2)$
    \item $\texttt{M} \to M_Z$ and $\texttt{R}\to R$,
\end{itemize}
\noindent each exact up to a global phase (checked numerically). The CNOT decomposition follows the standard trapped-ion construction: the MS gate produces the entangling interaction $XX(\pi/4)$, and the surrounding single-qubit rotations convert this to a CNOT via basis changes. The scheduler charges every single-qubit rotation the same duration $t_{1q}$ and error, independent of its angle.

\section{Noise modelling}
\label{append-noise}
 In WISE, groups of controls share DACs and work together on ions. Although this collective control makes WISE resource-efficient, it also introduces correlated noise channels that differ qualitatively from the independent local-control picture assumed in simpler models. We first describe noise modelling for trapped-ion systems in general and then tailor it to the WISE architecture where needed. We begin with correlated noise by introducing the Lamb--Dicke parameter~\cite{leibfried2003quantum} for ion $i$ and mode $m$,
\begin{equation}\label{eq:eta}
  \eta_{i,m} = k_{\mathrm{eff}}\, z_{0,m}\, \nu_m^{(i)}
  \quad\text{with}\quad
  z_{0,m} = \sqrt{\frac{\hbar}{2 m_{\mathrm{ion}} \omega_m}},
\end{equation}
where $k_{\mathrm{eff}}$ is the projection of the driving field's effective wavevector onto the mode direction ($2\pi/\lambda$ for the optical MS drive; far smaller for the microwave-frequency drive of single-qubit gates), $m_{\mathrm{ion}}$ is the ion mass, $\omega_m=2\pi f_m$ the mode's angular frequency and $\nu_m^{(i)}$ its normalised eigenvector component on ion $i$.
Crosstalk between ions is modelled directly through the coupling of these terms across the modes of a trap. Single-qubit operations, in contrast, already reach very low infidelity (down to $10^{-7}$~\cite{smith2025single}), so we do not model their underlying physics explicitly and instead add a constant infidelity penalty. The following paragraphs describe each noise source; its parameters are listed in Table~\ref{tab:tech-tiers}, and Figure~\ref{fig:q3_sensitivity} shows how each parameter affects the expected physical error per syndrome-extraction round.

\noindent\textbf{Motional heating.} This type of noise originates from electric field fluctuations near surfaces. For a homogeneously distributed layer of fluctuating dipoles on a plane with surface density $\sigma_d$, the electric field fluctuations along a direction parallel to the trap surface have autocorrelation~\cite{talukdar2016implications,turchette2000heating}
\begin{equation}
\langle E(t)E(0)\rangle = \frac{3\pi\sigma_d}{8(4\pi\epsilon_0)^2d^4}\langle \mu(t)\mu(0)\rangle,
\end{equation}
where $d$ is the ion-surface distance and $\mu(t)$ is the dipole moment. This gives electric field noise with spectral density $S_{E}(\omega) = 2\int_{-\infty}^{\infty}\langle E(t)E(0)\rangle e^{i\omega t}dt$. Fluctuations of the quadrupole field cause the frequency to fluctuate. The electric field noise spectrum typically follows $S_E(\omega) \propto 1/\omega^\beta$ with exponent $\beta$ ranging around 1 (by default we pick 0.9). To avoid divergence of the total noise power at low frequency, we include a low-frequency cutoff $\omega_{0}=2\pi f_{\mathrm{IR}}$ with $f_{\mathrm{IR}}=3\,\mathrm{kHz}$,
\begin{equation}\label{eq:spectral-density}
S_E(\omega) =
\begin{cases}
S_0, & \omega < \omega_{0} \\
S_0 \left( \dfrac{\omega_{0}}{\omega} \right)^\beta, & \omega \geq \omega_{0}
\end{cases}
\end{equation}
Here, we use the $1/f$ type of noise spectral for the idling dephasing error $\epsilon_\text{dephasing}$. For WISE, when multiple controls share a DAC, voltage fluctuations are identical at the source. Therefore, to evaluate the collective effect, we sum up the electric-field correlation function with respect to mode $m$. Heating occurs when electric field noise has spectral density at the mode frequency. The heating rate for mode $m$ is evaluated with the Fermi-Golden rule~\cite{brownnutt2015ion}. Note that here we define the rate to be the case when an ion in the ground state is excited into the first vibrational state,
\begin{equation}\label{eq:spectral-heating}
\Gamma_m = \frac{e^2}{4m_{\mathrm{ion}}\hbar \omega_m} \sum_l |\mathcal{E}_{l\to m}|^2 S_{E_l}(\omega_m).
\end{equation}
The mean phonon number grows as $\bar{n}_m(t+\Delta t) = \bar{n}_m(t) + \Gamma_m \Delta t$.

\noindent\textbf{Operational phonon.}
Motivated by various techniques in the literature on the thermal phonon generation modelling for ion movement~\cite{walther2012controlling,bowler2012coherent}, swaps~\cite{kaufmann2017fast}, junction crossing~\cite{blakestad2009high,moehring2011design,blakestad2011near} and split and merge~\cite{ruster2014experimental}, we build our heuristics. For example, the trapping potential in a trap is created by a strong radio-frequency (RF) field. In reality, the RF voltage source has some noise, small, unwanted electric fields oscillating at frequencies just above and below the main RF tone. The RF field creates a time-averaged pseudo-potential that confines the ion. The gradient of this pseudo-potential, is largest near the junctions or barriers. When the ion is located on such a slope, different frequency components interfere and creates an oscillating force at exactly the ion's natural frequency. This resonant force drives the ion's motion, increasing its energy. Therefore, phonon accumulates during the junction crossing. Also, to move the ion we need DACs, which update their output voltage at a fixed clock rate. The output is not perfectly smooth; it consists of a series of voltage steps. In the frequency domain, this stepped output contains strong harmonics of the update rate. If the ion's frequency is close to one of these harmonics, the DAC output can resonantly drive the ion's motion. This is analogous to pushing a swing at its natural frequency. The excitation occurs because each voltage step gives the ion a small `kick.' If the kicks occur at a rate matching the ion's oscillation, they add up coherently, leading to a large increase in energy, which affects the phonon count during transportation.

\noindent\textbf{Cooling model and threshold.} To maintain the chain of ions near its ground state throughout a computation, we implement a discrete cooling protocol triggered when the accumulated phonon number $\bar{n}$ exceeds a predefined threshold $\bar{n}_{\text{th}}$. This threshold is set to ensure that phonon accumulation does not cause unacceptable gate infidelity between cooling events. The cooling process itself is not error-free: for example, during a cooling operation~\cite{ozeri2007errors} a data ion scatters on average $\Gamma \tau$ photons, where $\Gamma$ is its scattering rate and $\tau$ is the cooling duration, so it scatters at least one with probability $1-e^{-\Gamma\tau}\approx\Gamma\tau$. Each such scattering event could lead to qubit decoherence. The number of operations required to reduce the phonon number from its current value $\bar{n}_{\text{current}}$ to the target ground state $\bar{n}_{\text{target}}$ is determined by the cooling efficiency. The total cooling error $\epsilon_{\text{cool}}$ incurred is then added to the cumulative infidelity whenever the phonon number crosses the threshold. This approach approximates the cooling overhead as a constant error term applied at discrete intervals, capturing the essential trade-off between heating and the finite fidelity of the cooling process itself. The threshold $\bar{n}_{\text{th}}$ can be tuned to balance the frequency of cooling events against the thermal error accumulated between them, enabling optimization of the overall gate fidelity for a given sequence.

\noindent\textbf{Heating noise.}
As mentioned in the previous section, the accumulated phonon will cause errors in the system, which leads to two types of noise in our model. The first one is the dephasing noise. Physically, if the cooling is imperfect, the mode begins with a residual mean phonon population, which leads to an under- or over-rotation of the entangling gate. For a gate of duration $\tau$, the added phonons are $\Delta\bar{n}_m = \Gamma_m\tau$, which leads to the heating error $\epsilon_\text{heating}$. And we attribute the decoherence during the idling period to the Pauli-Z dephasing error. In addition to heating, the ion chain mode can lose phase coherence due to slow fluctuations in the trapping potential. These fluctuations cause the frequency to wander during the MS gate execution, which results directly in phase error accumulations. In practice, we attribute a quadratic dependence of the mean phonon number for the error and the dephasing contributes an error proportional to the gate duration~\cite{ballance2016high}.

\noindent\textbf{Correlated noise.}
In the WISE architecture, crosstalk arises from two distinct sources: (i) collective modes of the ions in the trap coupling unintended ion pairs during entangling gates, and (ii) shared control electronics inducing correlated errors across zones. We model the former here and leave the later in the next paragraph. For a set of parallel entangling gates on ion pairs, crosstalk corresponds to unintended entangling phases for pairs $[i,j] \in I$, where $I$ contains all pairs not intended to interact. From Ref.~\cite{Cheng2024} the crosstalk entangling phase is
\begin{align}
\theta_{i,j}(\tau) = 2 \sum_m &\eta_{j,m} \eta_{i,m} \int_0^\tau dt_2 \int_0^{t_2} dt_1 \Omega_j(t_2) \Omega_{i}(t_1)\nonumber  \\
& \times \sin(\mu_j t_1) \sin(\mu_{i} t_2) \sin[\omega_m(t_1 - t_2)],
\label{eq:crosstalk_phase}
\end{align}
where $\Omega$ is the operation design down to the pulse level and $\mu$ is the detuning associated with the operations.
An unintended phase $\theta_{i,j}$ between a spectator ion $i$ and a gate ion $j$ corresponds to an error probability
\begin{equation}
p_{i,j} = \sin^2\left( \theta_{i,j}(\tau) \right).
\label{eq:crosstalk_prob}
\end{equation}
For an MS gate on ions $a$ and $b$ in a trap of capacity $k>2$, we project this crosstalk onto a dephasing channel on each spectator $s$ of the trap,
\begin{equation}
\mathcal{E}_s(\rho) = (1-q_s)\,\rho + q_s\, Z_s \rho Z_s,
\qquad q_s = p_{s,a} + p_{s,b},
\label{eq:ELD}
\end{equation}
applied after the gate; for $k=2$ there are no spectators and no crosstalk.

\noindent\textbf{Device control noise.} There are several device level noise mentioned in ~\cite{Malinowski2023}, for which we collectively denote as $\epsilon_\text{device}$. Opening a switch injects charge $\Delta Q$ into the controlling points in the trap, producing a voltage step $\Delta V$ which creates an impulsive electric field that displaces the ion’s state. Also, control lines are periodically updated via multiplexers. Once charged to a voltage $V_0$, they are disconnected from the DAC and left floating. The voltage decays through the period $\tau_{\text{hold}}$ according to $V(t) = V_0 \exp\left(-\frac{t}{\tau_{\text{hold}}}\right)$. In our model $\epsilon_{\text{device}}$ is a fixed per-gate constant ($10^{-7}$ per $20\,\mu\mathrm{s}$ of gate duration per ion in the crystal).

\noindent\textbf{Other noise.} For single-qubit unitary operations, as we mentioned in the beginning of this section, we add a constant penalty to model the error. The same goes with our modelling for the measurement/reset error.

\noindent\textbf{Simple mapping.}
Consider a physical quantum gate acting on a $d$-dimensional Hilbert space. Let the ideal gate be $U$, and suppose that the implemented gate is described by a quantum channel
\begin{equation}
\widetilde{\mathcal{U}}=\mathcal{E}\circ\mathcal{U},
\end{equation}
where $\mathcal{U}(\rho)=U\rho U^\dagger$ is the ideal operation and $\mathcal{E}$ describes the physical noise. We characterize the physical gate by its average gate fidelity,
\begin{equation}
F_{\mathrm{avg}}(\mathcal{E})=\int d\psi\langle\psi|\mathcal{E}(|\psi\rangle\langle\psi|)|\psi\rangle,
\end{equation}
where the integral is taken over the Haar measure on pure input states. Define the experimentally measured average gate infidelity as
\begin{equation}
r:=1-F_{\mathrm{avg}}(\mathcal{E}).
\end{equation}
We now replace the physical noise channel by an effective Pauli channel with the same average gate fidelity. We take the depolarizing noise channel for example. Define
\begin{equation}
\mathcal{D}_p(\rho)=p\rho+(1-p)\frac{I}{d},
\end{equation}
where $p\in[0,1]$ is the depolarizing parameter. For a pure input state
$\rho_\psi=|\psi\rangle\langle\psi|$, the output state is
\begin{equation}
\mathcal{D}(\rho_\psi)=p\rho_\psi+(1-p)\frac{I}{d}.
\end{equation}
The corresponding state fidelity is therefore
\begin{equation}
p\langle\psi|\rho_\psi|\psi\rangle+(1-p)\frac{\langle\psi|I|\psi\rangle}{d}=p+\frac{1-p}{d}.
\end{equation}
Since this quantity is independent of $|\psi\rangle$, the average gate fidelity of the depolarizing channel is
\begin{equation}
F_{\mathrm{avg}}(\mathcal{D}_p)=p+\frac{1-p}{d}.
\end{equation}
We choose $p$ such that the effective depolarizing channel has the same average gate fidelity as the experimentally characterized physical channel:
\begin{equation}
F_{\mathrm{avg}}(\mathcal{D}_p)=1-r.
\end{equation}
Thus, the effective depolarizing channel corresponding to a measured average gate infidelity $r$ is $p=1-\frac{d}{d-1}r$. Stim instead parametrises a depolarising channel by the total probability $q$ of applying a non-identity Pauli, $\mathcal{D}(\rho)=(1-q)\rho+\frac{q}{d^2-1}\sum_{P\neq I}P\rho P^\dagger$, so that $q=\frac{d^2-1}{d^2}(1-p)=\frac{d+1}{d}\,r$, i.e.\ $q=\tfrac32 r$ for one qubit and $q=\tfrac54 r$ for two. Our mechanism models (heating, scattering, device control, cooling) produce error probabilities, which we use directly as $q$.

\subsection{Sensitivity, noise budget and cycle-time decomposition}
\label{si:noise-budget}

\begin{figure*}[t]
\centering
\figorbox[width=0.6\linewidth]{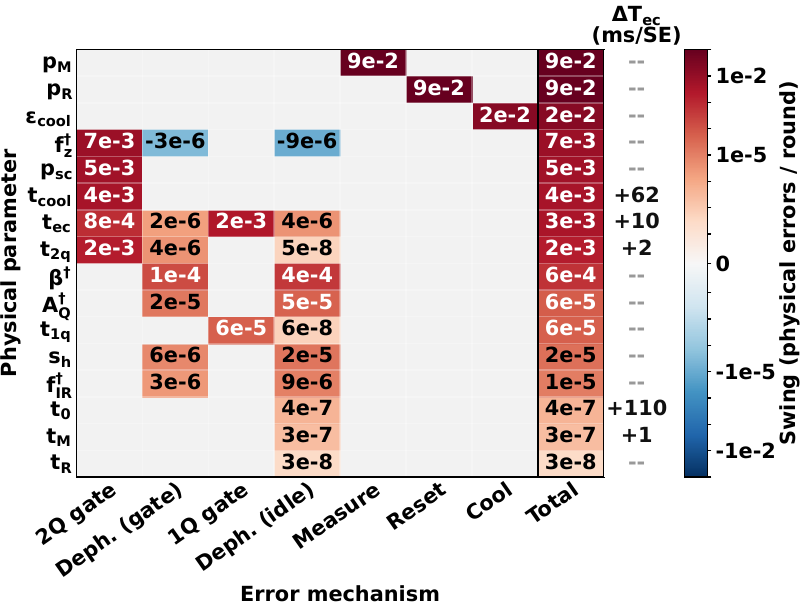}
\caption{Sensitivity of each physical parameter in Table~\ref{tab:tech-tiers}. Refer to the table to identify each physical parameter symbol on the y-axis. The right-most column of the heatmap shows the increase in the expected number of physical faults per syndrome-extraction round when changing the parameter (set by the y-axis) from its Optimistic value to its Pessimistic value (Table~\ref{tab:tech-tiers}). The x-axis separates out the change contributed by each noise mechanism. $\Delta T_\mathrm{ec} (\mathrm{ms/SE})$ column shows the change in syndrome extraction round-time in milliseconds when changing the parameter from Pessimistic to Optimistic. Blank cells: no effect on that mechanism. Physical parameters are ordered from dominant noise source (top) to weakest (bottom). \textbf{Errors due to SPAM $p_M, p_R$ and cooling $\epsilon_\mathrm{cool}$ have the largest swing from Pessimistic and Optimistic, followed by axial frequency $f_z^{\dagger}$, gate-scattering $p_\mathrm{sc}$, cooling time $t_\mathrm{cool}$ and electrode-charging time $t_\mathrm{ec}$. Round-time is most sensitive to ion-swap time $t_0$, cooling time $t_\mathrm{cool}$ and electrode-charging $t_\mathrm{ec}$.}}
\Description{Heatmap of the change in expected physical faults per round, by noise mechanism, when each physical parameter moves between its Optimistic and Pessimistic values, with a column for the change in round time.}
\label{fig:q3_sensitivity}
\end{figure*}

\paragraph{Figure~\ref{fig:q3_sensitivity} parameter ranking by total-error swing.}
Varying each parameter one at a time over its tier range tests sensitivity. We measure the change in the expected number of physical faults
per syndrome extraction: measurement error $p_M$ and reset error $p_R$
($8.9\times10^{-2}$ each) $>$ cooling error $\epsilon_{\mathrm{cool}}$
($2.0\times10^{-2}$) $>$ axial frequency $f_z$ ($7.0\times10^{-3}$) $>$ gate
scattering $p_\mathrm{sc}$ ($5.2\times10^{-3}$) $>$ recool time $t_{\mathrm{cool}}$
($3.5\times10^{-3}$) $>$ electrode charging time $t_{\mathrm{ec}}$
($3.2\times10^{-3}$) $>$ MS gate time ($1.9\times10^{-3}$) $>$ spectral exponent
$\beta$ ($5.8\times10^{-4}$), with every remaining parameter below $10^{-4}$. The crosstalk parameters have no effect at $k=2$ because a two-ion trap holds no spectator ions.

\begin{figure}[htbp]
\centering
\figorbox[width=0.8\columnwidth]{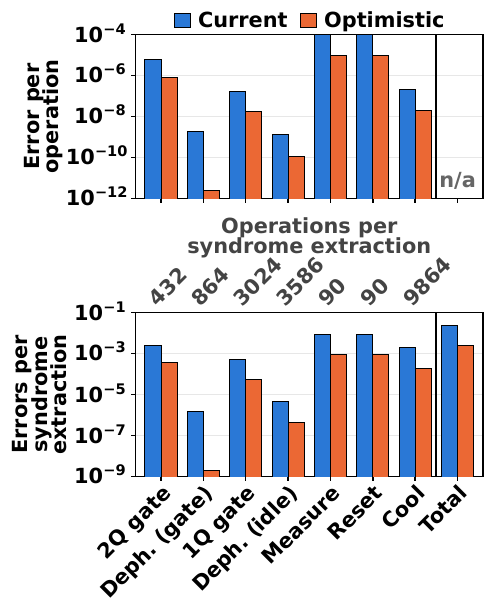}
\caption{Expected physical faults implied by the noise model,
with every parameter at the Current (blue) or Optimistic (orange) tier for the 144-qubit BB code at
the design point $k=2$, $M=16$, $\bar{n}_{\mathrm{th}}=0.0$.
(a)~Error per operation, i.e.\ the quality of one operation. (b)~The same rates
multiplied by the number of such operations in one syndrome extraction (printed
above the panel); the Total bars are the sum over mechanisms. For instance, although there are 3024 1-qubit gates compared to 432 2-qubit gates, the 2-qubit gates still have a larger physical error contribution per round of syndrome extraction than 1-qubit gates. The two dephasing
bars are the same motional-dephasing channel over different windows,
Deph.\,(gate) over the two-qubit gate duration and Deph.\,(idle) over the idle
intervals. \textbf{State preparation and measurement carry about $78\%$ of the
error budget of a syndrome extraction and dominate both per operation and per
round.}}
\Description{Bar charts of error per operation and expected faults per syndrome-extraction round, by noise mechanism, for the Current and Optimistic tiers.}
\label{fig:error-budget}
\end{figure}

\paragraph{Figure~\ref{fig:error-budget} per-operation and per-round error budget.} Operation counts are per syndrome-extraction round, averaged over the rounds of the simulated experiment (\S\ref{sec:metrics}; the $72$ final data-qubit measurements add $18$ to the $72$ ancilla measurements of each round), and a recool is counted per two-ion trap, with error $\epsilon_{\mathrm{cool}}$ on each of its ions. At the Current tier, measurement and reset are equal worst at
$10^{-4}$, an MS gate follows at $5.9\times10^{-6}$, a recool at
$2\times10^{-7}$ and a single-qubit rotation at $1.7\times10^{-7}$. The
Optimistic tier keeps the ordering: $10^{-5}$ for measurement and reset,
$8.45\times10^{-7}$ per MS gate, $2\times10^{-8}$ per recool and
$1.7\times10^{-8}$ per single-qubit gate. Weighting by the operation counts of
one syndrome extraction gives, at Current, measurement $39.1\%$ ($90$
operations), reset $39.1\%$ ($90$), MS gates $11.0\%$ ($432$), recooling
$8.6\%$ ($9864$), single-qubit gates $2.2\%$ ($3024$) and motional dephasing
below $0.1\%$, totalling $2.30\times10^{-2}$ expected physical faults per
syndrome extraction at $T_{\mathrm{ec}}=86.3$\,ms; the Optimistic tier gives
$37.3\%$, $37.3\%$, $15.1\%$, $8.2\%$ and $2.1\%$ at $2.41\times10^{-3}$ and
$15.2$\,ms. State preparation and measurement therefore account for $78\%$ of
the Current-tier and $75\%$ of the Optimistic-tier budget.

\begin{figure}[htbp]
\centering
\figorbox[width=0.8\columnwidth]{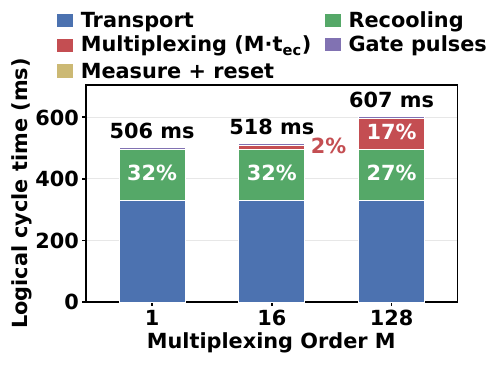}
\caption{Cycle-time decomposition for the 144-qubit BB code at $k=2$,
$\bar{n}_{\rm th}=0$, Current tier. Percentages label the recooling (white) and multiplexing (red) contributions.
Raising $M$ from $1$ to $128$ grows the charging latency from $0.2\%$ to
$16.7\%$ of the cycle. \textbf{Recooling and transport dominate the logical cycle at $M \leq 16$. Multiplexing charging latency becomes a significant contributor to cycle time at WISE's proposed $128$.} The surface code $d=7$ shows the same pattern: transport and recooling take $93.7\%$ and charging $2.2\%$ of its cycle at $M=16$, rising to $15.1\%$ charging at $M=128$.}
\Description{Stacked bars of transport, recooling, charging, gate and measurement time in the BB code's logical cycle for increasing multiplexing order.}
\label{fig:cycle-breakdown}
\end{figure}

\paragraph{Figure~\ref{fig:cycle-breakdown} cycle-time contributors.}
Transport and cooling dominate cycle time. In Figure~\ref{fig:q3_sensitivity}, we find that, per syndrome extraction, the transport scale $t_0$ swings the cycle by $110$\,ms
($149\rightarrow39$\,ms), recool time by $62$\,ms ($128\rightarrow66$\,ms),
electrode charging time $t_{\mathrm{ec}}$ by $10$\,ms ($95\rightarrow85$\,ms)
and two-qubit gate time by $2$\,ms, with everything else below $1$\,ms.

\section{Derivation of the early-fault-tolerance screen}
\label{sec:early-ft-screen-si}

\subsection{Reference workload and scope}

We search for globally controlled trapped-ion design configurations that could support $N_{\mathrm{alg}}=100$ algorithmic logical qubits
through $D_L=10^5$ logical time steps. The number of logical qubits  is motivated
by the lower end of resource estimates for scientifically relevant
phase-estimation and quantum-simulation
workloads~\cite{beverland2022assessingrequirementsscalepractical,PRXQuantum.5.020101,Campbell2022,Huggins2025FLASQ,Spagnoli2026quantumsimulationof}.
The depth is an order-of-magnitude screening target informed by
published logical schedules, ranging from relatively shallow
spin-dynamics examples to substantially deeper phase-estimation
calculations~\cite{beverland2022assessingrequirementsscalepractical,Huggins2025FLASQ,Yoshioka2024}.
We use these dimensions as an optimistic minimum lower bound reference rather than a specific useful quantum application figure, as these are continually improving~\cite{cain2026shorsalgorithmpossible10000}.

\subsection{Logical reliability.}
Following logical-spacetime error
budgeting~\cite{beverland2022assessingrequirementsscalepractical,PRXQuantum.5.020101,Yoshioka2024},
we allocate a generous failure budget
$\epsilon_{\mathrm{mem}}=0.1$ to maintaining the algorithmic register.
A union-bound allocation gives
\begin{equation}
\label{eq:eft-memory-budget}
    \begin{aligned}
        N_{\mathrm{alg}}D_Lp_L &\leq\epsilon_{\mathrm{mem}},\\
        p_L &\leq\frac{0.1}{100\times10^5}=10^{-8}.
    \end{aligned}
\end{equation}

\subsection{Execution window}

We select $T_{\mathrm{job}}=48\,\mathrm{h}$ as a generous execution
window to execute one shot of the quantum circuit experimentally.
Published workflows motivate caution about uninterrupted encoded
operation: H2 reported partial or complete ion loss in some zone
roughly every $30\,\mathrm{min}$ and checks after
$30$--$60\,\mathrm{s}$ batches of repeated circuits~\cite{racetrack_Moses2023};
Honeywell/Quantinuum reported approximately $25\%$ of system duty spent on
automated calibration~\cite{Pino2021}; and the $60\,\mathrm{h}$
Oxford Ionics data-acquisition campaign included regular automated
motional-frequency calibration~\cite{oxfordionics_Loschnauer2024}.
Maintaining one, uninterrupted, physical quantum circuit on real hardware beyond 48-hours would therefore be a significant challenge. This limit gives a minimum logical clock speed $f_L$ requirement of
\begin{equation}
\label{eq:eft-clock-budget}
\geq\frac{D_L}{T_{\mathrm{job}}}
=\frac{10^5}{48\times3600\,\mathrm{s}}
\simeq0.58\,\mathrm{Hz}.
\end{equation}
We use the Optimistic technology tier in
Table~\ref{tab:tech-tiers}.

\subsection{Cryogenic allowance}

We assume a maximum power  allowance of
$\dot Q_{\mathrm{budget}}=600\,\mathrm{W}$ at $10\,$Kelvin.

As an infrastructure
comparison, a demonstrated $600\,\mathrm{W}$, $20\,\mathrm{K}$ helium
circulation system used eight cold heads in two
cryomodules~\cite{Michael2024}. Another study describes an initial
available capacity of $860\,\mathrm{W}$ for its $20\,\mathrm{K}$ target,
but required expanded helium-refrigeration provision~\cite{Brock2023}.
Our allowances should be interpreted as optimistic
design assumption and experimental validation is necessary to evaluate any designs that pass our screening.

We must account for optical heating in cryogenic ion-trap apparatus~\cite{Wipfli2023}:

Consider an idealized delivery path whose modelled propagation loss is
entirely absorptive, followed by an output grating. Let $L_{\mathrm{abs}}$
be the numerical absorption loss in decibels and $\eta_g$ the fraction
of waveguide-output power delivered in the required gate beams.
The power-ratio definition of the decibel~\cite[Sec.~8.7]{NISTSP811}
gives
\begin{equation}
\begin{aligned}
L_{\mathrm{abs}}&=10\log_{10}
\left(\frac{P_{\mathrm{in}}}{P_{\mathrm{out}}}\right),\\
P_{\mathrm{out}}&=\frac{P_{\mathrm{pair}}}{\eta_g},\\
P_{\mathrm{abs}}&=P_{\mathrm{in}}-P_{\mathrm{out}}
=\frac{P_{\mathrm{pair}}}{\eta_g}
\left(10^{L_{\mathrm{abs}}/10}-1\right),
\end{aligned}
\end{equation}
where $P_{\mathrm{pair}}$ is the total optical power in the gate beams
addressing a pair. Note additional distributed scattering or other
propagation losses require an explicit power-flow model, rather than
identifying all insertion loss with absorption.

The time-averaged load is
\begin{equation}
\label{eq:waveguide-heat-general}
\overline P_{\mathrm{wg}}
=\sum_{j=1}^{N_{\mathrm{pair}}}
D_j\frac{P_{\mathrm{pair},j}}{\eta_{g,j}}
\left(10^{L_{\mathrm{abs},j}/10}-1\right),
\end{equation}
where $D_j$ is the actual illumination duty fraction of the corresponding
path (intuitively, the proportion of time lasers are active in an execution schedule).

\paragraph{Concrete example.} Suppose we have
$N_{\mathrm{phys}}=10^4$ ions as qubits~\cite{IonQRoadmap2026} for our early fault-tolerant workload and assume
$N_{\mathrm{pair}}=N_{\mathrm{phys}}/2$ independently supplied pair sites. Then, suppose a power draw of
$P_{\mathrm{pair}}=0.98\,\mathrm{W}$ total (the calculated requirement across three Raman beams for a
$10\,\mu\mathrm{s}$ $^{137}\mathrm{Ba}^{+}$ ground-state-qubit
M{\o}lmer--S{\o}rensen gate~\cite{Moore2023}).
Take $0.6\,\mathrm{dB}$ absorption (the loss measured for a
$369\,\mathrm{nm}$ waveguide path~\cite{Kwon_2024}, assumed here to be entirely absorptive) and a measured $50\%$ upward-radiation efficiency
(for $729\,\mathrm{nm}$ gratings~\cite{Mehta_2020}). Then choose $D_{\mathrm{ref}}=70/(70+300)\simeq0.189$ (from Helios, Quantinuum~\cite{helios2025}).
Equation~\eqref{eq:waveguide-heat-general} then gives
\begin{equation}
\label{eq:eft-waveguide-reference}
\begin{aligned}
\overline P_{\mathrm{wg}}^{\mathrm{ref}}
&=\frac{10^4}{2}\frac{0.98\,\mathrm{W}}{0.50}
\frac{70}{70+300}\left(10^{0.6/10}-1\right)\\
&=274.7\,\mathrm{W}\simeq275\,\mathrm{W}.
\end{aligned}
\end{equation}

The reference value $\overline P_{\mathrm{wg}}^{\mathrm{ref}}
\simeq275\,\mathrm{W}$ includes absorption in the modeled waveguide
section only.

\subsection{Complete cold-stage balance and electrical headroom}

Define $p_{\mathrm{el}}$ as the total modeled, time-averaged cold-stage electrical dissipation, including shared hardware, normalized by
$N_{\mathrm{alg}}$.

Let $P_{\mathrm{extra}}\geq0$ contain cold loads not
already included in $N_{\mathrm{alg}}p_{\mathrm{el}}$ or
$\overline P_{\mathrm{wg}}$: absorption associated with repumping, cooling,
and parasitic conductive and radiative heat
loads~\cite{Pino2021,Dubielzig2021}.

The necessary steady-state thermal balance is
\begin{equation}
\label{eq:eft-electrical-budget}
N_{\mathrm{alg}}p_{\mathrm{el}}+
\overline P_{\mathrm{wg}}+P_{\mathrm{extra}}
\leq\dot Q_{\mathrm{budget}}.
\end{equation}
In the reference deployment,
\begin{equation}
100p_{\mathrm{el}}+274.7\,\mathrm{W}+P_{\mathrm{extra}}
\leq600\,\mathrm{W}.
\end{equation}
Thus $p_{\mathrm{el}}=1\,\mathrm{W/LQ}$ leaves
$225.3\,\mathrm{W}$ for additional loads and margin, and the Current-tier distance-7 surface-code design point ($2.04\,\mathrm{W/LQ}$, \S\ref{sec:result2}) leaves $122\,\mathrm{W}$, whereas a design at the
$3.25\,\mathrm{W/LQ}$ ceiling leaves only $0.3\,\mathrm{W}$ for any additional load $P_\mathrm{extra}$.

\subsection{Screening procedure and interpretation}

For each architecture, we search the evaluated code, distance, and
cooling-policy choices under the Optimistic physical parameters.
A configuration must satisfy
\begin{equation}
\label{eq:eft-screen}
\begin{gathered}
p_L\leq10^{-8},\qquad f_L\geq0.58\,\mathrm{Hz},\\
p_{\mathrm{el}}\leq3.25\,\mathrm{W/LQ}.
\end{gathered}
\end{equation}
for early fault-tolerance.

\section{Parameter selection and sensitivity analysis}
\label{sec:parameter-justification}

\begin{table*}[tp]
\centering
\footnotesize
\setlength{\tabcolsep}{4pt}
\renewcommand{\arraystretch}{1.04}
\caption{Technology-tier inputs and sensitivity ranges for the WISE
architecture. Current gate-pulse and transport benchmarks follow
Malinowski et al.~\cite{Malinowski2023}. Pessimistic and optimistic entries combine experimental
benchmarks with the model-based margins (\S\ref{sec:parameter-justification}). Every row listed here is consumed by the simulated noise or timing model. The sensitivity comparison of Fig.~\ref{fig:q3_sensitivity} sweeps each row, one at a time, between its Pessimistic and Optimistic endpoints with every other row held at Current.  Starred rows ($\ast$) act only for trap capacity $k>2$. Transport primitives are fixed fractions of $t_0$ in the model. A dash denotes a quantity held fixed.}
\label{tab:tech-tiers}
\begin{tabular}{@{}l c c c l@{}}
\toprule
\textbf{Parameter}
& \textbf{Current}
& \textbf{Pessimistic}
& \textbf{Optimistic}
& \textbf{Source} \\
\midrule

\multicolumn{5}{@{}l}{\textit{Timing ($\mu\mathrm{s}$)}} \\
Electrode charging $t_{ec}$
& 1 & 5 & 0.1 & \S\ref{par:ec} \\
Swap/reorder scale $t_0$
& 70 & 150 & 10 & \S\ref{par:swap} \\
Single-qubit pulse $t_{1q}$
& 1 & 2 & 0.1 & \S\ref{par:g1} \\
Two-qubit MS pulse $t_{2q}$
& 100 & 200 & 10 & \S\ref{par:g2} \\
Measurement $t_M$
& 200 & 400 & 50 & \S\ref{par:tm} \\
Reset $t_R$
& 30 & 50 & 5 & \S\ref{par:tr} \\
Recooling $t_{\mathrm{cool}}$
& 200 & 500 & 50 & \S\ref{par:tc} \\
\midrule

\multicolumn{5}{@{}l}{\textit{Measurement, reset, and cooling errors}} \\
Measurement error $p_M$
& $10^{-4}$ & $10^{-3}$ & $10^{-5}$ & \S\ref{par:pm} \\
Reset error $p_R$
& $10^{-4}$ & $10^{-3}$ & $10^{-5}$ & \S\ref{par:pr} \\
Cooling-induced error $\epsilon_{\mathrm{cool}}$
& $10^{-7}$ & $10^{-6}$ & $10^{-8}$ & \S\ref{par:pcool} \\
\midrule

\multicolumn{5}{@{}l}{\textit{Transport primitives, fixed fractions of $t_0$ ($\mu\mathrm{s}$)}} \\
Split / merge $t_{\mathrm{sm}}$ ($0.80\,t_0$)
& 56 & 120 & 8 & \S\ref{par:split} \\
Shuttle $t_{\mathrm{sh}}$ ($0.05\,t_0$)
& 3.5 & 7.5 & 0.5 & \S\ref{par:shuttle} \\
Junction crossing $t_{\mathrm{j}}$ ($0.50\,t_0$)
& 35 & 75 & 5 & \S\ref{par:junction} \\
Rotation $t_{\mathrm{rot}}$ ($0.42\,t_0$)
& 29.4 & 63 & 4.2 & \S\ref{par:rotation} \\
Crossing swap $t_{\mathrm{xs}}$ ($1.00\,t_0$)
& 70 & 150 & 10 & \S\ref{par:swap} \\
\midrule

\multicolumn{5}{@{}l}{\textit{Additional tiered noise inputs}} \\
Heating-amplitude multiplier $s_h$
& 0.05 & 0.20 & 0.005 & \S\ref{par:heating} \\
Gate-scattering contribution $p_{\mathrm{sc}}$
& $10^{-6}$ & $2\times10^{-5}$ & $10^{-8}$
& \S\ref{par:scattering} \\
\midrule

\multicolumn{5}{@{}l}{
\textit{Fixed-default noise parameters: \textbf{not varied outside of sensitivity analysis in Figure~\ref{fig:q3_sensitivity}}}} \\
Spectral exponent $\beta$
& 0.9 & 1.6 & 0.4 & \S\ref{par:beta} \\
Low-frequency cutoff $f_{\mathrm{IR}}$ (kHz)
& 3 & 0.3 & 30 & \S\ref{par:cutoff} \\
Axial reference frequency $f_z$ (MHz)
& 2 & 1 & 3 & \S\ref{par:axial} \\
Gradient-noise coefficient $A_Q$
& $1.43\times10^{-3}$ & $1.43\times10^{-2}$
& $1.43\times10^{-4}$ & \S\ref{par:gradient} \\
Crosstalk drive offset $f_\mu^{\ast}$ (MHz)
& 3 & 2.5 & 4 & \S\ref{par:offset} \\
Crosstalk phase coefficient $A_{\mathrm{ct}}^{\ast}$
& 30 & 60 & 15 & \S\ref{par:act} \\
Pulse segments $N_{\mathrm{seg}}^{\ast}$
& 100 & 50 & 200 & \S\ref{par:segments} \\
\midrule

\multicolumn{5}{@{}l}{
\textit{Shared architecture and physical reference settings}} \\
Height $h$ ($\mu\mathrm{m}$) & $40$ & -- & -- & \S\ref{par:geometry} \\

\bottomrule
\end{tabular}
\end{table*}

In this section, we explain the motivations and outline the literature support for our parameter choices of the noise modelling. References support architectural defaults, component benchmarks, or physical mechanisms; they do not imply the simultaneous experimental realization of all parameters within a technology tier. For each parameter, we pick three values according to (the current (experimental) demonstration, the pessimistic scenario, the optimistic prediction in the next couple of years).

\subsection{Operation durations}

\subsubsection{Electrode charging}
\label{par:ec}
We use $t_{ec}=(1,5,0.1)\,\mu\mathrm{s}$, spanning faster and slower
settling around WISE's $3\,\mu\mathrm{s}$ reference~\cite{Malinowski2023}.
Nanosecond changes of trapping potentials motivate low-impedance fast
switching~\cite{Alonso2016}. For
$t_{ec}=R_{\mathrm{eff}}C_{\mathrm{hold}}\ln(1/\delta_{\mathrm{set}})$ with
a $30\,\mathrm{pF}$ storage capacitance, settling to
$\delta_{\mathrm{set}}=10^{-4}$ requires effective charging-path
resistances of approximately $(3.62,18.1,0.362)\,\mathrm{k\Omega}$.

\subsubsection{Swap and reorder scale}
\label{par:swap}
We set $t_0=(70,150,10)\,\mu\mathrm{s}$. The current value lies between
$42\,\mu\mathrm{s}$ experimental ion-order reversal and WISE's
$100\,\mu\mathrm{s}$ swap reference~\cite{kaufmann2017fast,Malinowski2023}.
The pessimistic endpoint adds more control allowance; the optimistic
endpoint assumes improved transport control, motivated by inverse-engineered
rotations and few-microsecond theoretical separation
protocols~\cite{Tobalina2021,Guglielmo2024}. The transport primitives of Table~\ref{tab:tech-tiers} are fixed fractions of $t_0$ in the model; a physical crossing swap (two ions exchanging places within one zone) costs $1.00\,t_0$, one horizontal swap sub-pass (shuttle, merge, rotation, split, shuttle) $2.12\,t_0$, and one vertical column bucket $5.10\,t_0$.

\subsubsection{Single-qubit gate duration}
\label{par:g1}
We use $t_{1q}=(1,2,0.1)\,\mu\mathrm{s}$, retaining the WISE intrinsic
pulse reference~\cite{Malinowski2023}. The endpoints represent weaker and
stronger effective drives. Demonstrated trapped-ion rotations in less than
$50\,\mathrm{ps}$ motivate the feasibility of substantially faster atomic
control underlying the $100\,\mathrm{ns}$ target~\cite{Campbell2010}.

\subsubsection{Two-qubit gate duration}
\label{par:g2}
We use $t_{2q}=(100,200,10)\,\mu\mathrm{s}$ around WISE's pulse
reference~\cite{Malinowski2023}. Experimental entangling-gate durations of
$3.8$--$520\,\mu\mathrm{s}$ support this timing range~\cite{ballance2016high},
while fault-tolerance projections include $15\,\mu\mathrm{s}$
M\o lmer--S\o rensen operations~\cite{Bermudez2019}. The optimistic endpoint
assumes faster entangling control alongside the separately specified error
improvements.

\subsubsection{Measurement duration}
\label{par:tm}
We set $t_M=(200,400,50)\,\mu\mathrm{s}$. Adaptive readout averages of
$145\,\mu\mathrm{s}$ at approximately $10^{-4}$ error and
$46\,\mu\mathrm{s}$ at $9(1)\times10^{-4}$ error motivate the current and
optimistic timing scales~\cite{Myerson2008,Todaro2021}; the pessimistic
value follows from~\cite{Bermudez2019}. We use effective scheduling
durations, with low-latency decision electronics assumed.

\subsubsection{Reset duration}
\label{par:tr}
We use $t_R=(30,50,5)\,\mu\mathrm{s}$. The current value interpolates
between the $50\,\mu\mathrm{s}$ reference and $10\,\mu\mathrm{s}$ anticipated reset of ~\cite{Bermudez2019}. Demonstrated $5\,\mu\mathrm{s}$ calcium population-preparation pulses motivate the optimistic pumping timescale, with improved polarization and state preparation assumed for the associated
error target~\cite{Barreto2023}.

\subsubsection{Recooling duration}
\label{par:tc}
We set $t_{\mathrm{cool}}=(200,500,50)\,\mu\mathrm{s}$. Two-stage
sympathetic EIT cooling in $45\,\mu\mathrm{s}$, compared with  $\approx 600\,\mu\mathrm{s}$ for earlier sequential Raman-sideband cooling,
motivates the range~\cite{Lin2013}. The current duration also lies between
published $100$ and $400\,\mu\mathrm{s}$ recooling benchmarks~\cite{Bermudez2019}. Routing overhead is accounted for separately.

\subsection{Measurement, reset, and cooling errors}

\subsubsection{Measurement error}
\label{par:pm}
We use $p_M=(10^{-4},10^{-3},10^{-5})$. The first two values follow
high-fidelity readout and fault-tolerance reference
budgets~\cite{Myerson2008,Bermudez2019}. Heralded combined state-preparation and measurement error of $5(4)\times10^{-6}$ motivates the optimistic fidelity scale~\cite{sotirova2024highfidelityheraldedquantumstate}. Improved photon collection, background
rejection, and state transfer provide routes toward the fast, low-error
target~\cite{Todaro2021}.

\subsubsection{Reset error}
\label{par:pr}
We select $p_R=(10^{-4},10^{-3},10^{-5})$ around the improved reset
budget explicitly studied in~\cite{Bermudez2019}. The endpoints allow
tenfold degradation or improvement. Fast optical pumping and low-error
heralded preparation motivate the improvement direction, with suppression
of polarization, transfer, and preparation-error floors assumed for the
unconditional reset target~\cite{Barreto2023,sotirova2024highfidelityheraldedquantumstate}.

\subsubsection{Cooling-induced qubit error}
\label{par:pcool}
We use $\epsilon_{\mathrm{cool}}=(10^{-7},10^{-6},10^{-8})$ for direct
internal-state disturbance per data qubit per recooling operation.
Spectral isolation motivates these budgets: Barrett et al.
calculate approximately $4\times10^{-11}$ spontaneous-emission probability
on a data ion per cooling Raman pulse~\cite{Barrett2003}. An illustrative
$100$-pulse sequence contributes $4\times10^{-9}$ from this mechanism;
residual motional excitation is treated separately.

\subsection{Transport primitives}

\subsubsection{Split and merge}
\label{par:split}
In the model, split and merge are tied to the swap scale as $t_{\mathrm{sm}}=0.80\,t_0=(56,120,8)\,\mu\mathrm{s}$; the fraction was calibrated at the WISE reference $t_0=100\,\mu\mathrm{s}$ against the benchmarks below. Demonstrated
$80\,\mu\mathrm{s}$ ion separation anchors the calibration~\cite{ruster2014experimental};
the pessimistic value adds $50\%$. The $30\,\mu\mathrm{s}$ separation/merge projection of~\cite{Bermudez2019} and optimized separation protocols~\cite{Guglielmo2024} support improvement below the current value. Equal split and merge durations
are a scheduling assumption.

\subsubsection{Linear shuttle}
\label{par:shuttle}
In the model $t_{\mathrm{sh}}=0.05\,t_0=(3.5,7.5,0.5)\,\mu\mathrm{s}$. Transport over
$280\,\mu\mathrm{m}$ in $3.6\,\mu\mathrm{s}$ with approximately $0.10$
added motional quanta has been demonstrated~\cite{walther2012controlling}. The Optimistic value would require $560\,\mathrm{m/s}$.

\subsubsection{Junction crossing}
\label{par:junction}
In the model $t_{\mathrm{j}}=0.50\,t_0=(35,75,5)\,\mu\mathrm{s}$. For an assumed
$200\,\mu\mathrm{m}$ junction segment, the $4\,\mathrm{m/s}$ experimental
transport benchmark gives $50\,\mu\mathrm{s}$~\cite{Burton_2023}. The three tier values correspond to $5.7$, $2.67$, and $40\,\mathrm{m/s}$; the Optimistic speed lies an order of magnitude beyond the benchmark.

\subsubsection{Crystal rotation}
\label{par:rotation}
In the model $t_{\mathrm{rot}}=0.42\,t_0=(29.4,63,4.2)\,\mu\mathrm{s}$; the fraction reproduces the demonstrated $42\,\mu\mathrm{s}$ two-ion order-reversal duration~\cite{kaufmann2017fast} at $t_0=100\,\mu\mathrm{s}$;
the pessimistic value adds $50\%$. The $20\,\mu\mathrm{s}$ projection of~\cite{Bermudez2019} and improved
inverse-engineered rotations~\cite{Tobalina2021} motivate our optimistic projection.

\subsection{Additional tiered noise inputs}

\subsubsection{Heating-amplitude multiplier}
\label{par:heating}
We use $s_h=(0.05,0.20,0.005)$, giving nominal reference rates
$\dot{\bar n}_{\mathrm{ref}}=10^3s_h=(50,200,5)$ quanta/s.
Measured rates of $63(6)$ quanta/s at $2\,\mathrm{MHz}$ and approximately
$180$ quanta/s in a multispecies mode anchor the first two
scales~\cite{Todaro2021,Delaney2024}. The optimistic tenfold reduction is
motivated by approximately hundredfold electric-field-noise suppression
after electrode cleaning~\cite{Hite2012}. In the implemented model $s_h$ scales the electric-field spectral density $S_E$ that enters the motional-dephasing filter integral (\S\ref{par:beta}); the phonon accumulation during transport and gates uses a fixed reference heating rate of $40$ quanta/s at the axial frequency, independent of $s_h$.

\subsubsection{Scattering during an entangling gate}
\label{par:scattering}
We use $p_{\mathrm{sc}}=(10^{-6},2\times10^{-5},10^{-8})$ for the
Raman-scattering contribution, with atomic structure, detuning, and laser
power setting the improvement route~\cite{ozeri2007errors,Moore2023}. The
optimistic endpoint assumes a suitable electronic-ground-manifold encoding
and sufficient power at increased red detuning, motivated by the absence
of a nonzero asymptotic Raman-scattering floor in the refined
model~\cite{Moore2023}.

\subsection{Fixed-default parameters and sensitivity probes}

\subsubsection{Spectral exponent}
\label{par:beta}
We use $\beta=(0.9,1.6,0.4)$. Talukdar et al.
measure an exponent near $0.9$ and discuss experimental values spanning
$0.4$--$1.6$~\cite{talukdar2016implications}. Larger $\beta$ increases low frequency noise from our spectral-density model (Eq~\ref{eq:spectral-density}).

\subsubsection{Low-frequency cutoff}
\label{par:cutoff}
We retain $f_{\mathrm{IR}}=(3,0.3,30)\,\mathrm{kHz}$ as rollover
scenarios. An inferred $300\,\mathrm{Hz}$ rollover under the assumption
$\beta=1.5$ below $100\,\mathrm{kHz}$ motivates the lower
endpoint~\cite{talukdar2016implications}; the default and upper endpoint probe higher
rollovers within the same spectral model. At fixed normalization and
$\beta=0.9$, the endpoints give plateau noise $7.94$ and $0.126$ times
the default.

\subsubsection{Axial reference frequency}
\label{par:axial}
We use $f_z=(2,1,3)\,\mathrm{MHz}$. Surface-trap operation at
$1\,\mathrm{MHz}$ and heating measurements at $2\,\mathrm{MHz}$ anchor
the lower scales~\cite{Allcock2012,Todaro2021}. Operation at
$5.3\,\mathrm{MHz}$ in another zone of the Todaro trap motivates the
stronger-confinement endpoint~\cite{Todaro2021}. Mode frequencies and
couplings are recomputed for each probe.

\subsubsection{Gradient-noise normalization}
\label{par:gradient}
We retain $A_Q=(1.43\times10^{-3},1.43\times10^{-2},1.43\times10^{-4})$
as model-specific normalizations. The electric-field-gradient mechanism is
motivated by surface-noise models of motional dephasing~\cite{talukdar2016implications}.
The endpoints probe tenfold stronger or weaker gradient-noise coupling
at fixed $S_E$ and $h$; their numerical normalization is an assumption
of our model.

\subsubsection{Crosstalk drive offset}
\label{par:offset}
We use $f_\mu=(3,2.5,4)\,\mathrm{MHz}$. Programmable ion interactions
with $\mu/2\pi=2.043\,\mathrm{MHz}$ motivate MHz-scale drive
placement~\cite{Lu2025Ising}. The endpoints explore altered detuning
from the collective modes to control unwanted couplings~\cite{Cheng2024}.
The response uses all $\delta_m=2\pi(f_\mu-f_m)$, including spectator-mode
resonances.

\subsubsection{Crosstalk phase coefficient}
\label{par:act}
We use the model-specific phase normalization $A_{\mathrm{ct}}=(30,60,15)$.
Collective-mode control and programmable pairwise interactions motivate
suppression of unwanted phases~\cite{Cheng2024,Lu2025Ising}.
The endpoints assume twice or half the residual phase, with the latter
targeting improved pulse optimization. For $p_{ij}=\sin^2\theta_{ij}$,
they imply approximately four times or one-quarter of the current error
in the small-phase regime.

\subsubsection{Pulse segmentation}
\label{par:segments}
We select $N_{\mathrm{seg}}=(100,50,200)$, with $100$ phase segments
used experimentally for programmable three- and four-ion
interactions~\cite{Lu2025Ising}. The endpoints halve or double the
control resolution, with finer optimized shaping motivated by demonstrated
modulated-gate error suppression~\cite{Hayes2012}. For a
$100\,\mu\mathrm{s}$ pulse, segment durations are $(1,2,0.5)\,\mu\mathrm{s}$;
the resulting errors are evaluated through the pulse model.

\subsection{Shared architecture and physical settings}

\subsubsection{Reference height}
\label{par:geometry}
We retain $h=40\,\mu\mathrm{m}$ from WISE, close to the approximately
$39\,\mu\mathrm{m}$ height in the regular trapping region of the Todaro
device~\cite{Malinowski2023,Todaro2021}. The factor
$(40\,\mu\mathrm{m}/h)^4$ follows measured approximately
inverse-fourth-power surface-noise scaling~\cite{Sedlacek2018}.

\section{Evidence for the headline results}
\label{si:evidence}

Table~\ref{tab:design-evidence} lists the sampling evidence behind every code that meets all three screens at the design point $(k,M,\bar n_{\mathrm{th}})=(2,16,0)$, and Table~\ref{tab:fig5-winners} identifies the configuration behind every cell of Figure~\ref{fig:result2-stepC}. Both tables, and Figures~\ref{fig:result2-stepC}--\ref{fig:qec_expert_trajectories}, are generated from the same sweep results with the logical-cycle and power definitions of \S\ref{sec:metrics}. Bounds are one-sided $95\%$ Clopper--Pearson upper bounds on the block-failure probability of one logical cycle; with zero failures in $n$ shots the bound is $1-0.05^{1/n}$.

\begin{table*}[t]
\centering
\footnotesize
\caption{Configurations meeting all early-FTQC screens at $(k,M,\bar n_{\mathrm{th}})=(2,16,0)$: observed failures, shots, upper bound on $p_L$, logical clock speed and power per logical qubit.}
\label{tab:design-evidence}
\begin{tabular}{@{}llrrrrr@{}}
\toprule
Tier & Code & Fail. & Shots & $p_L$ 95\% UB & $f_L$ (Hz) & $P_{\mathrm{LQ}}$ (W) \\
\midrule
Current & Surface $d{=}7$ & 0 & $1.00\times10^{9}$ & $3.00\times10^{-9}$ & 4.58 & 2.04 \\
Optimistic & Surface $d{=}5$ & 0 & $1.00\times10^{9}$ & $3.00\times10^{-9}$ & 52.93 & 1.12 \\
Optimistic & Bacon--Shor $d{=}5$ & 1 & $1.00\times10^{9}$ & $4.74\times10^{-9}$ & 29.23 & 1.12 \\
Optimistic & Surface $d{=}7$ & 0 & $1.00\times10^{9}$ & $3.00\times10^{-9}$ & 26.29 & 2.04 \\
Optimistic & HGP $5{\times}5$ $d{=}5$ & 0 & $9.63\times10^{8}$ & $3.11\times10^{-9}$ & 22.02 & 1.72 \\
Optimistic & Bacon--Shor $d{=}7$ & 0 & $8.08\times10^{8}$ & $3.71\times10^{-9}$ & 11.59 & 2.04 \\
Optimistic & BB $[[72,12,6]]$ & 0 & $1.00\times10^{9}$ & $3.00\times10^{-9}$ & 10.99 & 0.28 \\
\bottomrule
\end{tabular}
\end{table*}

\begin{table*}[t]
\centering
\footnotesize
\caption{Configurations behind every cell of Figure~\ref{fig:result2-stepC}, Optimistic tier: the plotted configuration and the screens it meets, followed, where they differ from it, by the fastest and the lowest-power configurations meeting all three screens in that $(k,M)$ cell.}
\label{tab:fig5-winners}
\begin{tabular}{@{}lllrrrrl@{}}
\toprule
$(k,M)$ & Role & Configuration & $\bar n_{\mathrm{th}}$ & Fail./Shots & $f_L$ (Hz) & $P_{\mathrm{LQ}}$ (W) & Screens \\
\midrule
$(2,1)$ & plotted & Surface $d{=}5$ & 0 & 2/$1.00\times10^{9}$ & 53.73 & 16.236 & fails power \\
$(2,4)$ & plotted & Surface $d{=}5$ & 0 & 0/$1.00\times10^{9}$ & 53.57 & 5.436 & fails power \\
$(2,4)$ & fastest, lowest power & BB $[[72,12,6]]$ & 0 & 0/$1.00\times10^{9}$ & 11.11 & 1.236 & all \\
$(2,16)$ & plotted & Surface $d{=}5$ & 0 & 0/$1.00\times10^{9}$ & 52.93 & 1.116 & all \\
$(2,16)$ & lowest power & BB $[[72,12,6]]$ & 0 & 0/$1.00\times10^{9}$ & 10.99 & 0.276 & all \\
$(2,64)$ & plotted & Bacon--Shor $d{=}5$ & 0 & 1/$1.00\times10^{9}$ & 27.81 & 0.293 & all \\
$(2,64)$ & lowest power & BB $[[72,12,6]]$ & 0 & 0/$1.00\times10^{9}$ & 10.55 & 0.093 & all \\
$(2,256)$ & plotted & Bacon--Shor $d{=}5$ & 0 & 2/$1.00\times10^{9}$ & 23.29 & 0.100 & all \\
$(2,256)$ & lowest power & BB $[[72,12,6]]$ & 0 & 0/$1.00\times10^{9}$ & 9.09 & 0.050 & all \\
$(4,1)$ & plotted & Surface $d{=}5$ & 0 & 3/$1.00\times10^{9}$ & 39.81 & 16.836 & fails power \\
$(4,4)$ & plotted & Surface $d{=}5$ & 0 & 2/$1.00\times10^{9}$ & 39.67 & 5.636 & fails power \\
$(4,4)$ & fastest, lowest power & BB $[[72,12,6]]$ & 0 & 0/$1.00\times10^{9}$ & 4.67 & 1.236 & all \\
$(4,16)$ & plotted & Bacon--Shor $d{=}5$ & 0 & 1/$1.00\times10^{9}$ & 28.64 & 1.156 & all \\
$(4,16)$ & lowest power & BB $[[72,12,6]]$ & 0 & 1/$1.00\times10^{9}$ & 4.64 & 0.276 & all \\
$(4,64)$ & plotted & Bacon--Shor $d{=}5$ & 0 & 1/$8.16\times10^{8}$ & 27.16 & 0.303 & all \\
$(4,64)$ & lowest power & BB $[[72,12,6]]$ & 0 & 0/$1.00\times10^{9}$ & 4.54 & 0.093 & all \\
$(4,256)$ & plotted & Bacon--Shor $d{=}7$ & 0 & 0/$1.00\times10^{9}$ & 7.93 & 0.163 & all \\
$(4,256)$ & lowest power & HGP $5{\times}5$ $d{=}5$ & 0.1 & 1/$1.00\times10^{9}$ & 1.16 & 0.135 & all \\
$(6,1)$ & plotted & Bacon--Shor $d{=}5$ & 0 & 7/$2.00\times10^{9}$ & 20.19 & 18.036 & fails power \\
$(6,4)$ & plotted & Bacon--Shor $d{=}5$ & 0 & 8/$2.00\times10^{9}$ & 20.12 & 6.036 & fails power \\
$(6,16)$ & plotted, lowest power & Bacon--Shor $d{=}5$ & 0 & 10/$2.00\times10^{9}$ & 19.85 & 1.236 & all \\
$(6,64)$ & plotted, lowest power & Bacon--Shor $d{=}5$ & 0 & 11/$2.00\times10^{9}$ & 18.85 & 0.322 & all \\
$(6,256)$ & plotted & Bacon--Shor $d{=}5$ & 0 & 17/$2.00\times10^{9}$ & 15.67 & 0.107 & fails $p_L$ \\
$(8,1)$ & plotted & Bacon--Shor $d{=}5$ & 0.1 & 11/$2.00\times10^{9}$ & 1.07 & 16.836 & fails power \\
$(8,4)$ & plotted & Bacon--Shor $d{=}5$ & 0.1 & 19/$2.00\times10^{9}$ & 1.07 & 5.636 & fails $p_L$ \\
$(8,16)$ & plotted & Bacon--Shor $d{=}5$ & 0.1 & 32/$2.00\times10^{9}$ & 1.07 & 1.156 & fails $p_L$ \\
$(8,64)$ & plotted & Bacon--Shor $d{=}5$ & 0.1 & 35/$2.00\times10^{9}$ & 1.06 & 0.303 & fails $p_L$ \\
$(8,256)$ & plotted & Bacon--Shor $d{=}5$ & 0.1 & 112/$2.00\times10^{9}$ & 1.05 & 0.102 & fails $p_L$ \\
$(16,1)$ & plotted & Bacon--Shor $d{=}5$ & 0 & 173/$5.00\times10^{7}$ & 13.88 & 19.236 & fails $p_L$ \\
$(16,4)$ & plotted & Bacon--Shor $d{=}5$ & 0 & 179/$5.00\times10^{7}$ & 13.82 & 6.436 & fails $p_L$ \\
$(16,16)$ & plotted & Bacon--Shor $d{=}5$ & 0 & 92/$5.00\times10^{7}$ & 13.61 & 1.316 & fails $p_L$ \\
$(16,64)$ & plotted & Bacon--Shor $d{=}5$ & 0 & 157/$5.00\times10^{7}$ & 12.81 & 0.341 & fails $p_L$ \\
$(16,256)$ & plotted & Bacon--Shor $d{=}5$ & 0 & 1003/$5.00\times10^{7}$ & 10.37 & 0.111 & fails $p_L$ \\
\bottomrule
\end{tabular}
\end{table*}

\bibliographystyle{ACM-Reference-Format}
\balance
\bibliography{paper}

\end{document}